\newlength{\figsize} \setlength{\figsize}{0.325\textwidth}
\newcommand{\RR}{\textsc{ReduceRank}\xspace}
\newcommand{\RRPFD}{\textsc{ReduceRank-PFD}\xspace}
\newcommand{\RRSS}{\textsc{ReduceRank-SS}\xspace}
\newcommand{\R}{\mathbb{R}}
\title{\sffamily Improved Practical Matrix Sketching with Guarantees}
\author{
Amey Desai
\and
Mina Ghashami
\and
Jeff M. Phillips
\thanks{Thanks to support by NSF CCF-1115677, CCF-1350888, IIS-1251019, and ACI-1443046.}
}
\begin{document}

\maketitle

\begin{abstract}
Matrices have become essential data representations for many large-scale problems in data analytics, and hence matrix sketching is a critical task.  Although much research has focused on improving the error/size tradeoff under various sketching paradigms, the many forms of error bounds make these approaches hard to compare in theory and in practice.  This paper attempts to categorize and compare most known methods under row-wise streaming updates with provable guarantees, and then to tweak some of these methods to gain practical improvements while retaining guarantees.  

For instance, we observe that a simple heuristic \iSVD, with no guarantees, tends to outperform all known approaches in terms of size/error trade-off.  We modify the best performing method with guarantees \fd under the size/error trade-off to match the performance of \iSVD and retain its guarantees.  We also demonstrate some adversarial datasets where \iSVD performs quite poorly.  
In comparing techniques in the time/error trade-off, techniques based on hashing or sampling tend to perform better.  In this setting we modify the most studied sampling regime to retain error guarantee but obtain dramatic improvements in the time/error trade-off.

Finally, we provide easy replication of our studies on APT, a new testbed which makes available not only code and datasets, but also a computing platform with fixed environmental settings. 
\end{abstract}

\section{Introduction}
\label{sec:intro}
Matrix sketching has become a central challenge~\cite{frieze2004fast,papadimitriou1998latent,drineas2003pass,achlioptas2001fast,liberty2013simple} in large-scale data analysis as many large data sets including customer recommendations, image databases, social graphs, document feature vectors can be modeled as a matrix, and sketching is either a necessary first step in data reduction or has direct relationships to core techniques including PCA, LDA, and clustering.  

There are several variants of this problem, but in general the goal is to process an $n \times d$ matrix $A$ to somehow represent a matrix $B$ so $\|A - B\|_F$ or (examining the covariance) $\|A^T A - B^T B\|_2$ is small.  

In both cases, the best rank-$k$ approximation $A_k$ can be computed using the singular value decomposition (\svd); however this takes $O(nd \min(n,d))$ time and $O(nd)$ memory. 
This is prohibitive for modern applications which usually desire a small space streaming approach, or even an approach that works in parallel.  
For instance diverse applications receive data in a potentially unbounded and time-varying stream and want to maintain some sketch $B$.  
Examples of these applications include data feeds from sensor networks ~\cite{bonnet2001}, financial tickers~\cite{chen2000niagaracq , zhu2002statstream}, on-line auctions~\cite{arasu2002abstract}, network traffic~\cite{gilbert2001quicksand , sullivan1998system}, and telecom call records~\cite{cortes2000hancock}.

In recent years, extensive work has taken place to improve theoretical bounds in the size of $B$.  
Random projections~\cite{sarlos2006improved,achlioptas2001fast} and hashing~\cite{clarkson2009numerical,WDALS09} approximate $A$ in $B$ as a random linear combination of rows and/or columns of $A$.  
Column sampling methods~\cite{drineas2003pass, drineas2008relative, mahoney2009cur,boutsidis2011near,deshpande2006adaptive, drineas2006fast2, rudelson2007sampling} choose a set of columns (and/or rows) from $A$ to represent $B$; the best bounds require multiple passes over the data.  
We refer readers to recent work~\cite{ghashami2014relative,clarkson2009numerical,Woo14} for extensive discussion of various models and error bounds.  

Recently Liberty~\cite{liberty2013simple} introduced a new technique \fd (abbreviated \FD) which is deterministic, achieves the best bounds on the covariance $\|A^TA - B^T B\|_2$ error, the direct error $\|A - B\|_F^2$~\cite{ghashami2014relative} (using $B$ as a projection), and moreover, it seems to greatly outperform the projection, hashing, and column sampling techniques in practice.  
Hence, this problem
has seen immense progress in the last decade with a wide variety of algorithmic improvements in a variety
of models\cite{sarlos2006improved,achlioptas2001fast,clarkson2009numerical,dasgupta2010sparse,drineas2006fast2,drineas2008relative,papapailiopoulos2014provable,deshpande2006adaptive, drineas2006fast2,ghashami2014relative,clarkson2013low,liberty2013simple}; which we review thoroughly in Section \ref{sec:mat-algo} and assess comprehensively empirically in Section \ref{sec:eval}.

In addition, there is a family of heuristic techniques~\cite{golub2012matrix, hall1998incremental, levey2000sequential, brand2002incremental,ross2008incremental} (which we refer to as \iSVD, described relative to \FD in Section \ref{sec:algos}), which are used in many practical settings, but are not known to have any error guarantees.  
In fact, we observe (see Section \ref{sec:exp}) on many real and synthetic data sets that \iSVD noticeably outperforms \FD, yet there are adversarial examples where it fails dramatically.  
Thus we ask (and answer in the affirmative): can one achieve a matrix sketching algorithm that matches the usual-case performance of \iSVD, and the adversarial-case performance of \FD, and error guarantees of \FD?

\subsection{Notation and Problem Formalization}
We denote an $n \times d$ matrix $A$ as a set of $n$ rows as $[a_1; a_2; \ldots, a_n]$ where each $a_i$ is a row of length $d$.  Alternatively a matrix $V$ can be written as a set of columns $[v_1, v_2, \ldots, v_d]$.  We assume $d \ll n$.  
We will consider streaming algorithms where each element of the stream is a row $a_i$ of $A$.  Some of the algorithms also work in a more general setting (e.g., allowing deletions or distributed streams).  

The squared Frobenius norm of a matrix $A$ is defined $\|A\|_F^2 = \sum_{i=1} \|a_i\|^2$ where $\|a_i\|$ is Euclidean norm of row $a_i$, and it intuitively represents the total size of $A$.  
The spectral norm $\|A\|_2 = \max_{x : \|x\|=1} \|A x\|$, and represents the maximum influence along any unit direction $x$.  It follows that $\|A^T A - B^T B\|_2 = \max_{x : \|x\|=1} | \|A x\|^2 - \|B x\|^2 |$.  

Given a matrix $A$ and a low-rank matrix $X$ let $\pi_X(A) = A X^\dagger X$ be a \emph{projection} operation of $A$ onto the rowspace spanned by $X$; that is if $X$ is rank $r$, then it projects to the $r$-dimensional subspace of points (e.g. rows) in $X$.  
Here $X^\dagger$ indicates taking the Moore-Penrose pseudoinverse of $X$.
The singular value decomposition of $A$, written $\svd(A)$, produces three matrices $[U,S,V]$ so that $A = U S V^T$.  Matrix $U$ is $n \times n$ and orthogonal.  Matrix $V$ is $d \times d$ and orthogonal; its columns $[v_1, v_2, \ldots, v_d]$ are the right singular vectors, describing directions of most covariance in $A^T A$.  $S$ is $n \times d$ and is all $0$s except for the diagonal entries $\{\sigma_1, \sigma_2, \ldots, \sigma_r\}$, the \emph{singular values}, where $r \leq d$ is the rank.  Note that $\sigma_j \geq \sigma_{j+1}$, $\|A\|_2 = \sigma_1$, and $\sigma_j = \|A v_j\|$ describes the norm along direction $v_j$.
\paragraph{Error measures.}
We consider two classes of error measures between input matrix $A$ and its $\ell \times d$ sketch $B$.  
The \emph{covariance error} ensures for any unit vector $x \in \R^d$ such that $\|x\|=1$, that $\|A x\|^2 - \|B x\|^2$ is small.  This can be mapped to the covariance of $A^T A$ since 
$\max_{\|x\|=1} \|A x\|^2 - \|B x\|^2 = \|A^TA - B^T B\|_2$.  To normalize \emph{covariance error} we set
\[
\s{cov-err}(A,B) = \|A^T A - B^T B\|_2 / \|A\|_F^2
\]
which is always greater than $0$ and will typically be less than $0.15$.%
\footnote{The normalization term $\|A\|_F^2$ is invariant to the desired rank $k$ and the unit vector $x$.  Some methods have bounds on $\|Ax\|^2 - \|Bx\|^2$ that are relative to $\|A - A_k\|^2$ or $\|Ax\|^2$; but as these introduce an extra parameter they are harder to measure empirically.}  

The \emph{projection error} describes the error in the subspace captured by $B$ without focusing on its scale.  Here we compare the best rank $k$ subspace described by $B$ (as $B_k$) compared against the same for $A$ (as $A_k$).  We measure the error by comparing the tail (what remains after projecting $A$ onto this subspace) as $\|A - \pi_{B_k}(A)\|_F^2$.  Specifically we normalize \emph{projection error} so that it is comparable across datasets as 
\[
\s{proj-err}(A,B) = \|A - \pi_{B_k}(A)\|_F^2 / \|A - A_k\|_F^2.
\]  
Note the denominator is equivalent to $\|A - \pi_{A_k}(A)\|^2$ so this ensures the projection error is at least $1$; typically it will be less than $1.5$.  We set $k=10$ as a default in all experiments.%
\footnote{There are variations in bounds on this sort of error.  Some measure spectral (e.g. $\|\cdot\|_2$) norm.  Others provide a weaker error bound on $\|A - [\pi_{B}(A)]_k\|_F^2$, where the ``best rank $k$ approximation,'' denoted by $[\cdot]_k$, is taken after the projection.  This is less useful since then for a very large rank $B$ (might be rank $500$) it is not clear which subspace best approximates $A$ until this projection is performed.  
Additionally, some approaches create a set of (usually three) matrices (e.g. $CUR$) with product equal to $\pi_{B_k}(A)$, instead of just $B$.  This is a stronger result, but it does not hold for many approaches, so we omit consideration.}

\subsection{Frequency Approximation, Intuition, and Results}  
There is a strong link between matrix sketching and the frequent items (similar to heavy-hitters) problem commonly studied in the streaming literature.  That is, given a stream $S = \langle s_1, s_2, \ldots, s_n \rangle$ of items $s_i \in [u] = \{1,2,\ldots,u\}$, represent $f_j = |\{ s_i \in S \mid s_i = j\}|$; the frequency of each item $j \in [u]$.  The goal is to construct an estimate $\hat f_j$ (for \emph{all} $j \in [u]$) so that $| f_j - \hat f_j| \leq \eps n$.
There are many distinct solutions to this problem, and many map to families of matrix sketching paradigms.  

The Misra-Gries~\cite{mg-fre-82} (MG) frequent items sketch inspired the \FD matrix sketching approach, and the extensions discussed herein.  The MG sketch uses $O(1/\eps)$ space to keep $\ell -1 = 1/\eps$ counters, each labeled by some $j \in [u]$: it increments a counter if the new item matches the associated label or for an empty counter, and it decrements all counters if there is no empty counter and none match the stream element.  $\hat f_j$ is the associated counter value for $j$, or $0$ if there is no associated counter.  
There are other variants of this algorithm~\cite{karp2003simple,demaine2002frequency,metwally2006integrated} that have slightly different properties~\cite{cormode2008finding,agarwal2012mergeable} that we will describe in Section \ref{sec:new-iter} to inspire variants of \FD.

A folklore approach towards the frequent items problem is to sample $\ell = O((1/\eps^2) \log 1/\delta)$ items from the stream.  Then $\hat f_j$ is the count of the sample scaled by $n/\ell$. This achieves our guarantees with probability at least $1-\delta$~\cite{VC71,LLS01}, and will correspond with sampling algorithms for sketching which sample rows of the matrix.  

Another frequent items sketch is called the \emph{count-sketch}~\cite{CCF02}.  It maintains and averages the results of $O(\log 1/\delta)$ of the following sketch.  Each $j \in [u]$ is randomly hashed $h : [u] \to [\ell]$ to a cell of a table of size $\ell = O(1/\eps^2)$; it is added or subtracted from the row based on another random hash $s : [u] \to \{-1,+1\}$.  Estimating $\hat f_j$ is the count at cell $h(j)$ times $s(j)$.  This again achieves the desired bounds with probability at least $1-\delta$.  
This work inspires the \emph{hashing} approaches to matrix sketching.  
If each element $j$ had its effect spread out over all $O(\log 1/\delta)$ instances of the base sketch, it would then map to the \emph{random projection} approaches to sketching, although this comparison is a bit more tenuous.  

Finally, there is another popular frequent items sketch called the count-min sketch~\cite{CM05}.  It is similar to the count-sketch but without the sign hash, and as far as we know has not been directly adapted to a matrix sketching algorithm.

\subsection{Contributions}
We survey and categorize the main approaches to matrix sketching in a row-wise update stream.  We consider three categories:  sampling, projection/hashing, and iterative and show how all approaches fit simply into one of these three categories.  We also provide an extensive set of experiments to compare these algorithms along size, error (projection and covariance), and runtime on real and synthetic data.  

To make this study easily and readily \emph{reproducible}, we implement all experiments on a new extension of Emulab~\cite{emulab} called Adaptable Profile-Driven Testbed or APT~\cite{apt}.  It allows one to check out a virtual machine with the same specs as we run our experiments, load our precise environments and code and data sets, and directly reproduce all experiments. 

We also consider new variants of these approaches which maintain error guarantees but significantly improving performance.  We introduce several new variants of \FD, one of which $\alpha$-\FD matches or exceeds the performance of a popular heuristic \iSVD.  Before this new variant, \iSVD is a top performer in space/error trade-off, but has no guarantees, and as we demonstrate on some adversarial data sets, can fail spectacularly.   
We also show how to efficiently implement and analyze \emph{without-replacement} row sampling for matrix sketching, and how this can empirically improve upon more traditional (and easier to analyze) with-replacement row sampling.

\section{Matrix Sketching Algorithms}
\label{sec:mat-algo}\label{sec:algos}
In this section we review the main algorithms for sketching matrices.  We divide them into 3 main categories:
(1) sampling algorithms, these select a subset of rows from $A$ to use as the sketch $B$;
(2) projection algorithms, these project the $n$ rows of $A$ onto $\ell$ rows of $B$, sometimes using hashing;
(3) incremental algorithms, these maintain $B$ as a low-rank version of $A$ updated as more rows are added.  

There exist other forms of matrix approximation, for instance, using sparsification techniques~\cite{arora2006fast,achlioptas2001fast,drineas2011note}, or allowing more general element-wise updates at the expense of larger sketch sizes~\cite{clarkson2009numerical,sarlos2006improved,clarkson2013low}.  We are interested in preserving the right singular vectors and other statistical properties on the rows of $A$.  

\subsection{Sampling Matrix Sketching}
\begin{table}[h!!!]
\vspace{-5mm}
\begin{center}
\resizebox{\textwidth}{!}{
\begin{tabular}{|r||c|c||c|c||c|}
\hline
\textbf{} & $\ell$ & \s{cov-err} & $\ell$ & \s{proj-err} & \s{runtime}
\\ \hline 
\s{Norm Sampling} & $d/\eps^2$ & $\eps$  ($\dagger$) \cite{drineas2006fast2}  & $k/\eps^2$ & $1+\eps \frac{\|A\|_F^2}{\|A - A_k\|_F^2}$~\cite{drineas2006fast2} & $\s{nnz}(A) \cdot \ell$
\\
\s{Leverage Sampling} & $d/\eps^2$& $\eps$ ($\dagger$) & $(k \log k)/\eps^2$ & $1+\eps$\cite{mahoney2009cur} & $\svd(A) + \s{nnz}(A) \cdot \ell$
\\ 
\s{Deterministic Leverage} & $\ell$  & - & $(k/\eta \eps)^{1/\eta}$($\star$)  & $1+\eps$ \cite{papapailiopoulos2014provable} & $\svd(A) + \s{nnz}(A) \cdot \ell \log \ell$
\\ \hline 
\end{tabular} 
}
\end{center}
\vspace{-2mm}
\caption{\label{tbl:compare-samp} Theoretical Bounds for Sampling Algorithms.  
The \s{proj-err} bounds are based on a slightly weaker $\|A - \pi_B(A)\|_F^2$ numerator instead of $\|A - \pi_{B_k}(A)\|_F^2$ one where we first enforce $B_k$ is rank $k$.  
($\dagger$) See Appendix \ref{app:cov-err}; the \s{Leverage Sampling} bound assumes a constant lower bound on leverage scores.  
($\star$) Maximum of this and $\{k, (k/\eps)^{1/(1+\eta)}\}$ where leverage scores follow power-law with decay exponent $1+\eta$. }
\end{table}

Sampling algorithms assign a probability $p_i$ for each row $a_i$ and then selecting $\ell$ rows from $A$ into $B$ using this probability.  In $B$, each row has its squared norm rescaled to $w_i$ as a function of $p_i$ and $\|a_i\|$.  
One can achieve additive error bound using importance sampling with $p_i = \|a_i\|^2/\|A\|_F^2$ and $w_i = \|a_i\|^2/(\ell p_i) = \|A\|_F^2/\ell$, as analyzed by Drineas \etal~\cite{drineas2006fast1} and\cite{frieze2004fast}.
These algorithms typically advocate sampling $\ell$ items independently (with replacement) using $\ell$ distinct reservoir samplers, taking $O(\ell)$ time per element.  Another version~\cite{drineas2008relative} samples each row independently, and only retains $\ell$ rows in expectation.  
We discuss two improvements to this process in Section \ref{sec:new-SwoR}.  

Much of the related literature describes selecting columns instead of rows (called the \emph{column subset selection problem})~\cite{BMD09}.  This is just a transpose of the data and has no real difference from what is described here.  There are also techniques~\cite{drineas2008relative} that select both columns and rows, but are orthogonal to our goals.  

This family of techniques has the advantage that the resulting sketch is \emph{interpretable} in that each row of $B$ corresponds to data point in $A$, not just a linear combination of them.  

\paragraph{Leverage Sampling.}
An insightful adaptation changes the probability $p_i$ using \emph{leverage scores}~\cite{DM10} or \emph{simplex volume}~\cite{deshpande2006matrix,DR10}.  These techniques take into account more of the structure of the problem than simply the rows norm, and can achieve stronger relative error bounds.  But they also require an extra parameter $k$ as part of the algorithm, and for the most part require much more work to generate these modified $p_i$ scores.  
We use \s{Leverage Sampling}~\cite{drineas2008relative} as a representative; it samples rows according to leverage scores (described below).  Simplex volume calculations~\cite{deshpande2006matrix,DR10} were too involved to be practical.  There are also recent techniques to improve on the theoretical runtime for leverage sampling~\cite{DMMW12} by approximating the desired values $p_i$, but as the exact approaches do not demonstrate consistent tangible error improvements, we do not pursue this complicated theoretical runtime improvement.  

To calculate leverage scores, we first calculate the \svd\ of $A$ (the task we hoped to avoid).  Let $U_k$ be the matrix of the top $k$ left singular vectors, and let $U_k(i)$ represent the $i$th row of that matrix.  Then the \emph{leverage score} for row $i$ is $s_i = \| U_k(i)\|^2$, the fraction of squared norm of $a_i$ along subspace $U_k$.  Then set $p_i$ proportional to $s_i$ (e.g. $p_i = s_i / k$. Note that $\sum_i s_i = k$).

\paragraph{Deterministic Leverage Scores.}Another option is to deterministically select rows with the highest $s_i$ values instead of at random.  It can be implemented with a simple priority queue of size $\ell$.  
This has been applied to using the leverage scores by Papailiopoulos \etal \cite{papapailiopoulos2014provable}, which again first requires calculating the \svd\ of $A$.  We refer to this algorithm as \s{Deterministic Leverage} Sampling.

\subsection{Projection Matrix Sketching}
\begin{table}[h!!!]
\vspace{-.1in}
\begin{center}
\resizebox{\textwidth}{!}{
\begin{tabular}{|r||c|c||c|c||c|}
\hline
\textbf{} & $\ell$ & \s{cov-err} & $\ell$ & \s{proj-err} & \s{runtime}
\\ \hline 
\s{Random Projection} & $d/\eps^2$\cite{sarlos2006improved}  & $\eps / \rho(A)$ & $d/\eps^2$ \cite{sarlos2006improved} & $1+\eps$ & $\s{nnz}(A) \cdot \ell$ 
\\
\s{Fast JLT} & $d/\eps^2$ \cite{sarlos2006improved} &  $\eps / \rho(A)$ & $d/\eps^2$ \cite{sarlos2006improved} & $1+\eps$& $nd\log d + (d/\eps^2)\log n$~\cite{ailon2006approximate} 
\\ 
\s{Hashing} & $d^2/\eps^2$ \cite{clarkson2013low,NN13}  &  $\eps / \rho(A)$ & $d^2/\eps^2$ \cite{clarkson2013low,NN13}  & $1+\eps$  & $\s{nnz}(A) +n\poly{(d/\eps)}$
\\ 
\s{OSNAP} &  $d^{1+o(s/\eps)}/\eps^2$ \cite{NN13} &  $\eps / \rho(A)$ & $d^{1+o(s/\eps)}/\eps^2$   \cite{NN13} & $1+\eps$  & $\s{nnz}(A) \cdot s + n\poly{(d/\eps)}$
\\ \hline 
\end{tabular} 
}
\end{center}
\vspace{-2mm}
\caption{\label{tbl:compare-proj} Theoretical Bounds for Projection Algorithms (via an $\ell_2$ subspace embedding; see Appendix \ref{app:rp-convert}).  
Where $\ell$ is the number of rows maintained, and $\rho(A) = \frac{\|A\|_F^2}{\|A\|_2^2} \geq 1$ is the \emph{numeric rank} of $A$.  
}
\end{table}

These methods linearly project the $n$ rows of $A$ to $\ell$ rows of $B$.  A survey by Woodruff~\cite{Woo14} (especially Section 2.1) gives an excellent account of this area.  
In the simplest version, each row $a_i \in A$ would map to a row $b_j \in B$ with element $s_{j,i}$ ($j$th row and $i$th column) of a projection matrix $S$, and each $s_{j,i}$ is a Gaussian random variable with $0$ mean and $\sqrt{n/\ell}$ standard deviation.  That is, $B = S A$, where $S$ is $\ell \times n$.  
This follows from the celebrated Johnson-Lindenstrauss lemma~\cite{johnson1984extensions} as first shown by Sarlos~\cite{sarlos2006improved} and strengthened by Clarkson and Woodruff~\cite{clarkson2009numerical}.  
Gaussian random variables $s_{j,i}$ can be replaced with (appropriately scaled) $\{-1, 0, +1\}$ or $\{-1,+1\}$ random variables~\cite{achlioptas2003database}.  
We call the version with scaled $\{-1,+1\}$ random variables as \s{Random Projection}.  

\paragraph{Fast JLT.}
Using a sparse projection matrix $X$ would improve the runtime, but these lose guarantees if the input is also sparse (if the non-zero elements do not align).  This is circumvented by rotating the space with a Hadamard matrix~\cite{ailon2006approximate}, which can be applied more efficiently using FFT tricks, despite being dense.  More precisely, we use three matrices: 
$P$ is $\ell \times n$ and has entries with iid $0$ with probability $1-q$ and a Gaussian random variable with variance $\ell/q$ with probability $q = \min\{1, \Theta((\log^2 n)/d)\}$.  
$H$ is $n \times n$ and a random Hadamard (this requires $n$ to be padded to a power of $2$).  
$D$ is diagonal with random $\{-1,+1\}$ in each diagonal element.  
And then the projection matrix is $S = PHD$, although algorithmically the matrices are applied implicitly.  We refer to this algorithm as \s{Fast JLT}.  
Ultimately, the runtime is brought from $O(n d \ell)$ to $O(nd \log d + (d/\eps^2) \log n)$.  
The second term in the runtime can be improved with more complicated constructions~\cite{dasgupta2010sparse,AL11} which we do not pursue here; we point the reader here~\cite{VW11} for a discussion of some of these extensions.

\paragraph{Sparse Random Projections.}
Clarkson and Woodruff~\cite{clarkson2013low} analyzed a very sparse projection matrix $S$, conceived of earlier~\cite{dasgupta2010sparse,WDALS09}; it has exactly $1$ non-zero element per column.  To generate $S$, for each column choose a random value between $1$ and $\ell$ to be the non-zero, and then choose a $-1$ or $+1$ for that location.  Thus each rows can be processed in time proportional to its number of non-zeros; it is randomly added or subtracted from $1$ row of $B$, as a count sketch~\cite{CCF02} on rows instead of counts.  We refer this as \s{Hashing}.  

A slight modification by Nelson and Nguyen~\cite{NN13}, called \s{OSNAP}, stacks $s$ instances of the projection matrix $S$ on top of each other.  If \textsc{Hashing} used $\ell'$ rows, then \s{OSNAP} uses $\ell = s \cdot \ell'$ rows (we use $s=4$).

\subsection{Iterative Matrix Sketching}
\label{sec:iterative}
\begin{table}[h!!!]
\begin{center}
\begin{tabular}{|r||c|c||c|c|c|}
\hline
\textbf{} & $\ell$ & \s{cov-err} & $\ell$ & \s{proj-err} & \s{runtime}
\\ \hline 
\s{Frequent Directions} & $k + 1/\eps$ & $\eps \frac{\|A - A_k\|_F^2}{\|A\|_F^2}$ \cite{FD-journal} & $k/\eps$ & $1+\eps$ \cite{ghashami2014relative} & $n d \ell$ 
\\
\s{Iterative SVD} & $\ell$ & - & $\ell$ & - & $nd\ell^2$
\\ \hline 
\end{tabular} 
\end{center}
\vspace{-2mm}
\caption{\label{tbl:compare-iter} Theoretical Bounds for Iterative Algorithms.}
\end{table}

The main structure of these algorithms is presented in Algorithm \ref{alg:generic}, they maintain a sketch $B_{[i]}$ of $A_{[i]}$, the first $i$ rows of $A$.  The sketch of $B_{[i-1]}$ always uses at most $\ell-1$ rows.  On seeing the $i$th row of $A$, it is appended to $[B_{[i-1]}; a_{i}] \rightarrow B_{[i]}$, and if needed the sketch is reduced to use at most $\ell-1$ rows again using some $\RR$ procedure.
Notationally we use $\sigma_j$ as the $j$th singular value in $S$, and $\sigma_j'$ as the $j$th singular value in $S'$.  

\begin{algorithm}
\caption{\label{alg:generic} (Generic) \FD Algorithm}
\begin{algorithmic}
\STATE \textbf{Input:} $\ell, \alpha \in (0,1], A \in \bb{R}^{n \times d}$
\STATE $B_{[0]} \leftarrow$ all zeros matrix $\in \bb{R}^{\ell \times d}$ 
\FOR {$i \in [n]$}
  \STATE  Insert $a_i$ into a zero valued rows of $B_{[i-1]}$; result is $B_{[i]}$
  \IF {($B_{[i]}$ has no zero valued rows)}
  	\STATE  $[U, S, V] \leftarrow \svd(B_{[i]})$
  	\STATE $C_{[i]} = S V^T$ \hfill \# Only needed for proof notation
  	\STATE $S' \leftarrow \text{\RR}(S)$
  	\STATE $B_{[i]} \leftarrow S' V^T$
  \ENDIF 
\ENDFOR
\STATE \textbf{return} $B = B_{[n]}$ 
\end{algorithmic}
\end{algorithm}

\paragraph{Iterative SVD.}
The simplest variant of this procedure is a heuristic rediscovered several times~\cite{golub2012matrix, hall1998incremental, levey2000sequential, brand2002incremental,ross2008incremental}, with a few minor modifications, and we  refer to as \emph{iterative SVD} or \s{iSVD}.  
Here \s{ReduceRank$(S,V)$} simply keeps ${\sigma'}_j = \sigma_j$ for $j < \ell$ and sets $\sigma'_{\ell} = 0$.
This has no worst case guarantees (despite several claims).

\paragraph{Frequent Directions.}
Recently Liberty~\cite{liberty2013simple} proposed an algorithm called \s{Frequent Directions} (or \s{FD}), further analyzed by Ghashami and Phillips~\cite{ghashami2014relative}, and then together jointly with Woodruff~\cite{FD-journal}. 
The \RR step sets each ${\sigma'}_j = \sqrt{\sigma_j^2 - \delta_i}$ where $\delta_i = \sigma_{\ell}^2$.  

Liberty also presented a faster variant \s{FastFD}, that instead sets $\delta_i = \sigma_{\ell/2}^2$ (the $(\ell/2)$th squared singular value of $B_{[i]}$) and updates new singular values to ${\sigma'}_j = \max \{0, \sqrt{\sigma_j^2 - \delta_i}\}$, hence ensuring at most half of the rows are all zeros after each such step. This reduces the runtime from $O(nd\ell^2)$ to $O(nd\ell)$ at expense of a sketch sometimes only using half of its rows.

\section{New Matrix Sketching Algorithms}
Here we describe our new variants on \FD that perform better in practice and are backed with error guarantees. In addition, we explain a couple of new matrix sketching techniques that makes subtle but tangible improvements to the other state-of-the-art algorithms mentioned above.

\subsection{New Variants On \fd}
\label{sec:new-iter}

\begin{table}[h!!!]
\begin{center}
\begin{tabular}{|r||c|c||c|c|c|}
\hline
\textbf{} & $\ell$ & \s{cov-err} & $\ell$ & \s{proj-err} & \s{runtime}
\\ \hline 
\s{Fast $\alpha$-FD} & $(k + 1/\eps)/\alpha$ & $\eps \frac{\|A - A_k\|_F^2}{\|A\|_F^2}$ & $k/(\eps\alpha)$ & $1+\eps$ & $n d \ell/\alpha$ 
\\ 
\s{SpaceSaving Directions} & $k + 1/\eps$ & $\eps \frac{\|A - A_k\|_F^2}{\|A\|_F^2}$ & $k/\eps$ & $1+\eps$ & $n d \ell$ 
\\ 
\s{Compensative FD} & $k + 1/\eps$ & $\eps \frac{\|A - A_k\|_F^2}{\|A\|_F^2}$ & $k/\eps$ & $1+\eps$ & $n d \ell$ 
\\ \hline 
\end{tabular} 
\end{center}
\vspace{-2mm}
\caption{\label{tbl:compare-new-iter} Theoretical Bounds for New Iterative Algorithms.}
\end{table}

Since all our proposed algorithms on \fd share the same structure, to avoid repeating the proof steps, we abstract out three properties that these algorithms follow and prove that \emph{any} algorithm with these properties satisfy the desired error bounds.  This slightly generalizes (allowing for $\alpha \neq 1$) a  recent framework~\cite{FD-journal}.  We prove these generalizations in Appendix \ref{app:aFD}.  

Consider any algorithm that takes an input matrix $A \in \mathbb{R}^{n \times d}$ and outputs a matrix $B \in \mathbb{R}^{\ell \times d}$ which follows three proprties below (for some parameter $\alpha \in (0,1]$ and some value $\Delta > 0$):
\begin{itemize} \denselist
\item Property 1: For any unit vector $x$ we have $\|Ax\|^2 - \|Bx\|^2 \geq 0$.
\item Property 2: For any unit vector $x$ we have $\|Ax\|^2 - \|Bx\|^2 \leq \Delta$.
\item Property 3: $\|A\|_F^2 - \|B\|_F^2 \geq \alpha \Delta \ell$.  
\end{itemize}

\begin{lemma}
Any $B$ satisfying the above three properties satisfies 
\begin{align*}
0 \leq \|A^T A - B^T B\|_2 &\leq \frac{1}{\alpha \ell-k} \|A -A_k\|_F^2, 
\\ \text{and }\;\;\;\;\;\;\; 
\|A - \pi_{B_k}(A)\|_F^2 &\leq \frac{\alpha \ell}{\alpha \ell -k} \|A - A_k\|_F^2, 
\end{align*}
where $\pi_{B_k}(\cdot)$ represents the projection operator onto $B_k$, the top $k$ singular vectors of $B$.  
\end{lemma}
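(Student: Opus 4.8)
The plan is to reduce both inequalities to a single estimate: a bound of the form $\Delta \le \frac{1}{\alpha \ell - k}\|A - A_k\|_F^2$, from which both statements fall out quickly. The left inequality $0 \le \|A^TA - B^TB\|_2$ is immediate, and the covariance upper bound follows as soon as $\Delta$ is controlled, since Properties 1 and 2 together give $\|A^TA - B^TB\|_2 = \max_{\|x\|=1} \bigl| \|Ax\|^2 - \|Bx\|^2 \bigr| \le \Delta$ (Property 1 makes the quantity inside the absolute value nonnegative, Property 2 caps it at $\Delta$).

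First I would bound $\Delta$ itself. Let $v_1, \ldots, v_d$ be the right singular vectors of $A$, an orthonormal basis of $\mathbb{R}^d$, so that $\|B\|_F^2 = \sum_{j=1}^d \|Bv_j\|^2$. Dropping the nonnegative tail terms and applying Property 2 to each of the top $k$ directions gives $\|B\|_F^2 \ge \sum_{j=1}^k \|Bv_j\|^2 \ge \sum_{j=1}^k (\|Av_j\|^2 - \Delta) = \|A_k\|_F^2 - k\Delta$, hence $\|A\|_F^2 - \|B\|_F^2 \le \|A - A_k\|_F^2 + k\Delta$. Combining this with Property 3, namely $\|A\|_F^2 - \|B\|_F^2 \ge \alpha \Delta \ell$, yields $\alpha \Delta \ell \le \|A - A_k\|_F^2 + k\Delta$, i.e. $(\alpha \ell - k)\,\Delta \le \|A - A_k\|_F^2$, which is the claimed estimate on $\Delta$ (assuming $\alpha \ell > k$, as is needed throughout for the denominators to be positive).

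For the projection bound, let $w_1, \ldots, w_k$ be the top $k$ right singular vectors of $B$ and $P = \sum_{j=1}^k w_j w_j^T$ the projection onto $B_k$. By the Pythagorean identity, $\|A - \pi_{B_k}(A)\|_F^2 = \|A\|_F^2 - \|AP\|_F^2$, and $\|AP\|_F^2 = \sum_{j=1}^k \|Aw_j\|^2 \ge \sum_{j=1}^k \|Bw_j\|^2 = \|B_k\|_F^2$, where the inequality is exactly Property 1. I would then lower-bound $\|B_k\|_F^2$ by the energy $B$ captures along the optimal directions for $A$: since $B_k$ is the best rank-$k$ subspace for $B$, $\|B_k\|_F^2 \ge \sum_{j=1}^k \|Bv_j\|^2 \ge \|A_k\|_F^2 - k\Delta$, reusing the termwise Property 2 bound from above. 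Thus $\|A - \pi_{B_k}(A)\|_F^2 \le \|A\|_F^2 - \|A_k\|_F^2 + k\Delta = \|A - A_k\|_F^2 + k\Delta$, and substituting the estimate on $\Delta$ gives $\bigl(1 + \frac{k}{\alpha\ell - k}\bigr)\|A - A_k\|_F^2 = \frac{\alpha\ell}{\alpha\ell - k}\|A - A_k\|_F^2$, as desired.

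There is no single deep obstacle here; the crux is the estimate on $\Delta$, and the main thing to get right is pairing each Frobenius-norm expansion with the correct orthonormal basis — the right singular vectors of $A$ when controlling $\Delta$ and $\|B\|_F^2$, and the right singular vectors of $B$ for the projection, linked through the optimality of $B_k$ as the top-$k$ subspace of $B$. I would also carry the standing assumption $\alpha \ell > k$ explicitly, without which both bounds are vacuous, and take care that it is Property 1 (not Property 2) that justifies $\|Aw_j\|^2 \ge \|Bw_j\|^2$ in the projection step, while Property 2 supplies the two termwise lower bounds on $\|Bv_j\|^2$.
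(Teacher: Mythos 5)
Your proof is correct and matches the paper's argument essentially step for step: both derive $\Delta \le \|A-A_k\|_F^2/(\alpha\ell - k)$ by combining Property 3 with the termwise Property 2 bound along the top $k$ right singular vectors of $A$, and both prove the projection bound via the Pythagorean identity, Property 1 along the top singular vectors of $B$, and the optimality of $B_k$ to switch back to the singular vectors of $A$. No gaps; the only difference is cosmetic rearrangement of the inequality chain.
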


Thus setting $\ell = k+1/\eps$ achieves $\|A^T A - B^T B\|_2 \leq \eps \|A- A_k\|_F^2$, and setting $\ell = k + k/\eps$ achieves $\|A - \pi_{B_k}(A)\|_F^2 \leq (1+\eps) \|A - A_k\|_F^2$.  
\FD maintains an $\ell \times d$ matrix $B$ (i.e. using $O(\ell d)$ space), and it is shown~\cite{ghashami2014relative} that there exists a value $\Delta$ that \FD satisfies three above-mentioned properties with $\alpha=1$.

\paragraph{Parameterized \FD.}

Parameterized \FD uses the following subroutine (Algorithm \ref{alg:RRPFD}) to reduce the rank of the sketch; it zeros out row $\ell$. This method has an extra parameter $\alpha \in [0,1]$ that  describes the fraction of singular values which will get affected in the \RR subroutine.  Note \iSVD has $\alpha =0$ and \FD has $\alpha=1$.  The intuition is that the smaller singular values are more likely associated with noise terms and the larger ones with signals, so we should avoid altering the signal terms in the \RR step.  

  \begin{algorithm}
    \caption{\label{alg:RRPFD} $\RRPFD(S,\alpha)$}
    \begin{algorithmic}
   \STATE $\delta_i \leftarrow \sigma_\ell^2$
	\STATE \textbf{return}  $\diag(\sigma_1, \ldots, \sigma_{\ell(1-\alpha)}, \sqrt{\sigma_{\ell(1-\alpha)+1}^2 - \delta_i}, \ldots, \sqrt{\sigma_{\ell}^2 - \delta_i})$  
   \end{algorithmic}
  \end{algorithm}

Here we show error bounds asymptotically matching \FD for $\alpha$-\FD (for constant $\alpha >0$), by showing the three Properties hold.  
We use $\Delta = \sum_{i=1}^n \delta_i$.

\begin{lemma}\label{Lemma1:alpha_fd}
For any unit vector $x$ and any $\alpha \geq 0$: $0 \leq \|C_{[i]} x\|^2 - \|B_{[i]} x\|^2 \leq \delta_i$.
\end{lemma}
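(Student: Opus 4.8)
The plan is to reduce the whole statement to a one-dimensional comparison of diagonal weights by working in the right singular basis. First I would note that both matrices in play are built from the \emph{same} orthogonal $V$ returned by $\svd(B_{[i]})$: we have $C_{[i]} = S V^T$ with $S = \diag(\sigma_1,\dots,\sigma_\ell)$, while after the reduction $B_{[i]} = S' V^T$ with $S' = \diag(\sigma_1',\dots,\sigma_\ell')$; the auxiliary $C_{[i]}$ is introduced precisely so that the left factor $U$ (which only affects norms trivially, since it is orthogonal) is stripped away and the two matrices differ only on the diagonal. For a fixed unit vector $x$, I would set $y = V^T x$; because $V$ is orthogonal, $\|y\| = 1$, i.e.\ $\sum_j y_j^2 = 1$. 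Expanding both norms coordinatewise then gives $\|C_{[i]} x\|^2 = \sum_{j=1}^{\ell} \sigma_j^2 y_j^2$ and $\|B_{[i]} x\|^2 = \sum_{j=1}^{\ell} (\sigma_j')^2 y_j^2$, so the quantity to bound is the single weighted sum
\[
\|C_{[i]} x\|^2 - \|B_{[i]} x\|^2 = \sum_{j=1}^{\ell} \bigl(\sigma_j^2 - (\sigma_j')^2\bigr)\, y_j^2 .
\]

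Next I would substitute the definition of $\RRPFD$. For the indices $j \leq \ell(1-\alpha)$ the singular value is left untouched, so $\sigma_j^2 - (\sigma_j')^2 = 0$ and those terms drop out. For $j > \ell(1-\alpha)$ we have $(\sigma_j')^2 = \sigma_j^2 - \delta_i$ with $\delta_i = \sigma_\ell^2$, so each such coefficient equals exactly $\delta_i$. Hence the difference collapses to
\[
\|C_{[i]} x\|^2 - \|B_{[i]} x\|^2 = \delta_i \sum_{j=\ell(1-\alpha)+1}^{\ell} y_j^2 .
\]
The two desired inequalities now fall out immediately: the lower bound $0 \leq \delta_i \sum_j y_j^2$ holds because $\delta_i = \sigma_\ell^2 \geq 0$ and every $y_j^2 \geq 0$; the upper bound holds because the partial sum $\sum_{j=\ell(1-\alpha)+1}^{\ell} y_j^2$ is at most the full sum $\sum_{j=1}^{d} y_j^2 = \|y\|^2 = 1$, giving $\|C_{[i]} x\|^2 - \|B_{[i]} x\|^2 \leq \delta_i$.

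Honestly there is no substantial obstacle here; the argument is essentially bookkeeping once the diagonalization is set up, and the only genuine conceptual point is recognizing that $C_{[i]}$ and the updated $B_{[i]}$ share the right singular basis $V$, which is what licenses the term-by-term comparison. The few things I would be careful to check are minor: that $\sqrt{\sigma_j^2 - \delta_i}$ is real for each affected index, which follows because the singular values are sorted so $\sigma_j \geq \sigma_\ell$ and hence $\sigma_j^2 \geq \sigma_\ell^2 = \delta_i$ (with equality, giving $\sigma_\ell' = 0$, at $j=\ell$); and that $\ell(1-\alpha)$ is interpreted as an integer index (a floor), so that the split of the coordinate range into ``untouched'' and ``shrunk'' blocks is well defined. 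Since the bound is uniform over all unit $x$ and $\delta_i$ does not depend on $x$, this also sets up the telescoping needed later for Property~2 and Property~3 via $\Delta = \sum_{i=1}^n \delta_i$.
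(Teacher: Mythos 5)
Your proof is correct and is essentially identical to the paper's: both expand $\|C_{[i]}x\|^2 - \|B_{[i]}x\|^2$ in the shared right singular basis $V$, observe that the coefficient $\sigma_j^2 - (\sigma_j')^2$ vanishes for $j \leq (1-\alpha)\ell$ and equals $\delta_i$ otherwise, and bound the resulting partial sum of $\langle v_j, x\rangle^2$ by $\|x\|^2 = 1$. Your added remarks on the reality of $\sqrt{\sigma_j^2 - \delta_i}$ and the integrality of the index $(1-\alpha)\ell$ are sensible hygiene but do not change the argument.
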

\begin{proof} The right hand side is shown by just expanding $\|C_{[i]} x\|^2 - \|B_{[i]} x\|^2$.  
\begin{align*}
\|C_{[i]} x\|^2 - \|B_{[i]} x\|^2 &= \sum_{j=1}^\ell \sigma_j^2 \langle v_j, x\rangle^2 - \sum_{j=1}^\ell {\sigma'}_j^2 \langle v_j, x\rangle^2 
 =\sum_{j=1}^\ell (\sigma_j^2 - {\sigma'}_{j}^2) \langle v_j, x\rangle^2 \\
& = \delta_i \sum_{j=(1-\alpha)\ell + 1}^\ell \langle v_j, x\rangle^2 \leq \delta_i \|x\|^2 = \delta_i
\end{align*}
To see the left side of the inequality $\delta_i \sum_{j=(1-\alpha)\ell + 1}^\ell \langle v_j, x\rangle^2 \geq 0$.
\end{proof}

Then summing over all steps of the algorithm (using $\|a_i x\|^2 = \|C_{[i]} x\|^2 - \|B_{[i-1]} x\|^2$) it follows (see Lemma 2.3 in \cite{ghashami2014relative}) that
\[
0 \leq \|A x\|^2 - \|B x\|^2 \leq \sum_{i=1}^n \delta_i = \Delta,
\]
proving Property 1 and Property 2 about $\alpha$-\FD for any $\alpha \in [0,1]$.

\begin{lemma}\label{Lemma3:alpha_fd}
For any $\alpha \in (0,1]$, $\|A\|_F^2 - \|B\|_F^2 = \alpha \Delta \ell$, proving Property 3.
\end{lemma}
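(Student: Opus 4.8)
The plan is to establish the identity one reduction step at a time and then telescope across the stream. First I would compute the exact drop in squared Frobenius norm incurred by a single invocation of $\RRPFD$. Since the Frobenius norm is unitarily invariant, $\|C_{[i]}\|_F^2 = \sum_{j=1}^\ell \sigma_j^2$ and $\|B_{[i]}\|_F^2 = \sum_{j=1}^\ell {\sigma'}_j^2$, where by construction ${\sigma'}_j = \sigma_j$ for $j \leq (1-\alpha)\ell$ and ${\sigma'}_j = \sqrt{\sigma_j^2 - \delta_i}$ for the smallest indices $(1-\alpha)\ell < j \leq \ell$ (these square roots are all real because $\delta_i = \sigma_\ell^2 \leq \sigma_j^2$ for every $j \leq \ell$). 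Hence only the $\alpha\ell$ affected indices contribute, each by exactly $\sigma_j^2 - {\sigma'}_j^2 = \delta_i$, giving
\[
\|C_{[i]}\|_F^2 - \|B_{[i]}\|_F^2 = \sum_{j=(1-\alpha)\ell+1}^\ell \delta_i = \alpha \ell \,\delta_i.
\]

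Next I would track how the norm changes over one full iteration of Algorithm \ref{alg:generic}. Inserting $a_i$ into a zero-valued row raises the squared Frobenius norm by exactly $\|a_i\|^2$, so the post-insertion matrix $C_{[i]}$ satisfies $\|C_{[i]}\|_F^2 = \|B_{[i-1]}\|_F^2 + \|a_i\|^2$. Combining this with the per-step drop above yields the recurrence $\|B_{[i]}\|_F^2 = \|B_{[i-1]}\|_F^2 + \|a_i\|^2 - \alpha\ell\delta_i$. For the iterations in which no reduction is triggered (there is still a zero row), no $\RRPFD$ call is made and I would simply set $\delta_i = 0$, so the same recurrence holds uniformly across all $i \in [n]$ without affecting $\Delta = \sum_i \delta_i$.

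Finally I would telescope the recurrence from $i=1$ to $n$ using $B_{[0]} = 0$, obtaining $\|B\|_F^2 = \|B_{[n]}\|_F^2 = \sum_{i=1}^n \|a_i\|^2 - \alpha\ell \sum_{i=1}^n \delta_i = \|A\|_F^2 - \alpha\ell\Delta$, which rearranges directly to the claim $\|A\|_F^2 - \|B\|_F^2 = \alpha\Delta\ell$. The calculation is essentially routine; the only point demanding care is the bookkeeping of exactly how many singular values $\RRPFD$ alters — the smallest $\alpha\ell$ of them, which implicitly requires $\alpha\ell$ to be integral — and verifying that this \emph{count} of affected indices, rather than the sizes of the individual decrements, is what produces the factor $\ell$ in $\alpha\Delta\ell$. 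Everything else (unitary invariance of the norm, the $\|a_i\|^2$ insertion increment, and the vanishing contribution of non-reducing steps) follows immediately. As a sanity check, setting $\alpha = 1$ recovers the known \FD identity $\|A\|_F^2 - \|B\|_F^2 = \Delta\ell$, consistent with Property 3 holding at $\alpha = 1$.
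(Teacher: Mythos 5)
Your proof is correct and follows essentially the same route as the paper's: expand the squared Frobenius norm over singular values, observe that exactly the last $\alpha\ell$ of them each drop by $\delta_i$ so that $\|C_{[i]}\|_F^2 = \|B_{[i]}\|_F^2 + \alpha\ell\delta_i$, and telescope via $\|a_i\|^2 = \|C_{[i]}\|_F^2 - \|B_{[i-1]}\|_F^2$. Your extra bookkeeping (setting $\delta_i = 0$ on non-reducing steps and noting the implicit integrality of $\alpha\ell$) is a welcome but inessential refinement of the same argument.
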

\begin{proof}
We expand that $\|C_{[i]}\|_F^2 = \sum_{j=1}^\ell \sigma_j^2$ to get
\begin{align*}
\|C_{[i]}\|^2_F & = \sum_{j=1}^{(1-\alpha)\ell} \sigma_j^2 + \sum_{j=(1-\alpha)\ell + 1}^{\ell} \sigma_j^2
\\& 
= \sum_{j=1}^{(1-\alpha)\ell} {\sigma'}_j^2 + \sum_{j=(1-\alpha)\ell + 1}^{\ell} ({\sigma'}_j^2 + \delta_i) = \|B_{[i]}\|^2_F + \alpha \ell \delta_i.  
\end{align*}
By using $\|a_i\|^2 = \|C_{[i]}\|_F^2 - \|B_{[i-1]}\|_F^2 = (\|B_{[i]}\|_F^2 + \alpha \ell \delta_i) - \|B_{[i-1]}\|_F^2$, and summing over $i$ we get
\[
\|A\|^2_F = \sum_{i=1}^n \|a_i\|^2 = \sum_{i=1}^n \|B_{[i]}\|^2_F - \|B_{[i-1]}\|_F^2 + \alpha \ell \delta_i = \|B\|^2_F + \alpha \ell \Delta. 
\]
Subtracting $\|B\|_F^2$ from both sides, completes the proof.
\end{proof}

The combination of the three Properties, provides the following results.

\begin{theorem}
Given an input matrix $A \in \mathbb{R}^{n \times d}$, $\alpha$-\FD with parameter $\ell$  returns a sketch $B \in \mathbb{R}^{\ell \times d}$ that satisfies for all $k > \alpha \ell$
\[
0 \leq \|Ax\|^2 - \|Bx\|^2 \leq \|A- A_k\|_F^2/(\alpha \ell -k)
\]
and projection of $A$ onto $B_k$, the top $k$ rows of $B$ satisfies
\[
\|A - \pi_{B_k}(A)\|_F^2 \leq \frac{\alpha \ell}{\alpha \ell - k} \|A-A_k\|_F^2. 
\]
\end{theorem}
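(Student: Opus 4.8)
The plan is to recognize this theorem as the instantiation of the abstract three-Property Lemma stated at the beginning of the section, specialized to the $\alpha$-\FD algorithm. The groundwork is already in place: Lemma~\ref{Lemma1:alpha_fd} together with the telescoping argument over the $n$ insertion steps establishes Properties~1 and~2 with $\Delta = \sum_{i=1}^n \delta_i$, and Lemma~\ref{Lemma3:alpha_fd} establishes Property~3, in fact with equality, $\|A\|_F^2 - \|B\|_F^2 = \alpha\Delta\ell$. So the only remaining work is to convert these three Properties into the two stated error bounds, which is exactly the content of the unnamed Lemma; I would either cite it or reprove it in this concrete setting. Note the hypothesis should read $k < \alpha\ell$, so that $\alpha\ell - k > 0$, otherwise the denominators are meaningless.

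The first step is to bound the total mass $\Delta$. I apply Property~2 to each of the top-$k$ right singular vectors $v_1,\dots,v_k$ of $A$, giving $\|Bv_j\|^2 \geq \|Av_j\|^2 - \Delta = \sigma_j^2 - \Delta$. Summing and using that these vectors are orthonormal yields $\|B\|_F^2 \geq \sum_{j=1}^k \|Bv_j\|^2 \geq \|A_k\|_F^2 - k\Delta$. Substituting into the equality from Property~3, I get $\alpha\Delta\ell = \|A\|_F^2 - \|B\|_F^2 \leq \|A\|_F^2 - \|A_k\|_F^2 + k\Delta = \|A - A_k\|_F^2 + k\Delta$, and rearranging gives $\Delta \leq \|A - A_k\|_F^2 / (\alpha\ell - k)$. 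The covariance bound then follows immediately: Property~1 supplies the lower bound $\|Ax\|^2 - \|Bx\|^2 \geq 0$, while Property~2 combined with the bound on $\Delta$ gives the upper bound.

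For the projection bound, I would first use the Pythagorean identity $\|A - \pi_{B_k}(A)\|_F^2 = \|A\|_F^2 - \|\pi_{B_k}(A)\|_F^2$, and observe that if $v_1',\dots,v_k'$ are the top-$k$ right singular vectors of $B$, then $\|\pi_{B_k}(A)\|_F^2 = \sum_{j=1}^k \|Av_j'\|^2 \geq \sum_{j=1}^k \|Bv_j'\|^2 = \|B_k\|_F^2$ by Property~1. Hence $\|A - \pi_{B_k}(A)\|_F^2 \leq \|A\|_F^2 - \|B_k\|_F^2$. The remaining step is to lower bound $\|B_k\|_F^2$: since $B_k$ is the best rank-$k$ Frobenius approximation of $B$, projecting $B$ onto the top-$k$ right singular space of $A$ can only do worse, so $\|B_k\|_F^2 \geq \sum_{j=1}^k \|Bv_j\|^2 \geq \|A_k\|_F^2 - k\Delta$ exactly as before. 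Combining these two inequalities gives $\|A - \pi_{B_k}(A)\|_F^2 \leq \|A - A_k\|_F^2 + k\Delta$, and inserting the bound on $\Delta$ produces the factor $1 + k/(\alpha\ell - k) = \alpha\ell/(\alpha\ell - k)$.

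The main obstacle is the projection bound, specifically keeping straight which orthonormal basis is used where: Property~1 is applied along the singular directions of $B$ to pass from $\pi_{B_k}(A)$ down to $\|B_k\|_F^2$, whereas the optimality of the best rank-$k$ approximation is invoked along the singular directions of $A$ to pass back up from $\|B_k\|_F^2$ to $\|A_k\|_F^2 - k\Delta$. These two optimality and orthogonality arguments point in opposite directions and must both be deployed correctly; the covariance bound, by contrast, is essentially immediate once $\Delta$ is controlled.
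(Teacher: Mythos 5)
Your proposal is correct and follows essentially the same route as the paper: the theorem is obtained by combining Lemmas~\ref{Lemma1:alpha_fd} and~\ref{Lemma3:alpha_fd} (which give Properties 1--3 with $\Delta=\sum_i\delta_i$) with the abstract three-Property lemma, whose proof in Appendix~\ref{app:aFD} bounds $\Delta\leq\|A-A_k\|_F^2/(\alpha\ell-k)$ and then runs exactly your Pythagorean/Ky--Fan chain for the projection error. Your observation that the hypothesis should read $k<\alpha\ell$ is also right --- that is a typo in the theorem statement.
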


Setting $\ell = (k + 1/\eps)/\alpha$ yields 
$0 \leq \|Ax\|^2 - \|Bx\|^2 \leq \eps \|A- A_k\|_F^2$ 
and 
setting $\ell  = (k + k/\eps)/\alpha$ yields $\|A - \pi_{B_k}(A)\|_F^2 \leq (1+\eps)\|A-A_k\|_F^2$.

\paragraph{Fast Parameterized \FD.}

\label{sec:new-Fast-aFD}

\s{Fast Parameterized \FD}(or \s{Fast $\alpha$-\FD}) improves the runtime performance of parameterized \FD in the same way \s{Fast FD} improves the performance of \FD.  More specifically, in \RR we set $\delta_i$ as the $(\ell - \ell \alpha/2)$th squared singular value, i.e. $\delta_i = \sigma_t^2$ for $t = \ell - \ell\alpha/2$.  Then we update the sketch by only changing the last $\alpha \ell$ singular values: we set ${\sigma'}_j^2 = \max(\sigma_j^2 - \delta_i, 0)$.  This sets at least $\alpha \ell /2$ singular values to $0$ once every $\alpha \ell/2$ steps. Thus the algorithm takes total time $O(nd + n/(\alpha \ell/2) \cdot d \ell^2) = O(nd \ell / \alpha)$.  

It is easy to see that \s{Fast $\alpha$-\FD} inherits the same worst case bounds as \s{$\alpha$-FD} on \s{cov-err} and \s{proj-err}, if we use twice as many rows.  
That is, setting $\ell = 2(k+1/\eps)/\alpha$ yields $\|A^TA - B^TB\|_2 \leq \eps \|A - A_k\|_F^2$ and 
setting $\ell = 2(k+k/\eps)/\alpha$ yields $\|A - \pi_{B_k}(A)\|_F^2 \leq (1+\eps)\|A-A_k\|_F^2$.  
In experiments we consider \s{Fast $0.2$-FD}.

\paragraph{SpaceSaving Directions.}

Motivated by an empirical study~\cite{cormode2008finding} showing that the SpaceSaving algorithm~\cite{metwally2006integrated} tends to outperform its analog Misra-Gries~\cite{mg-fre-82} in practice, we design an algorithm called \ssd (abbreviated \SSD) to try to extend these ideas to matrix sketching.  
It uses Algorithm \ref{alg:SS} for \RR.  
Like the SS algorithm for frequent items, it assigns the counts for the second smallest counter (in this case squared singular value $\sigma_{\ell-1}^2$) to the direction of the smallest.  
Unlike the SS algorithm, we do not use $\sigma_{\ell-1}^2$ as the squared norm along each direction orthogonal to $B$, as that gives a consistent over-estimate.

\begin{algorithm}[h!!!]
  \caption{\label{alg:SS} $\RRSS(S)$}
  \begin{algorithmic}
    \STATE $\delta_i \leftarrow \sigma_{\ell-1}^2$
    \STATE \textbf{return} $\diag(\sigma_{1}, \ldots, \sigma_{\ell-2}, 0, \sqrt{\sigma_{\ell}^2 + \delta_i})$.  
 \end{algorithmic}
\end{algorithm}
  
We can also show similar error bounds for \SSD.  It shows that a simple transformation of the output sketch $B \leftarrow \SSD(A)$ satisfies the three Properties, although $B$ itself does not.  We defer these proofs to Appendix \ref{app:SSD}, just stating the main bounds here.

\begin{theorem}\label{thm:SS}
After obtaining a matrix $B$ from \SSD on a matrix $A$ with parameter $\ell$, the following properties hold:
\begin{itemize} \vspace{-.1in}
\item[$\bullet$] $\|A\|_F^2 = \|B\|_F^2$.
\item[$\bullet$] for any unit vector $x$ and for $k < \ell/2-1/2$, we have $| \|A x\|^2 - \|B x \|^2 | \leq \|A - A_k\|_F^2 / (\ell/2 - 1/2 - k)$.
\item[$\bullet$] for $k < \ell/2-1$ we have $\|A - \pi_B^k(A)\|_F^2 \leq \|A - A_k\|_F^2 (\ell-1) / (\ell - 1 - 2k)$.  
\end{itemize}
\end{theorem}

Setting $\ell = 2k + 2/\eps + 1$ yields 
$0 \leq \|Ax\|^2 - \|Bx\|^2 \leq \eps \|A- A_k\|_F^2$ 
and 
setting $\ell  = 2k + 1 + 2k/\eps$ yields $\|A - \pi_{B_k}(A)\|_F^2 \leq (1+\eps)\|A-A_k\|_F^2$.  

\paragraph{Compensative Frequent Directions.}
Inspired by the isomorphic transformation~\cite{agarwal2012mergeable} between the Misra-Gries~\cite{mg-fre-82} and the SpaceSaving sketch~\cite{metwally2006integrated},  which performed better in practice on the frequent items problem~\cite{cormode2008finding}, we consider another variant of \FD for matrix sketching.  In the frequent items problem, this would return an identical result to \SSD, but in the matrix setting it does not.  

We call this approach \cfd (abbreviated \CFD).  
In original \FD, the computed sketch $B$ underestimates the Frobenius norm of stream\cite{ghashami2014relative}, in \CFD we try to compensate for this.  
Specifically, we keep track of the total mass $\Delta = \sum_{i=1}^n \delta_i$ subtracted from squared singular values (this requires only an extra counter).  Then we slightly modify the \FD algorithm.  In the final step where 
$B = S' V^T$, we modify $S'$ to $\hat S$ by setting each singular value $\hat \sigma_j = \sqrt{{\sigma'}_j^2 + \Delta}$, then we instead return $B = \hat S V^T$.  

It now follows that for any $k \leq \ell$, including $k=0$, that $\|A\|_F^2 = \|B\|_F^2$, 
that for any unit vector $x$ we have $| \|A x\|_F^2 - \|B x\|_F^2 | \leq \Delta \leq \|A - A_k\|_F^2 / (\ell -k)$ for any $k < \ell$, and since $V$ is unchanged that
$\|A - \pi_B^k(A) \|_F^2 \leq \|A - A_k\|_F^2 \cdot \ell / (\ell-k)$.  
Also as in \FD, setting $\ell = k + 1/\eps$ yields 
$0 \leq \|Ax\|^2 - \|Bx\|^2 \leq \eps \|A- A_k\|_F^2$ 
and 
setting $\ell  = k/\eps$ yields $\|A - \pi_{B_k}(A)\|_F^2 \leq (1+\eps)\|A-A_k\|_F^2$.

\subsection{New Without Replacement Sampling Algorithms}
\label{sec:new-SwoR}
\begin{table}[h!!!]
\begin{center}
\begin{tabular}{|r||c|c||c|c|c|}
\hline
\textbf{} & $\ell$ & \s{cov-err} & $\ell$ & \s{proj-err} & \s{runtime}
\\ \hline 
\s{Priority} & $d/\eps^2$ & $\eps$ & $\ell$ & - & $\s{nnz}(A) \log \ell$ 
\\  
\s{VarOpt} & $d/\eps^2$ & $\eps$ & $\ell$ & - & $\s{nnz}(A) \log \ell$ 
\\  \hline
\end{tabular} 
\end{center}
\vspace{-2mm}
\caption{\label{tbl:compare-new-samp} Theoretical Bounds for New Sampling Algorithms.}
\end{table}

As mentioned above, most sampling algorithms use \emph{sampling with replacement} (SwR) of rows.  This is likely because, in contrast to \emph{sampling without replacement} (SwoR), it is easy to analyze and for weighted samples conceptually easy to compute.  SwoR for unweighted data can easily be done with variants of reservoir sampling~\cite{vitter1985random}; however, variants for weighted data have been much less resolved until recently~\cite{duffield2007priority,cohen2009stream}.  
\paragraph{Priority Sampling.}
A simple technique~\cite{duffield2007priority} for SwoR on weighted elements first assigns each element $i$ a random number $u_i \in \s{Unif}(0,1)$.  This implies a priority $\rho_i = w_i / u_i$, based on its weight $w_i$ (which for matrix rows $w_i = \|a\|_i^2$).  We then simply retain the $\ell$ rows with largest priorities, using a priority queue of size $\ell$.  Thus each step takes $O(\log \ell)$ time, but on randomly ordered data would take only $O(1)$ time in expectation since elements with $\rho_i \leq \tau$, where $\tau$ is the $\ell$th largest priority seen so far, are discarded.  

Retained rows are given a squared norm $\hat w_i = \max(w_i, \tau)$.  Rows with $w_i \geq \tau$ are always retained with original norm.  Small weighted rows are kept proportional to their squared norms.   
The technique, \s{Priority Sampling}, is simple to implement, but requires a second pass on retained rows to assign final weights.  

\paragraph{VarOpt Sampling.}

VarOpt (or Variance Optimal) sampling~\cite{cohen2009stream} is a modification of priority sampling that takes more care in selecting the threshold $\tau$.  In priority sampling, $\tau$ is generated so $\E[\sum_{a_i \in B} \hat w_i] = \|A\|_F^2$, but if $\tau$ is set more carefully, then we can achieve $\sum_{a_i \in B} \hat w_i = \|A\|_F^2$ deterministically.  
VarOpt selects each row with some probability $p_i = \min(1, w_i / \tau )$, with $\hat w_i = \max(w_i, \tau)$, and so exactly $\ell$ rows are selected.  

The above implies that for a set $L$ of $\ell$ rows maintained, there is a fixed threshold $\tau$ that creates the equality.  We maintain this value $\tau$ as well as the $t$ weights smaller than $\tau$ inductively in $L$.  If we have seen at least $\ell +1$ items in the stream, there must be at least one weight less than $\tau$.  On seeing a new item, we use the stored priorities $\rho_i = w_i/u_i$ for each item in $L$ to either (a) discard the new item, or (b) keep it and drop another item from the reservoir.  As the priorities increase, the threshold $\tau$ must always increase.   It takes amortized constant time to discarding a new item or $O(\log \ell)$ time to keep the new item, and does not require a final pass on $L$.  
We refer to it as \s{VarOpt}.

A similar algorithm using priority sampling was considered in a distributed streaming setting~\cite{GLP14}, which provided a high probability bound on \s{cov-err}.  A constant probability of failure bound for $\ell = O(d/\eps^2)$ and $\s{cov-err} \leq \eps$, follows with minor modification from Section \ref{app:rs-convert}.  
It is an open question to bound the projection error for these algorithms, but we conjecture the bounds will match those of \s{Norm Sampling}.

\section{Experimental Setup}
\label{sec:setup}
We used an OpenSUSE 12.3 machine with 32 cores of Intel(\textsc{r}) Core(\textsc{tm}) i7-4770S CPU(3.10 GHz) and 32GB of RAM. 
Randomized algorithms were run five times; we report the median error value.  

\begin{table}[t!!!!]
\begin{center}
\begin{tabular}{|c||c|c||c|c|c|c|}
\hline
\textbf{DataSet} & \textbf{\# datapoints} & \textbf{\# attributes} & \textbf{rank} & \textbf{numeric rank} & \textbf{nnz}\% & \textbf{excess kurtosis}\\
\hline 
\hline
 \textsf{Birds} & 11789 & 312 & 312 & 12.50 & 100 & 1.72\\ 
\hline 
\textsf{Random Noisy} & 10000 & 500 & 500 & 14.93 & 100 & 0.95\\
\hline
\textsf{CIFAR-10} & 60000 & 3072 & 3072 & 1.19 & 99.75 & 1.34\\ 
\hline 
\textsf{Connectus} & 394792 & 512 & 512 & 4.83 & 0.0055 & 17.60\\ 
\hline 
\textsf{Spam} & 9324 & 499 & 499 & 3.25 & 0.07 & 3.79\\ 
\hline 
\textsf{Adversarial} & 10000 & 500 & 500 & 1.69 & 100 & 5.80\\ 
\hline 
\end{tabular} 
\end{center}
\vspace{-.25in}
\caption{\label{tbl:datasets} Dataset Statistics.}
\end{table}

\begin{figure}[t!]
\begin{centering}
\subfigure[\s{\footnotesize Birds}]{\includegraphics[width=.16\linewidth]{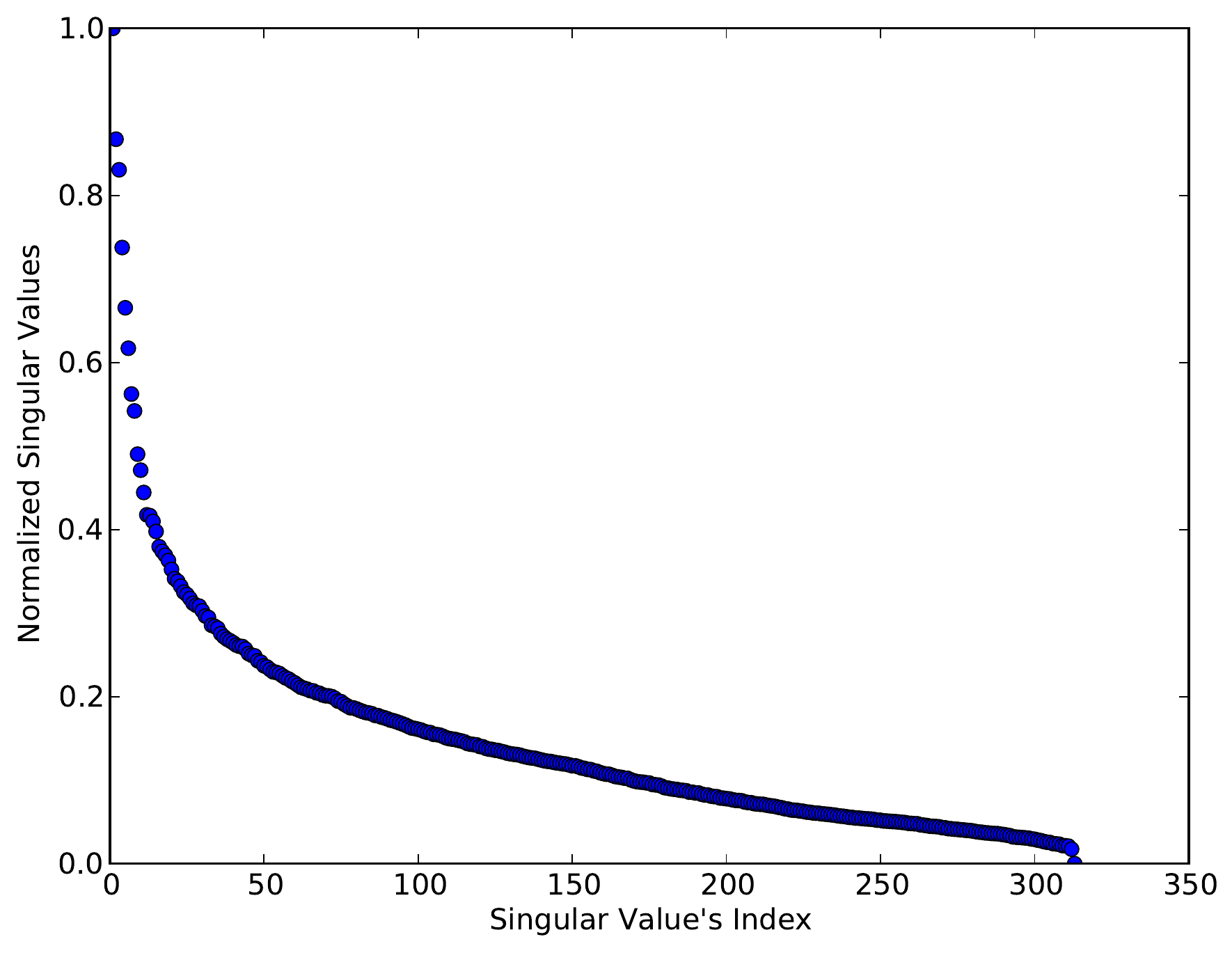}}
\subfigure[\s{\footnotesize Random Noisy}]{\includegraphics[width=.16\linewidth]{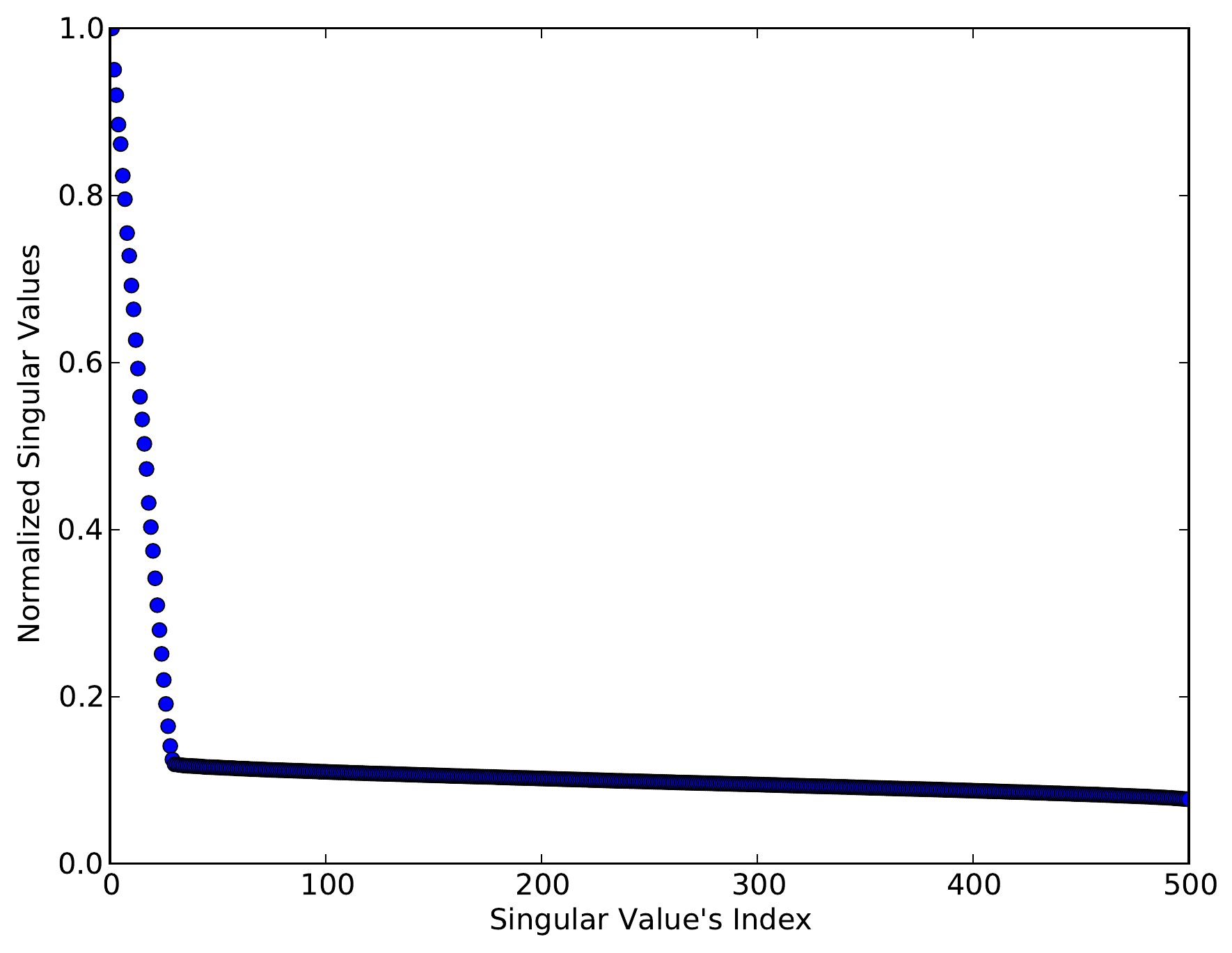}}
\subfigure[\s{\footnotesize CIFAR-10}]{\includegraphics[width=.16\linewidth]{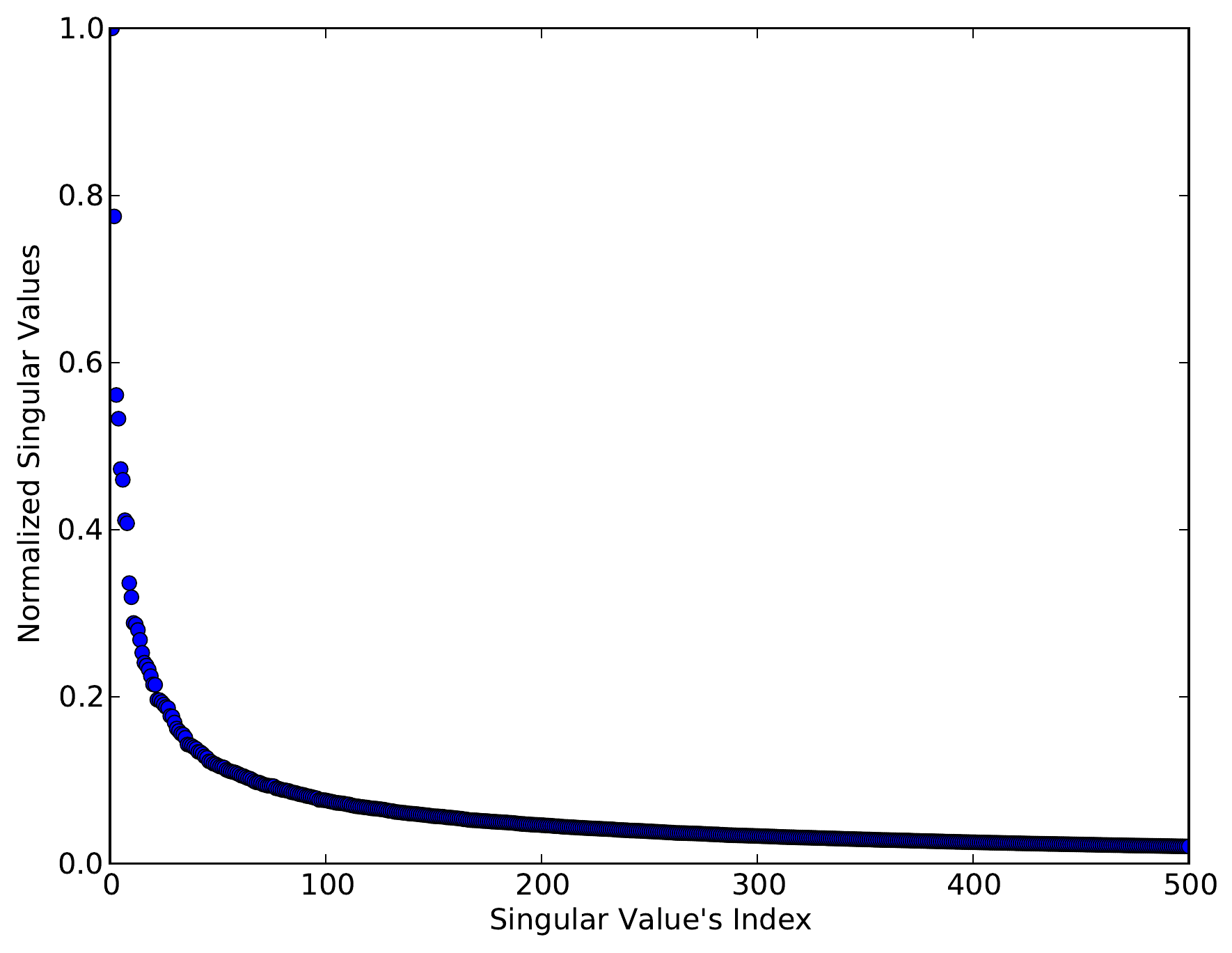}}
\subfigure[\s{\footnotesize ConnectUS}]{\includegraphics[width=.16\linewidth]{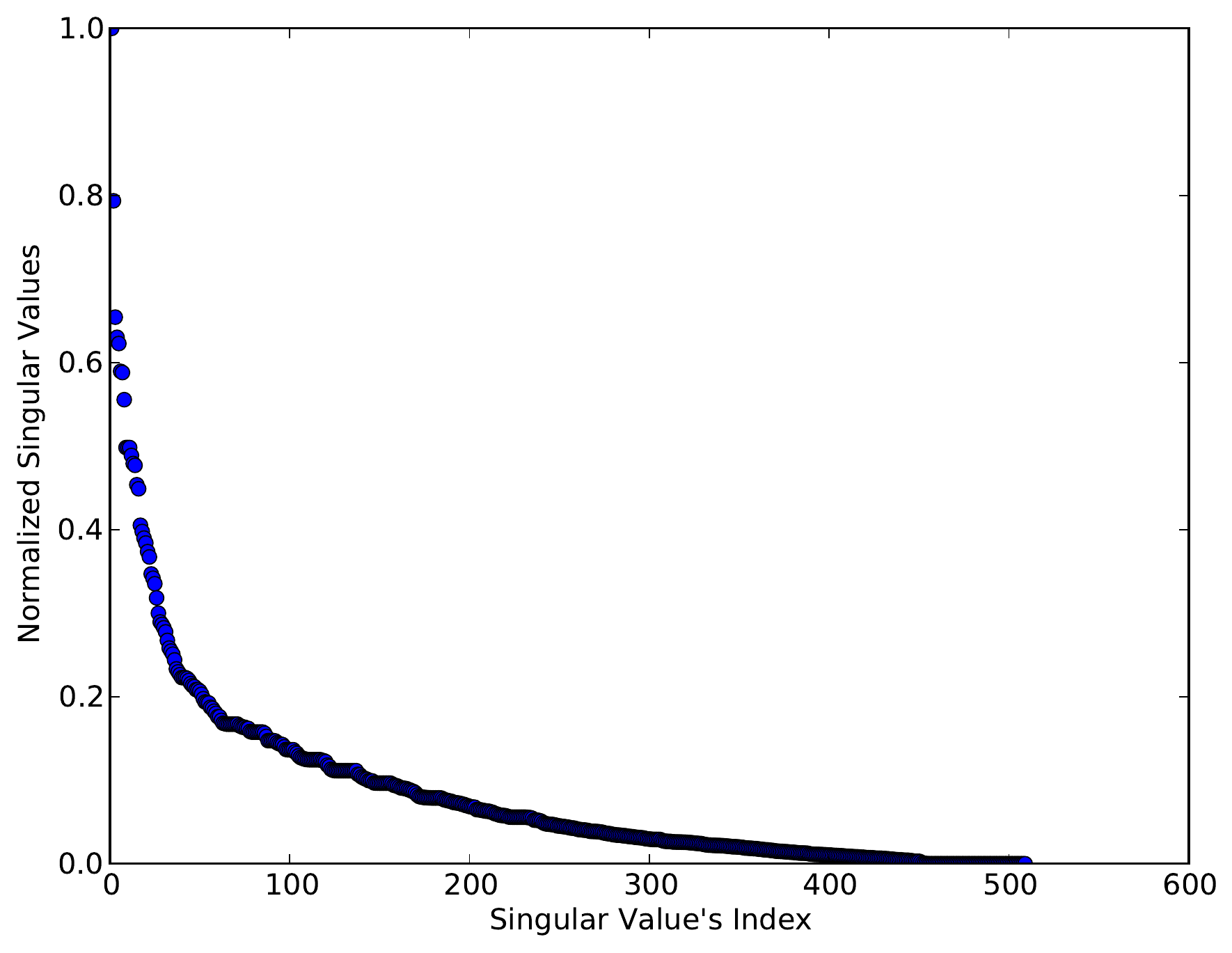}}
\subfigure[\s{\footnotesize Spam}]{\includegraphics[width=.16\linewidth]{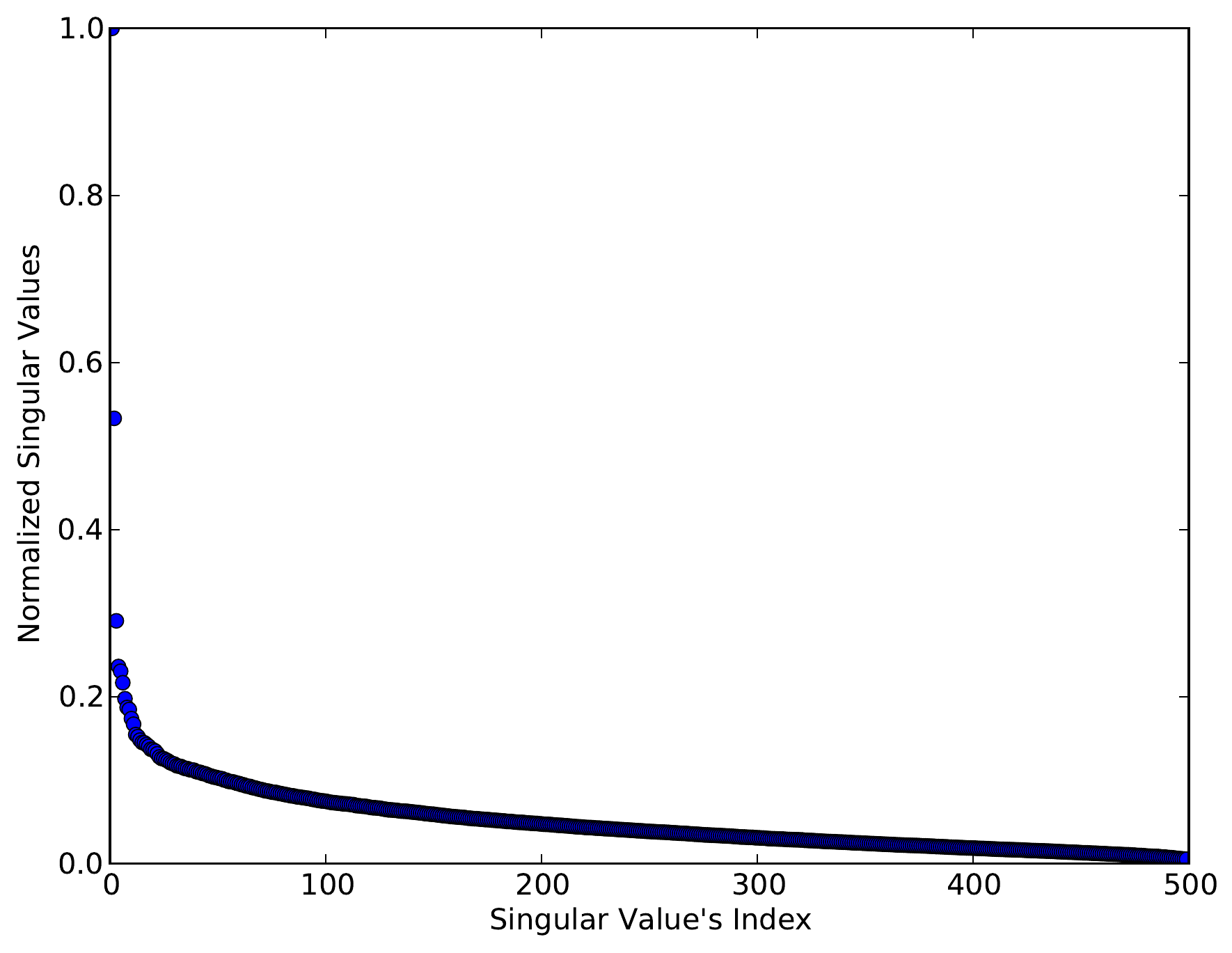}}
\subfigure[\s{\footnotesize Adversarial}]{\includegraphics[width=.16\linewidth]{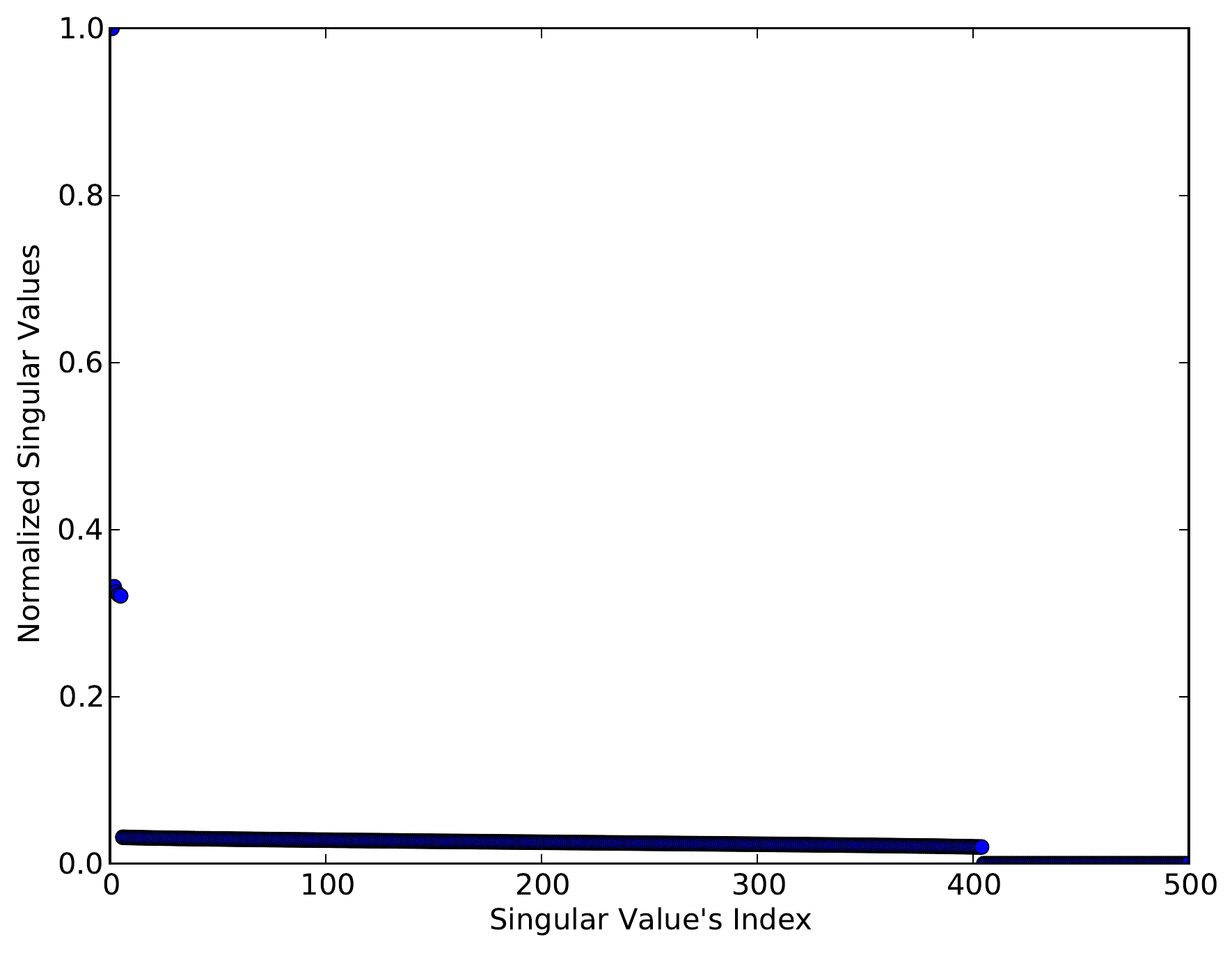}}
\end{centering}
\caption{\label{fig:singular value distribution}
Singular values distribution for datasets in Table \ref{tbl:datasets}.  The $x$-axis is singular value index, and the $y$-axis shows normalized singular values such that the highest singular value is one, i.e., each  value divided by largest singular value of dataset }
\end{figure}

\paragraph{Datasets.}  
We compare performance of the algorithms on both synthetic and real datasets. In addition, we generate adversarial data to show that \iSVD performs poorly under specific circumstances, this explains why there is no theoretical guarantee for them.
Each data set is an $n\times d$ matrix $A$, and the $n$ rows are processed one-by-one in a stream.

Table \ref{tbl:datasets} lists all datasets with information about their $n$, $d$, $\s{rank}(A)$, numeric rank $\|A\|_F^2/\|A\|_2^2$, percentage of non-zeros (as $\s{nnz}\%$, measuring sparsity), and excess kurtosis.  
We follow Fisher's distribution with baseline kurtosis (from normal distribution) is $0$; positive excess kurtosis reflects fatter tails and negative excess kurtosis represents thinner tails.  

For \s{Random Noisy}, we generate the input $n \times d$ matrix $A$ synthetically, mimicking the approach by Liberty~\cite{liberty2013simple}.  We compose $A = SDU + F/\zeta$, where $SDU$ is the $m$-dimensional signal (for $m<d$) and $F/\zeta$ is the (full) $d$-dimensional noise with $\zeta$ controlling the signal to noise ratio.  Each entry $F_{i,j}$ of $F$  is generated i.i.d. from a normal distribution $N(0,1)$, and we set $\zeta=10$.    
For the signal, $S \in \bb{R}^{n \times m}$ again we generate each $S_{i,j} \sim N(0,1)$ i.i.d;  $D$ is diagonal with entries $D_{i,i} = 1-(i-1)/d$ linearly decreasing; and $U \in \bb{R}^{m \times d}$ is just a random rotation.  
We use $n=10000$, $d=500$, and consider $m \in \{10,20,30,50\}$ with $m=30$ as default.  

In order to create \s{Adversarial} data, we constructed two orthogonal subspaces $S_1 = \bb{R}^{m_1}$ and $S_2 = \bb{R}^{m_2}$  ($m_1 = 400$ and  $m_2 = 4$).  Then we picked two separate sets of random vectors $Y$ and $Z$ and projected them on $S_1$ and $S_2$, respectively. Normalizing the projected vectors and concatenating them gives us the input matrix $A$.  All vectors in $\pi_{S_1}(Y)$ appear in the stream before $\pi_{S_2}(Z)$; this represents a very sudden and orthogonal shift.  As the theorems predict, \FD and our proposed algorithms adjust to this change and properly compensate for it.  However, since $m_1 \geq \ell$, then \iSVD cannot adjust and always discards all new rows in $S_2$ since they always represent the smallest singular value of $B_{[i]}$.  

We consider 4 real-world datasets. 
\s{ConnectUS} is taken from the University of Florida Sparse Matrix collection~\cite{uflorida}.  \s{ConnectUS} represents a recommendation system.  Each column is a user, and each row is a webpage, tagged $1$ if favorable, $0$ otherwise. It contains 171 users that share no webpages preferences with any other users.  
\s{Birds}~\cite{birds_link} has each row represent an image of a bird, and each column a feature. PCA is a common first approach in analyzing this data, so we center the matrix.  
\s{Spam}~\cite{spam_link} has each row represent a spam message, and each column some feature; it has dramatic and abrupt feature drift over the stream, but not as much as \s{Adversarial}. 
\s{CIFAR-10} is a standard computer vision benchmark dataset for deep learning~\cite{krizhevsky2009learning}.

The singular values distribution of the datasets is given in Figure \ref{fig:singular value distribution}.  The $x$-axis is the singular value index, and the $y$-axis shows the normalized singular values, i.e. singular values divided by $\sigma_1$, where $\sigma_1$ is the largest singular value of dataset. \s{Birds}, \s{ConnectUS} and \s{Spam} have consistent drop-offs in singular values.  
\s{Random Noisy} has initial sharp and consistent drops in singular values, and then a more gradual decrease.  
The drop-offs in \s{CIFAR-10} and \s{Adversarial} are more dramatic.  

We will focus most of our experiments on three data sets 
\s{Birds} (dense, tall, large numeric rank), 
\s{Spam} (sparse, not tall, negative kurtosis, high numeric rank), and
\s{Random Noisy} (dense, tall, synthetic).  
However, for some distinctions between algorithms require considering much larger datasets; for these we use
\s{CIFAR-10} (dense, not as tall, small numeric rank) and
\s{ConnectUS} (sparse, tall, medium numeric rank).
Finally, \s{Adversarial} and, perhaps surprisingly \s{ConnectUS} are used to show that using \iSVD (which has no guarantees) does not always perform well.

\section{Experimental Evaluation}
\label{sec:eval}\label{sec:exp}
We divide our experimental evaluation into four sections:  The first three sections contain comparisons within algorithms of each group (sampling, projection, and iterative), while the fourth compares accuracy and run time of exemplar algorithm in each group against each other. 

We measure error for all algorithms as we change the parameter $\ell$ (\s{Sketch Size}) determining the number of rows in matrix $B$.  We measure covariance error as 
\s{err} $= \|A^T A - B^T B\|_2 / \|A\|_F^2$ (\s{Covariance Error}); this indicates for instance for \FD, that \s{err} should be at most $1/\ell$, but could be dramatically less if $\|A - A_k\|_F^2$ is much less than $\|A\|_F^2$ for some not so large $k$.  
We consider \s{proj-err} $= \|A - \pi_{B_k}(A)\|_F^2/\|A -A_k\|_F^2$, always using $k=10$ (\s{Projection Error}); for \FD we should have \s{proj-err} $\leq \ell/(\ell-10)$, and $\geq 1$ in general.  
We also measure run-time as sketch size varies.

Within each class, the algorithms are not dramatically different across sketch sizes.  But across classes, they vary in other ways, and so in the global comparison, we will also show plots comparing runtime to \s{cov-err} or \s{proj-err}, which will help demonstrate and compare these trade-offs.


\begin{figure}[t!]
\begin{centering}
{\tiny \textsf{Birds} \hspace{47mm} \textsf{Spam} \hspace{46mm} \textsf{Random Noisy}} 
\\
\includegraphics[width=.305\linewidth]{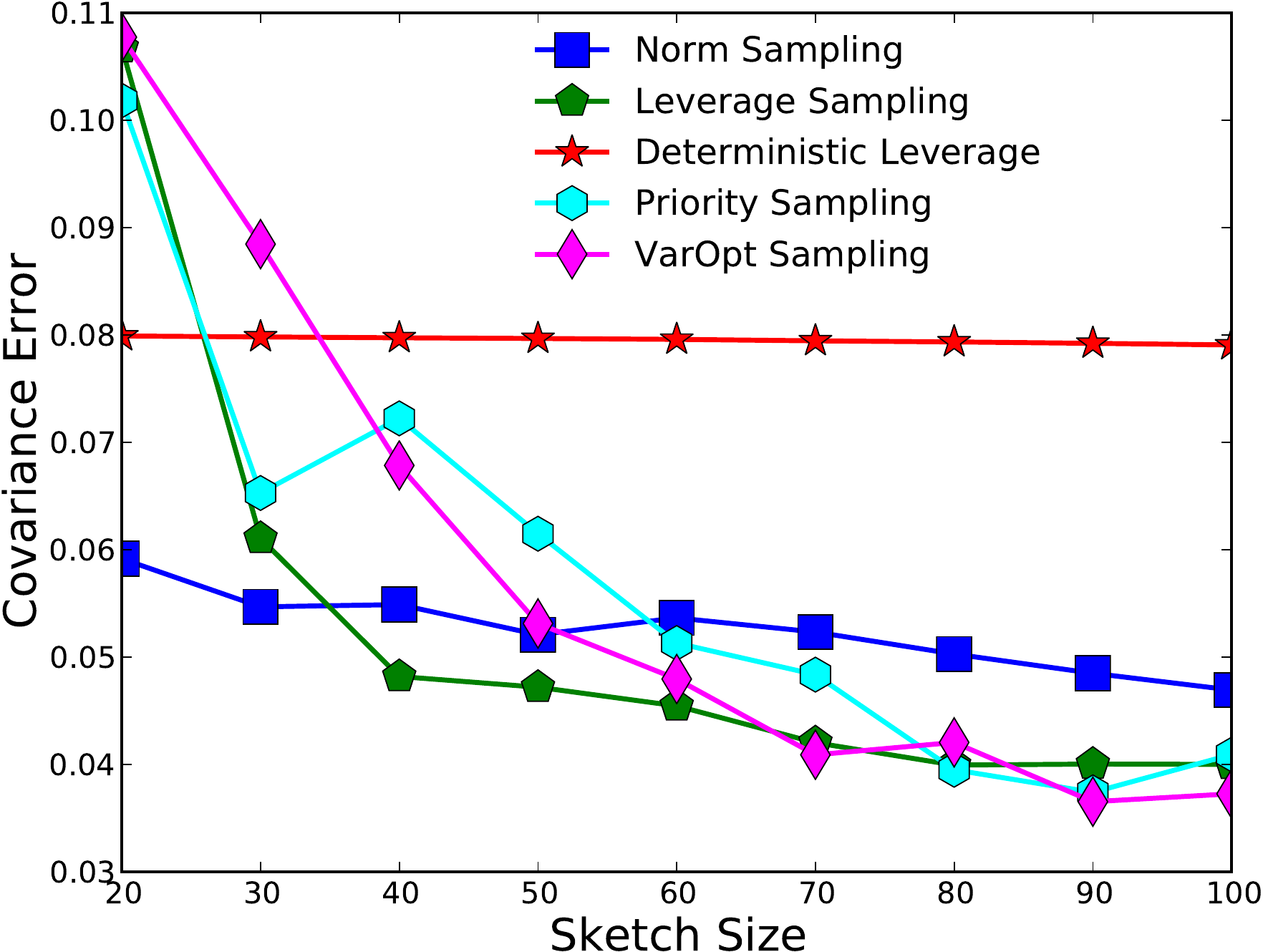}
\includegraphics[width=.345\linewidth]{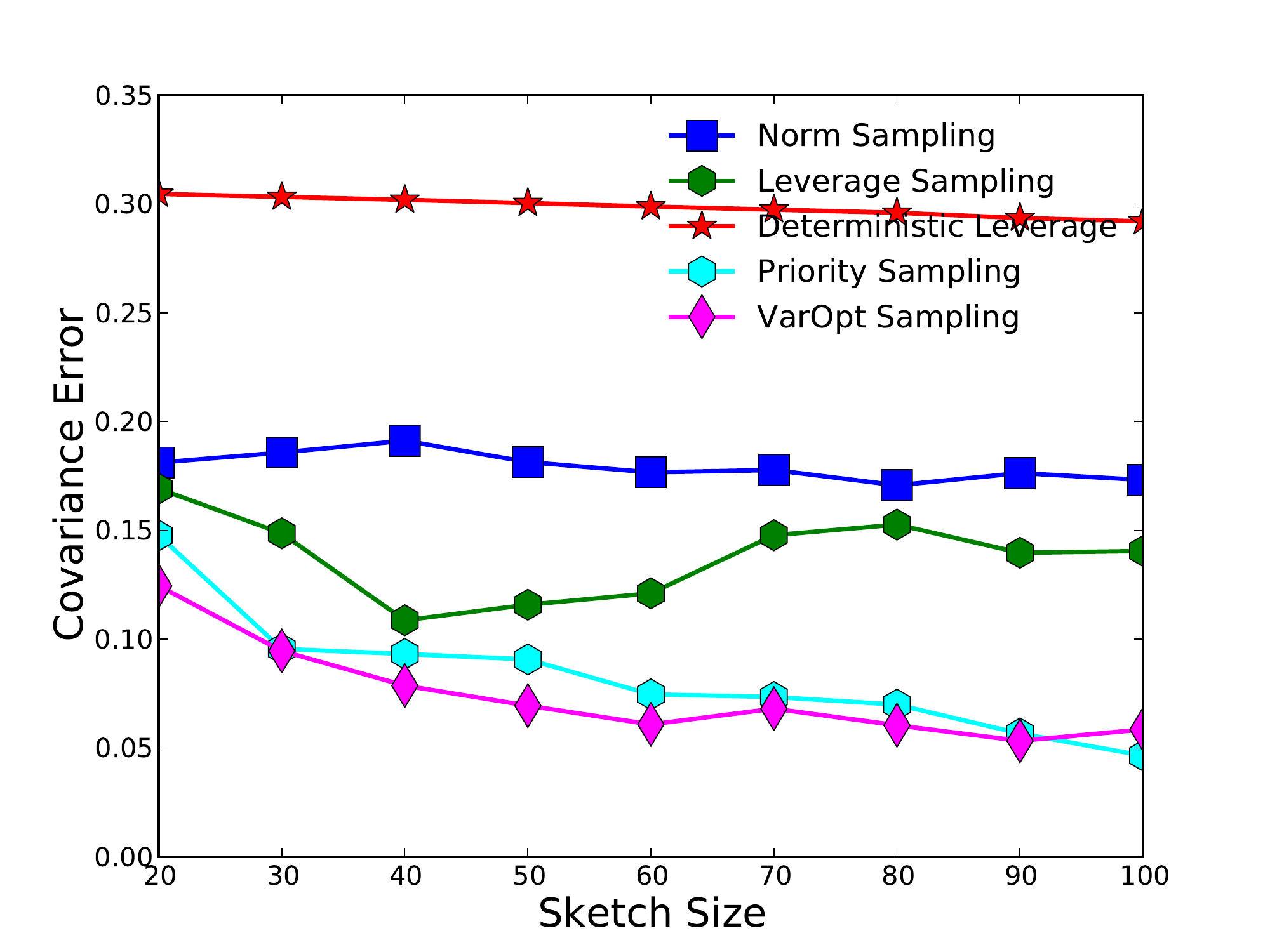}
\includegraphics[width=.305\linewidth]{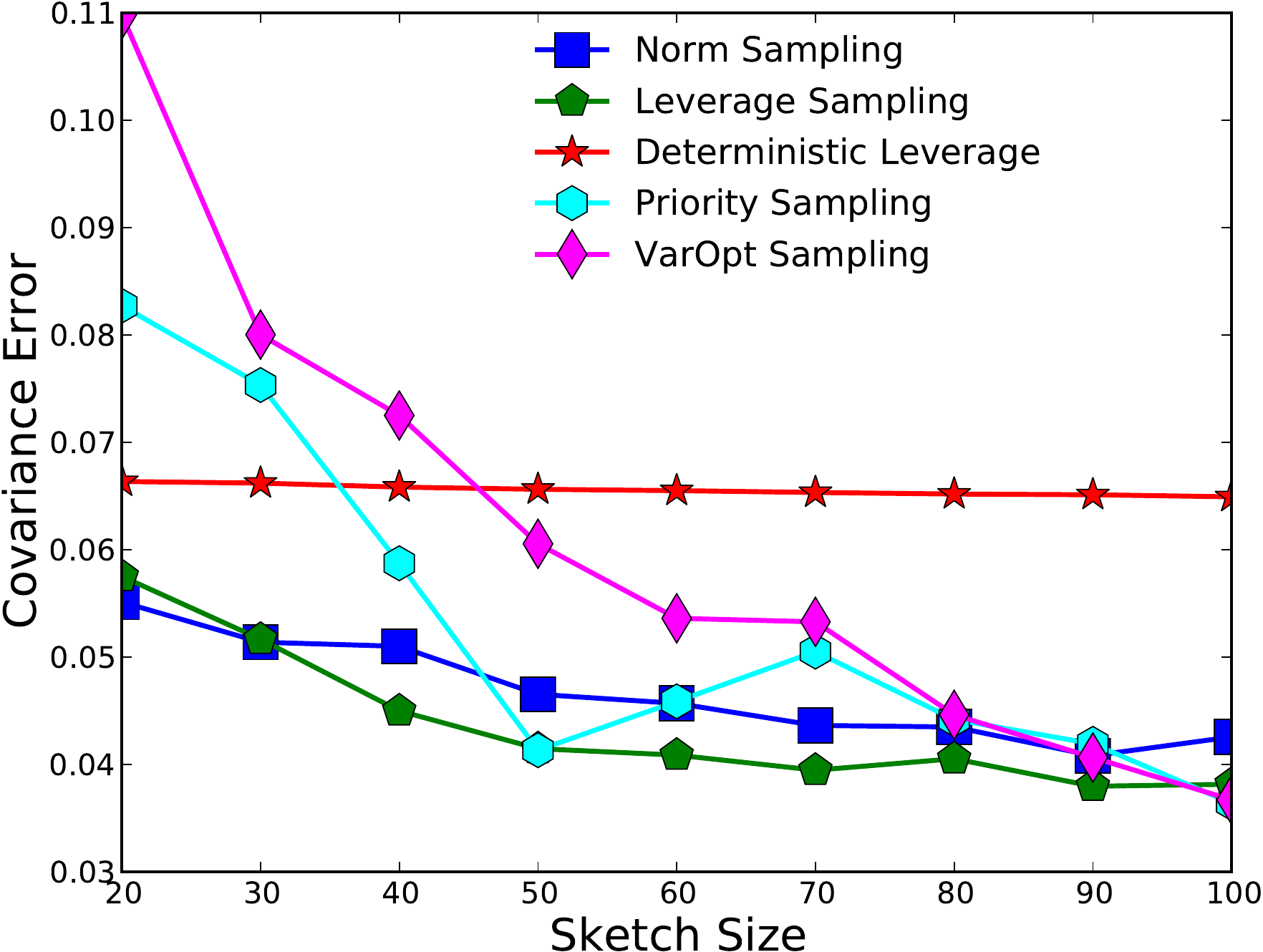}

\includegraphics[width=.305\linewidth]{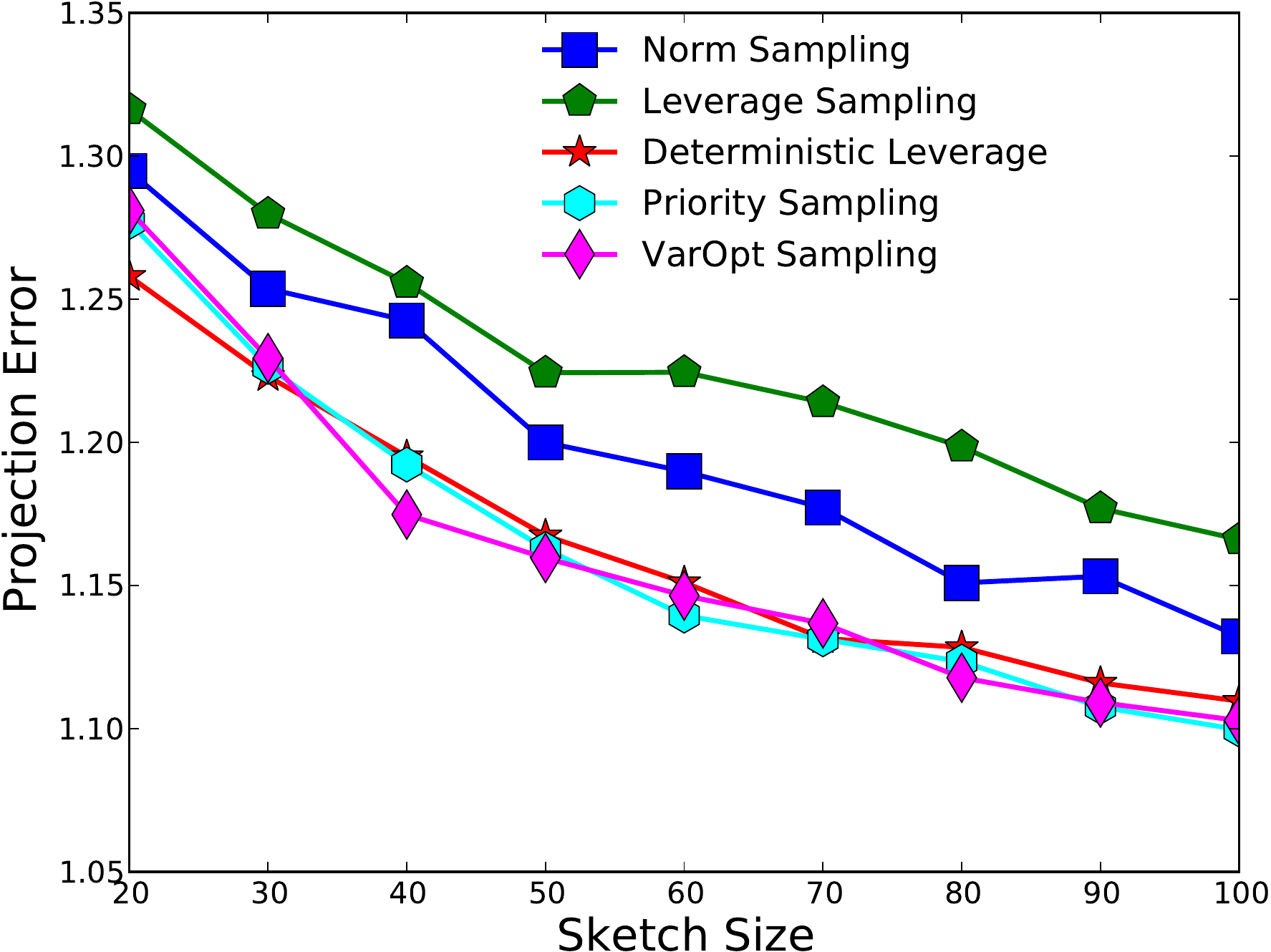}
\includegraphics[width=.345\linewidth]{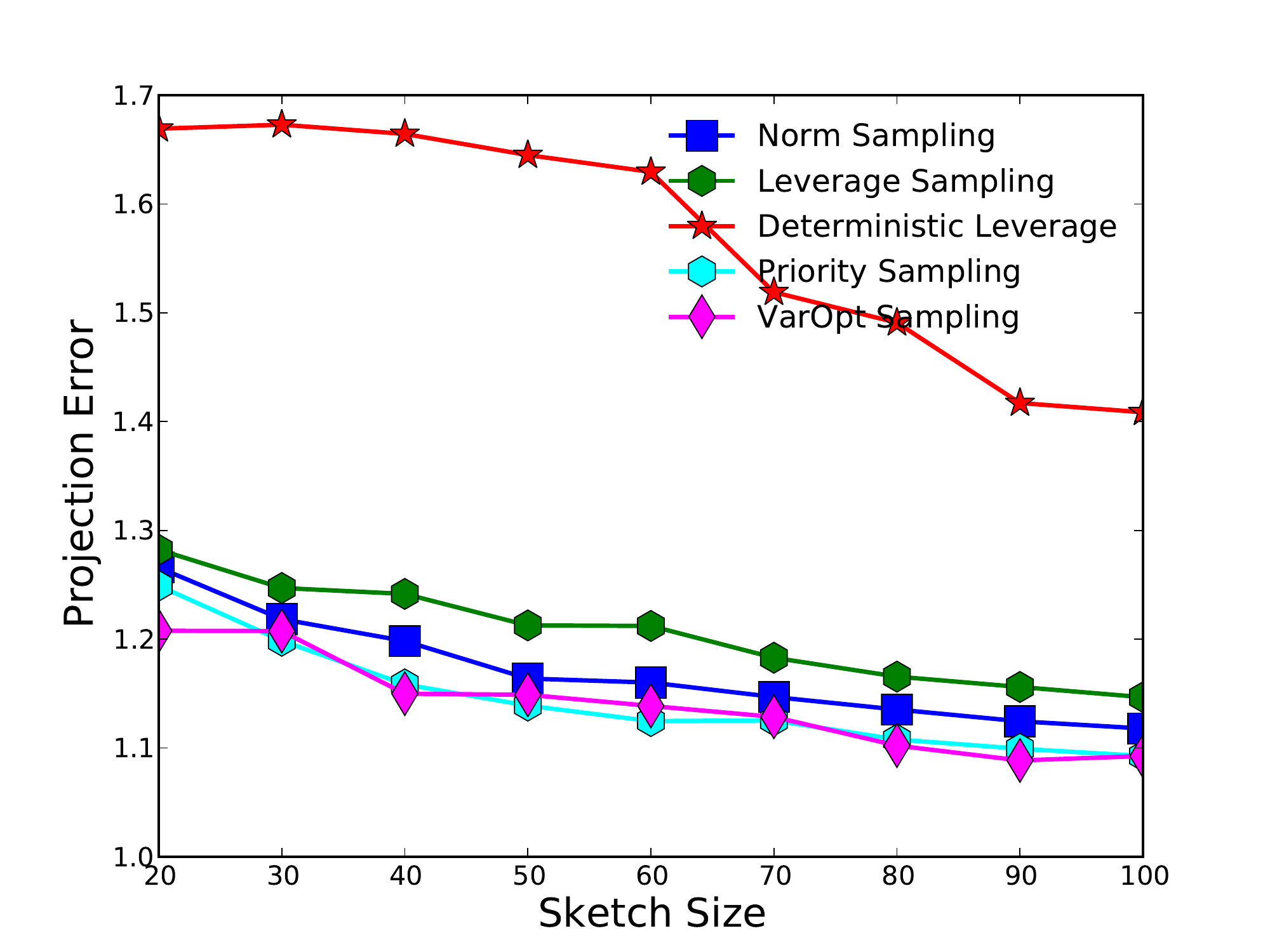}
\includegraphics[width=.305\linewidth]{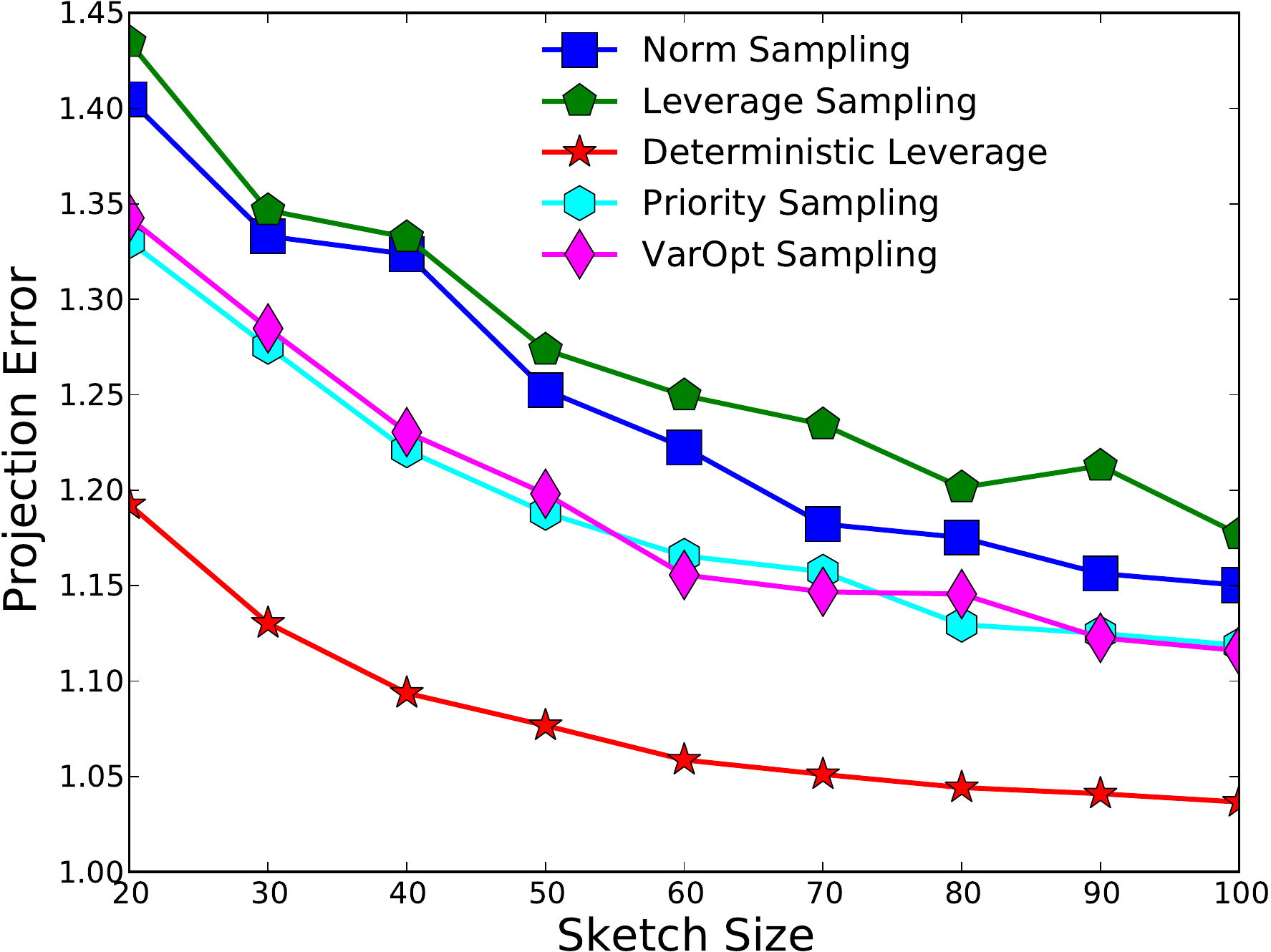}

\includegraphics[width=.305\linewidth]{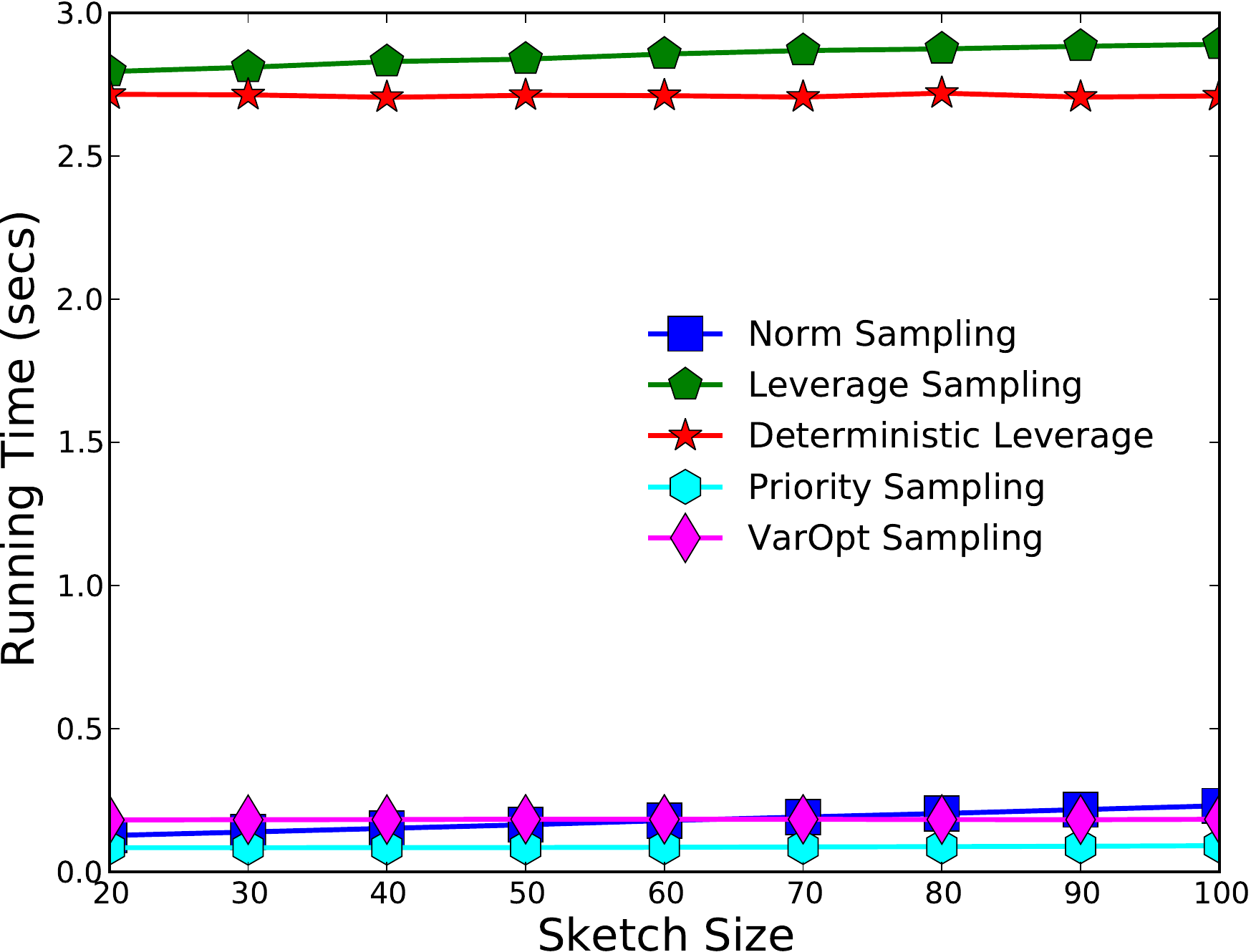}
\includegraphics[width=.345\linewidth]{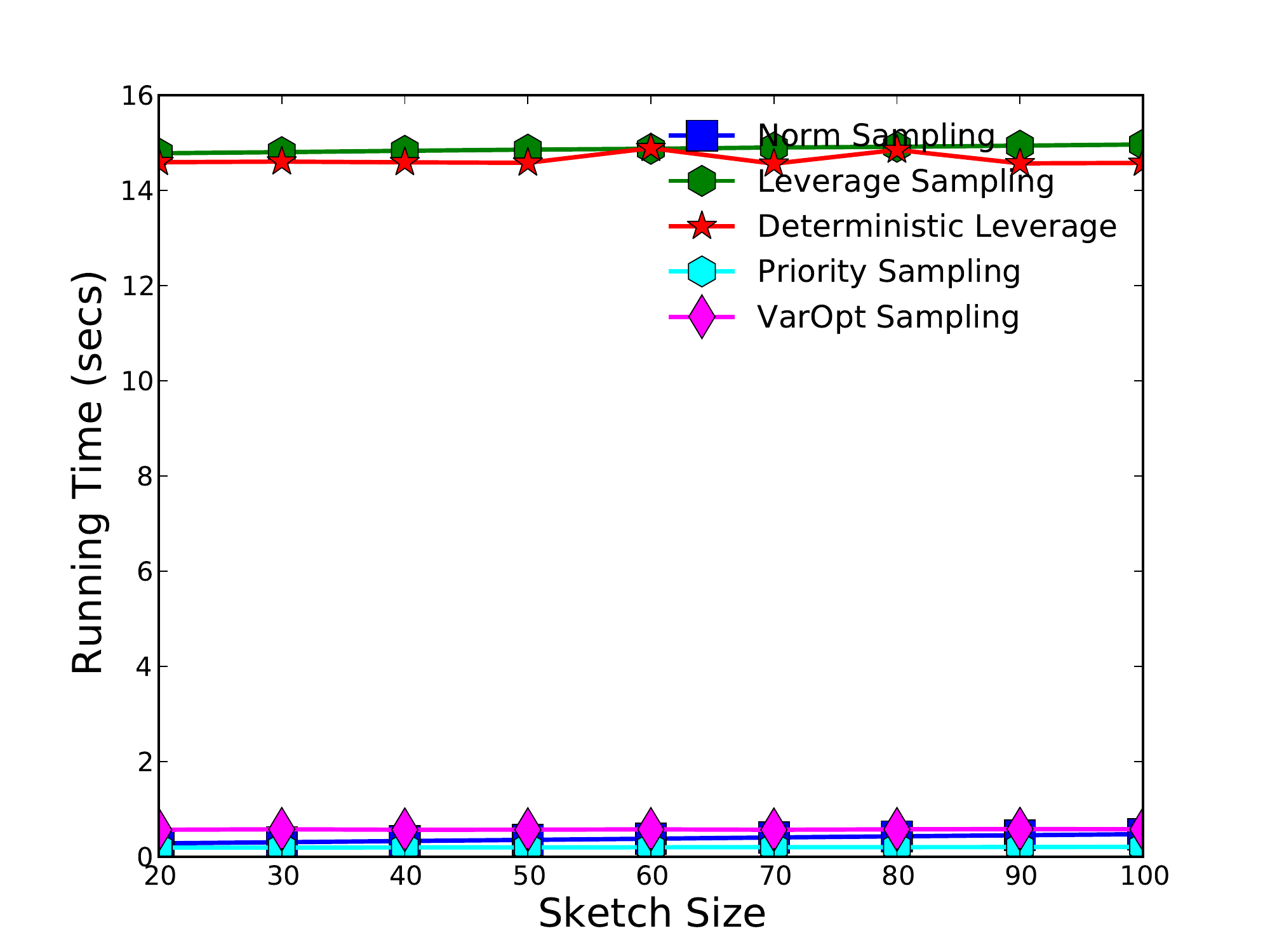}
\includegraphics[width=.305\linewidth]{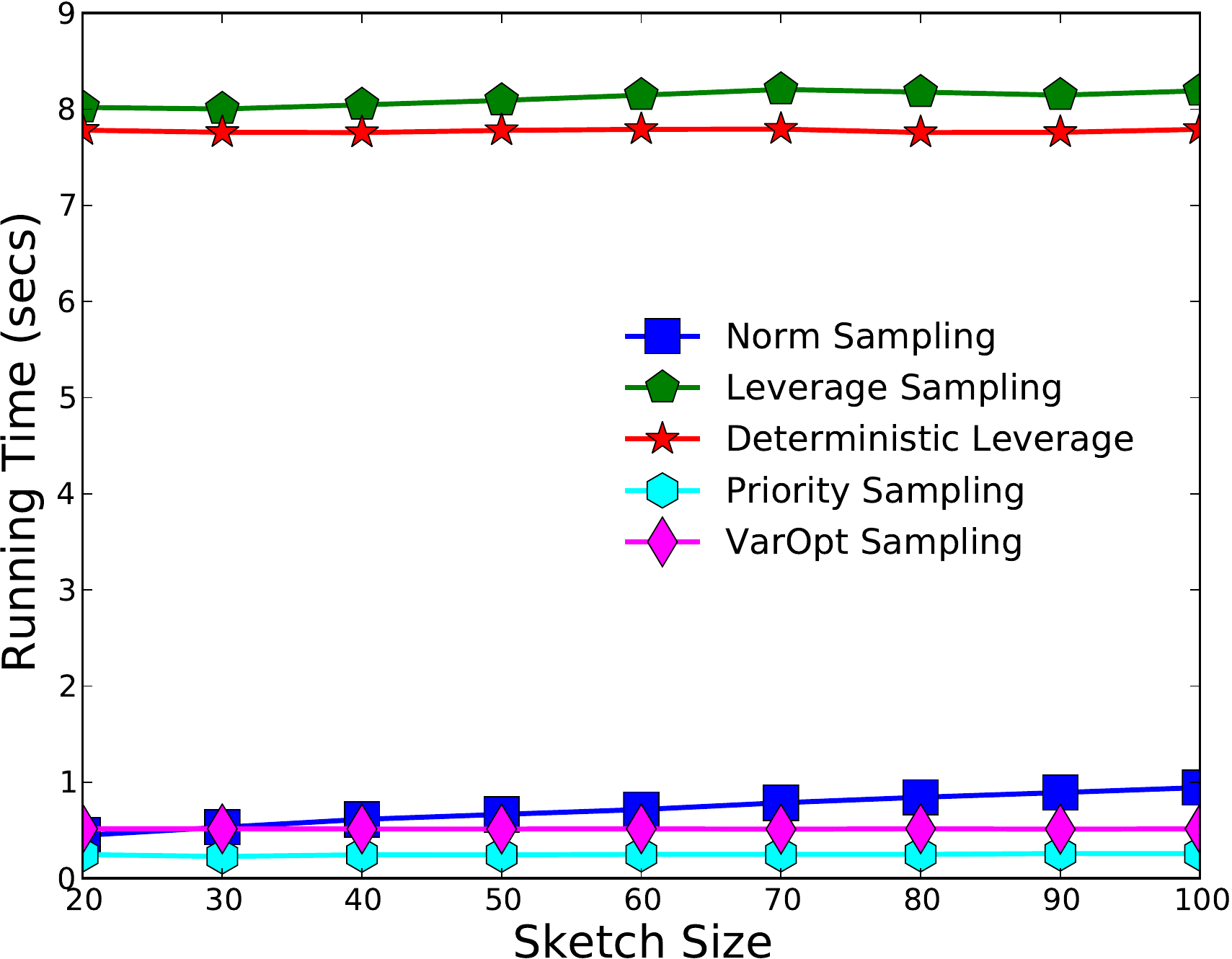}
\caption{\label{fig:samp-alg}Sampling algorithms on \s{Birds}(left), \s{Spam}(middle), and \s{Random Noisy}(30)(right).} 
\end{centering}
\end{figure}

\subsection{Sampling Algorithms}
\label{ssec:samp-eval}

Figure \ref{fig:samp-alg} shows the covariance error, projection error, and runtime for the sampling algorithms as a function of sketch size, run on the \s{Birds}, \s{Spam}, and \s{Random Noisy}(30) datasets with sketch sizes from $\ell=20$ to $100$.  We use parameter $k=10$ for \s{Leverage Sampling}, the same $k$ used to evaluate \s{proj-err}.

First note that \s{Deterministic Leverage} performs quite differently than all other algorithms.  
The error rates can be drastically different: smaller on 
\s{Random Noisy} \s{proj-err} and \s{Birds} \s{proj-err}, while higher on \s{Spam} \s{proj-err} and all \s{cov-err} plots.  The proven guarantees are only for matrices with Zipfian leverage score sequences and \s{proj-err}, and so when this does not hold it can perform worse. But when the conditions are right it outperforms the randomized algorithms since it deterministically chooses the best rows.

Otherwise, there is very small difference between the error performance of all randomized algorithms, within random variation. 
The small difference is perhaps surprising since \s{Leverage Sampling} has a stronger error guarantee, achieving a relative \s{proj-err} bound instead of an additive error of \s{Norm Sampling}, \s{Priority Sampling} and \s{VarOpt Sampling} which only use the row norms.  
Moreover \s{Leverage Sampling} and \s{Deterministic Leverage Sampling} are significantly slower than the other approaches since they require first computing the SVD and leverage scores.  We note that if $\|A - A_k\|_F^2 > c \|A\|_F^2$ for a large enough constant $c$, then for that choice of $k$, the tail is effectively fat, and thus not much is gained by the relative error bounds.  Moreover, \s{Leverage Sampling} bounds are only stronger than \s{Norm Sampling} in a variant of \s{proj-err} where $[\pi_B(A)]_k$ (with best rank $k$ applied \emph{after} projection) instead of $\pi_{B_k}(A)$, and \s{cov-err} bounds are only known (see Appendix \ref{app:rs-convert}) under some restrictions for \s{Leverage Sampling}, while unrestricted for the other randomized sampling algorithms.  

\subsection{Projection Algorithms}
\label{ssec:proj-eval}

Figure \ref{fig:proj-alg} plots the covariance and projection error, as well as the runtime for various sketch sizes of $20$ to $100$ for the projection algorithms.  

Otherwise, there were two clear classes of algorithms.  For the same sketch size, \s{Hashing} and \s{OSNAP} perform a bit worse on projection error (most clearly on \s{Noisy Random}), and roughly the same in covariance error, compared to \s{Random Projections} and \s{Fast JLT}.  Note that \s{Fast JLT} seems consistently better than others in \s{cov-err}, but we have chosen the best $q$ parameter (sampling rate) by trial and error, so this may give an unfair advantage.  Moreover, \s{Hashing} and \s{OSNAP} also have significantly faster runtime, especially as the sketch size grows.  While \s{Random Projections} and \s{Fast JLT} appear to grow in time roughly linearly with sketch size, \s{Hashing} and \s{OSNAP} are basically constant.  
Section \ref{ssec:global-eval} on larger datasets and sketch sizes, shows that if the size of the sketch is not as important as runtime, \s{Hashing} and \s{OSNAP} have the advantage.  

\begin{figure}[t!]
\begin{centering}
{\tiny \textsf{Birds} \hspace{47mm} \textsf{Spam} \hspace{44mm} \textsf{Random Noisy}} 
\\
\includegraphics[width=.305\linewidth]{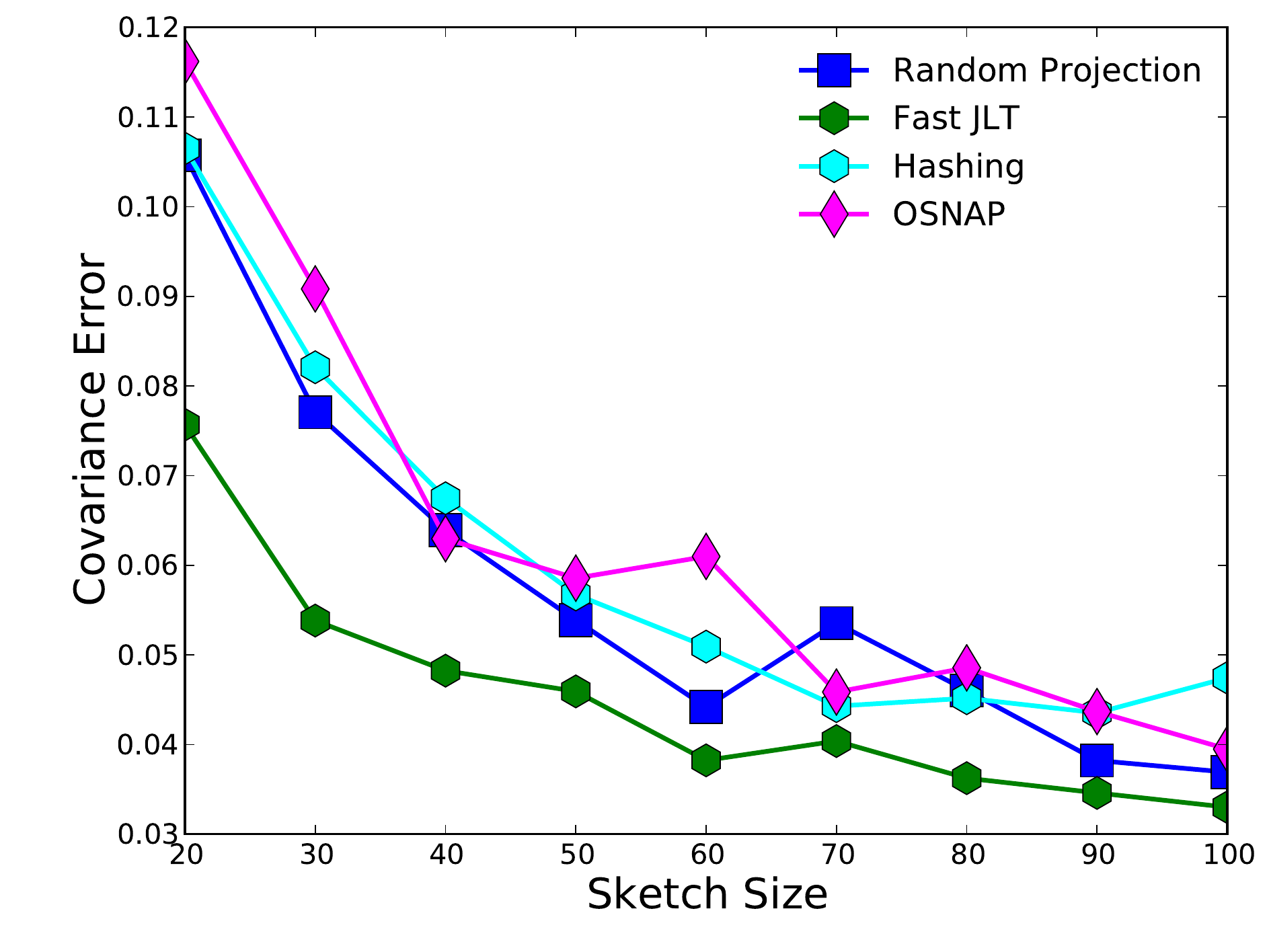}
\includegraphics[width=.305\linewidth]{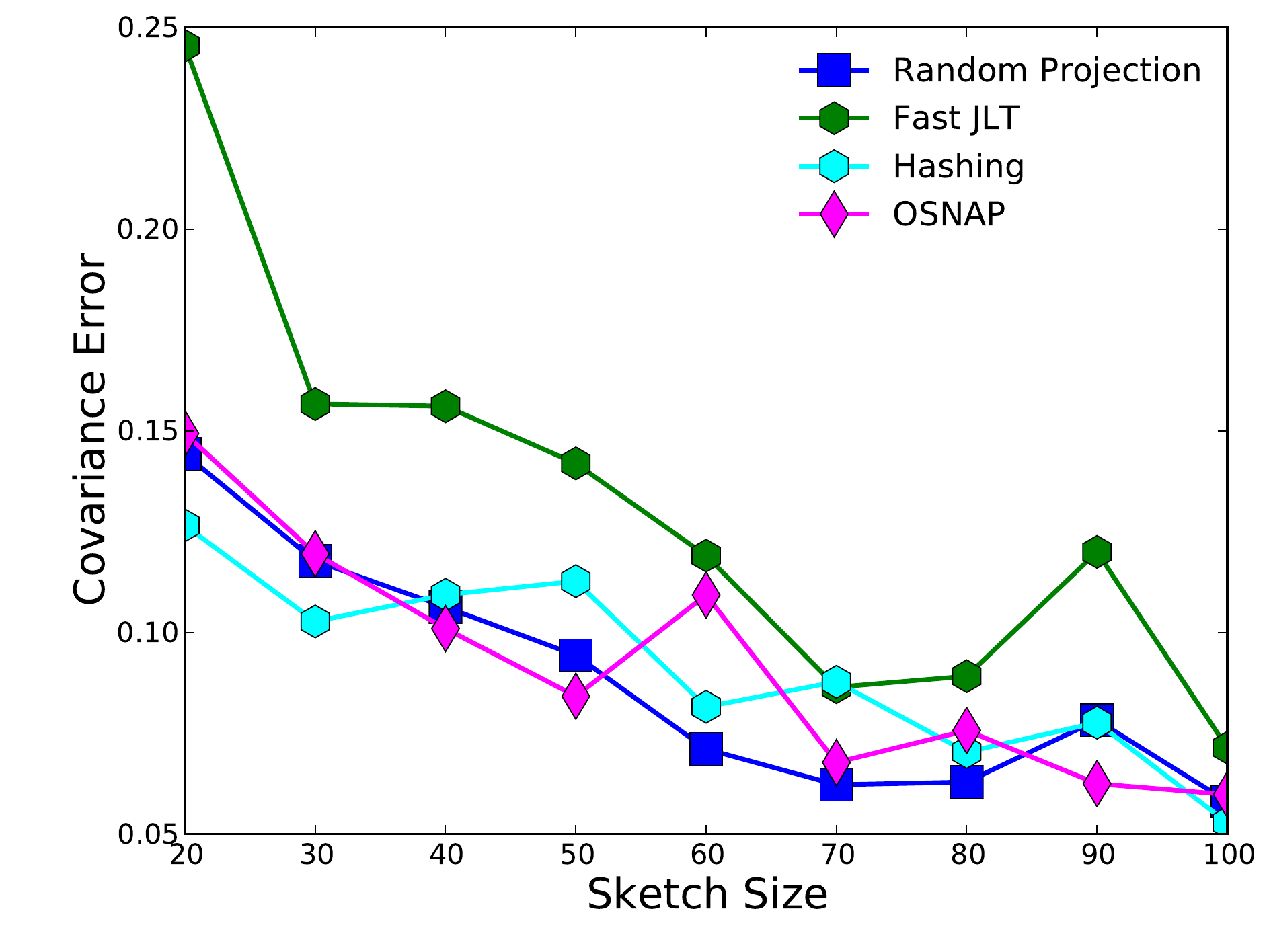}
\includegraphics[width=.305\linewidth]{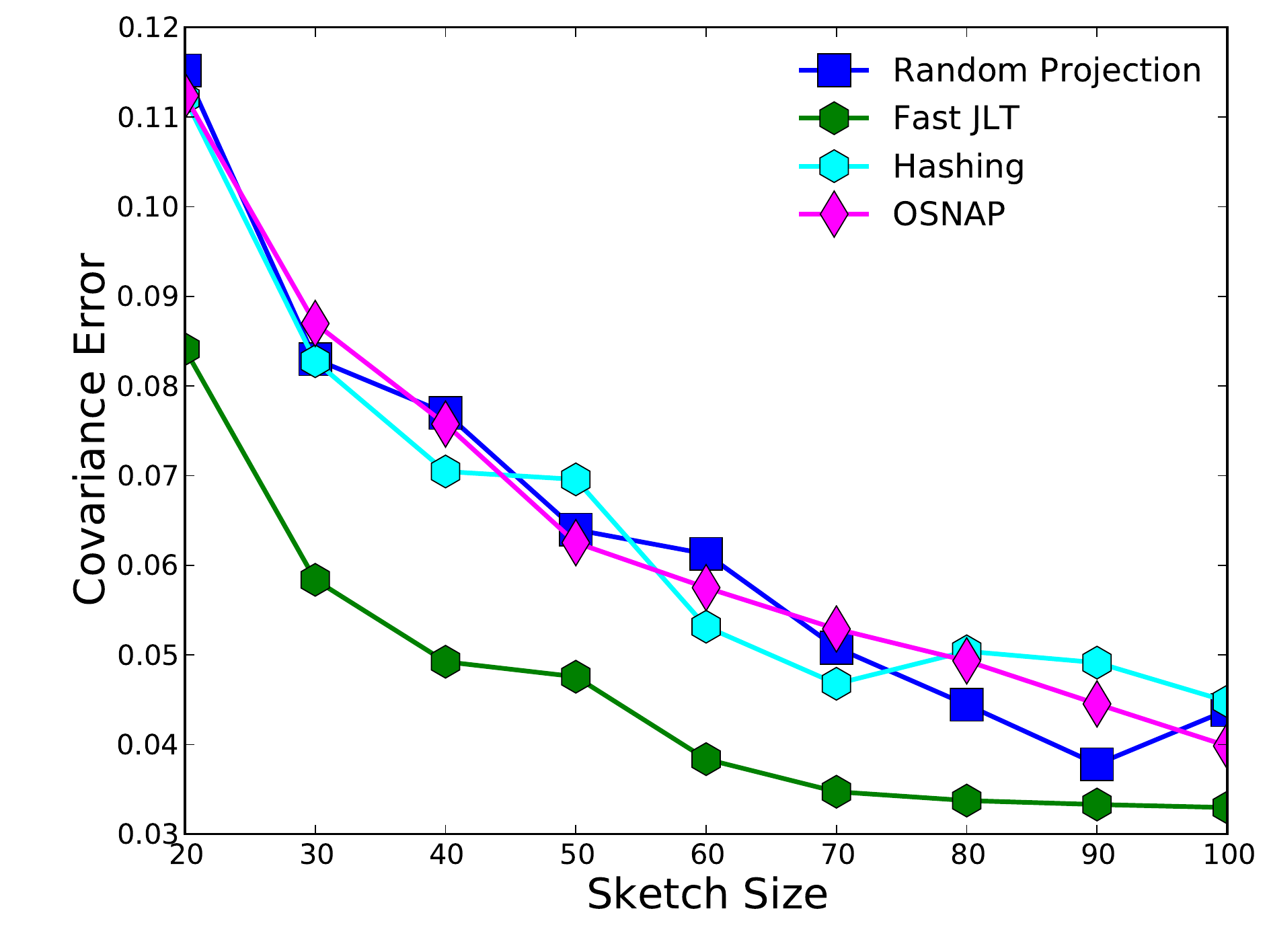}

\includegraphics[width=.305\linewidth]{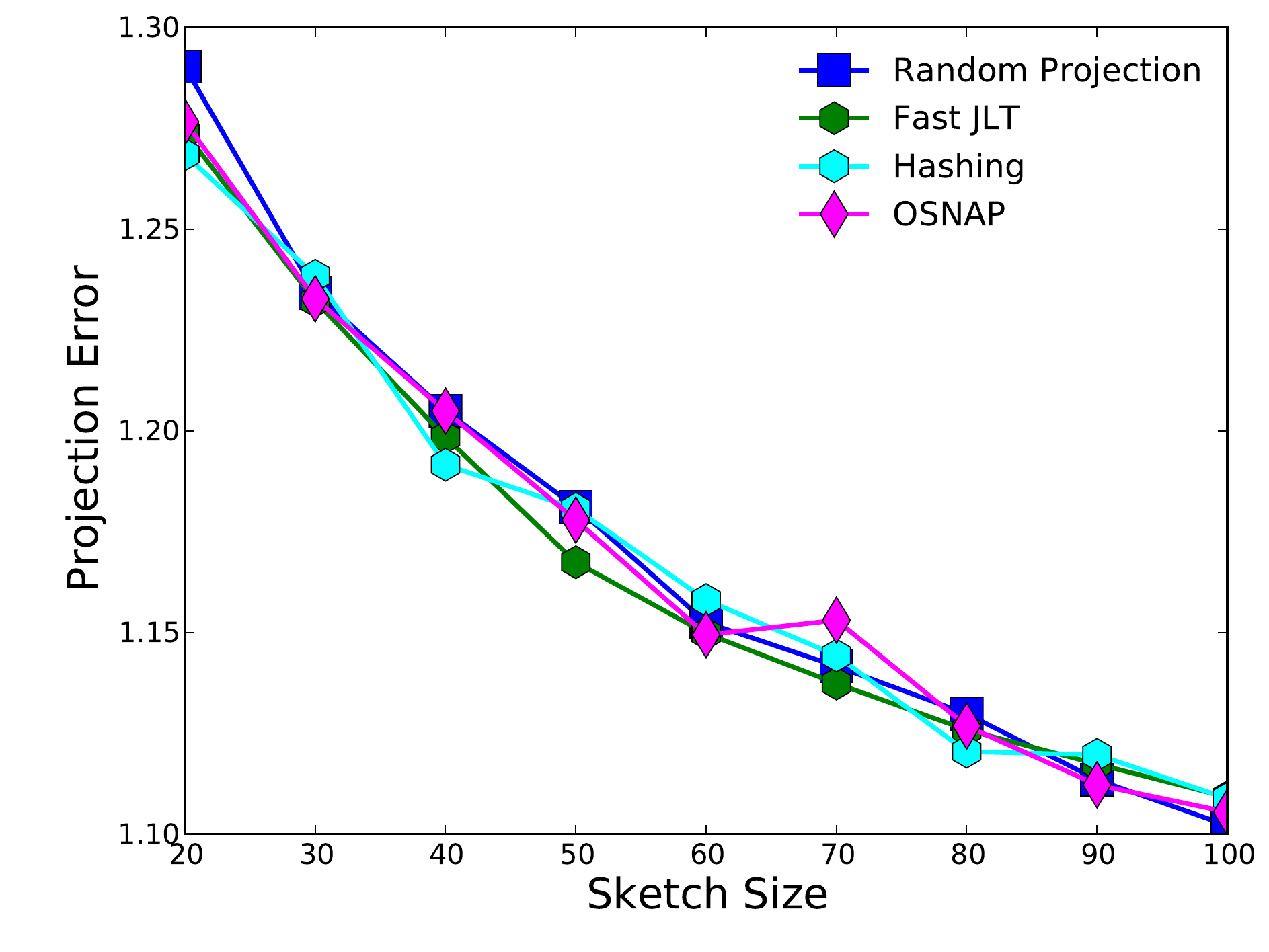}
\includegraphics[width=.305\linewidth]{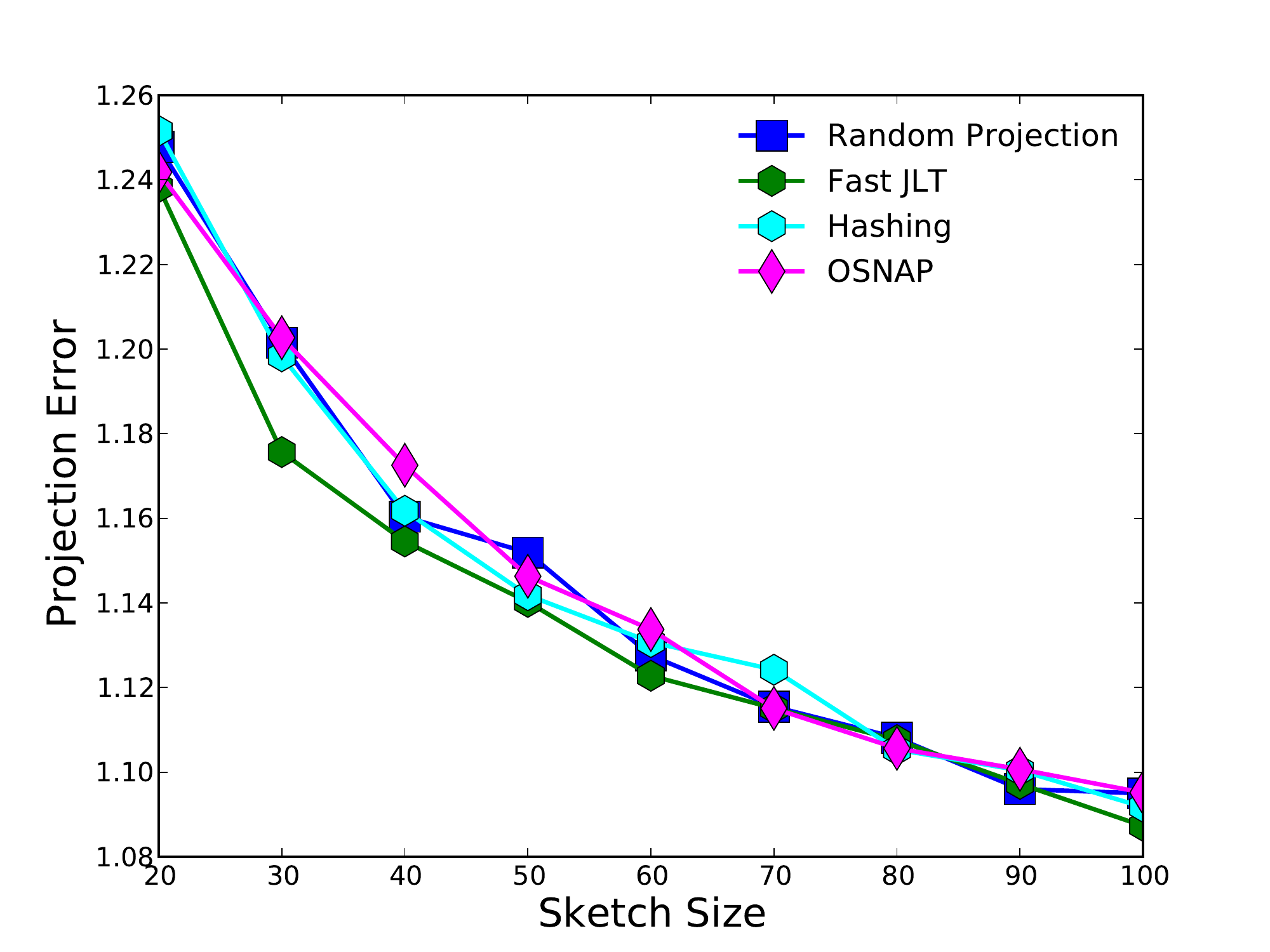}
\includegraphics[width=.305\linewidth]{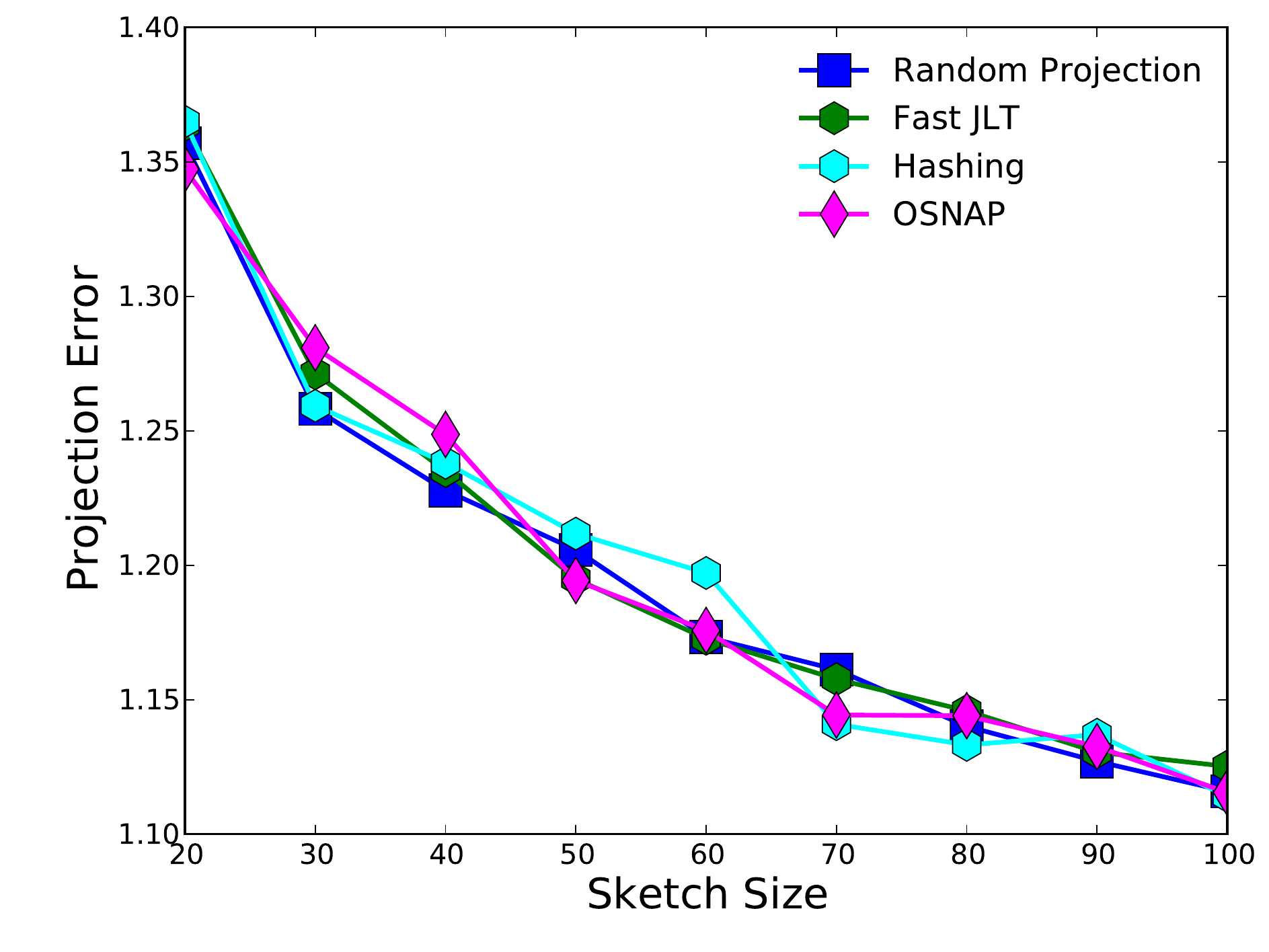}

\includegraphics[width=.305\linewidth]{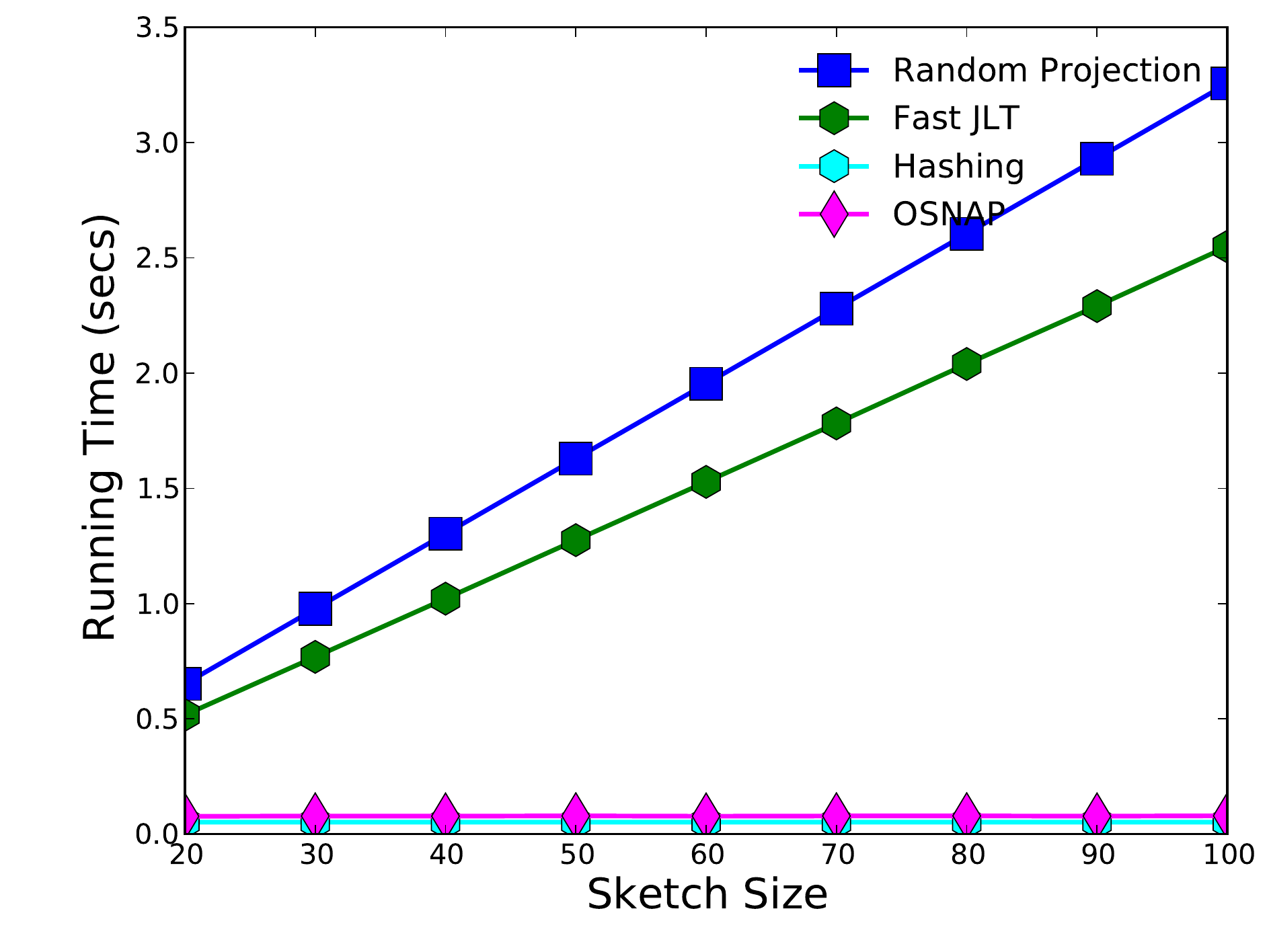}
\includegraphics[width=.305\linewidth]{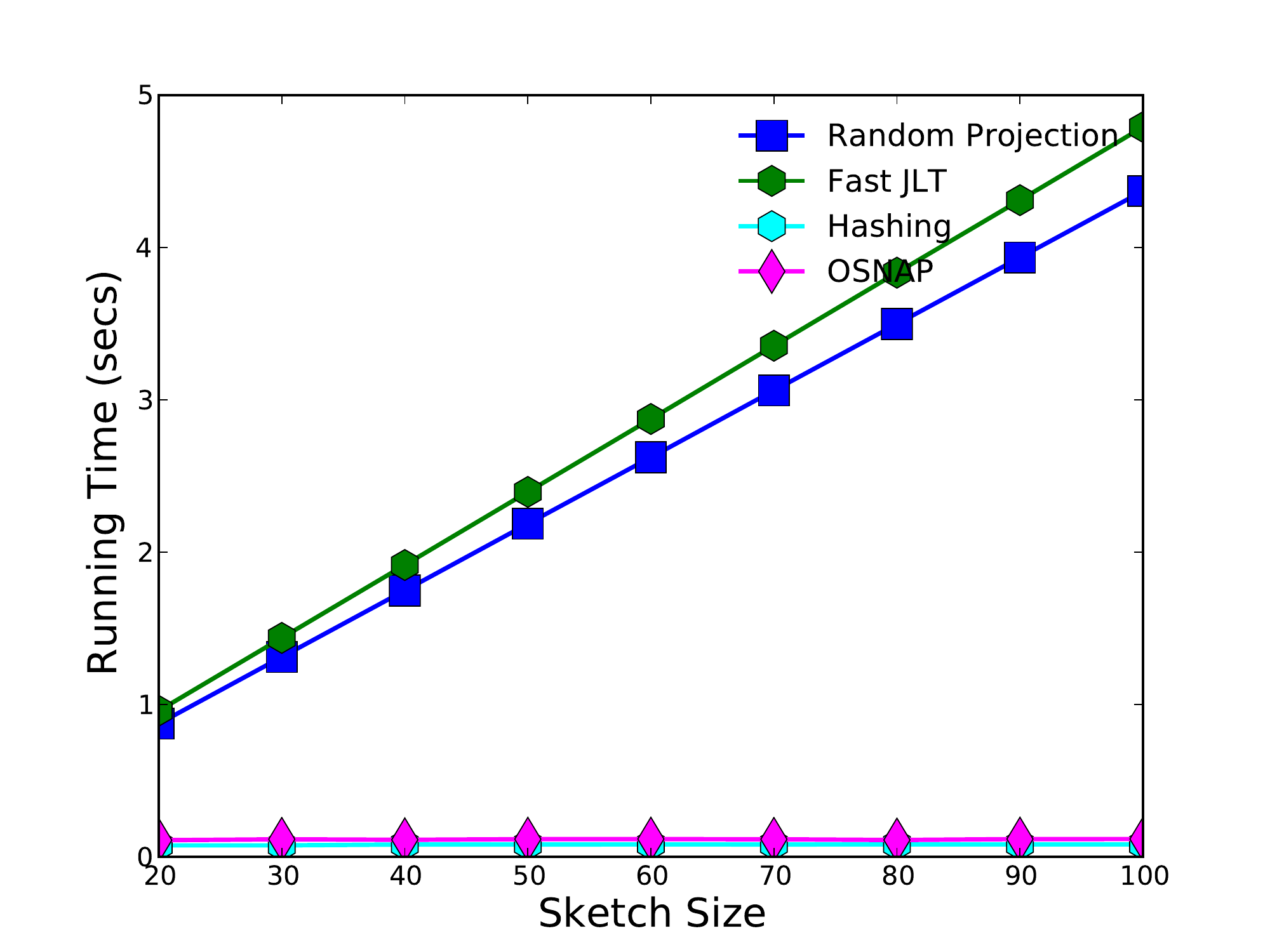}
\includegraphics[width=.305\linewidth]{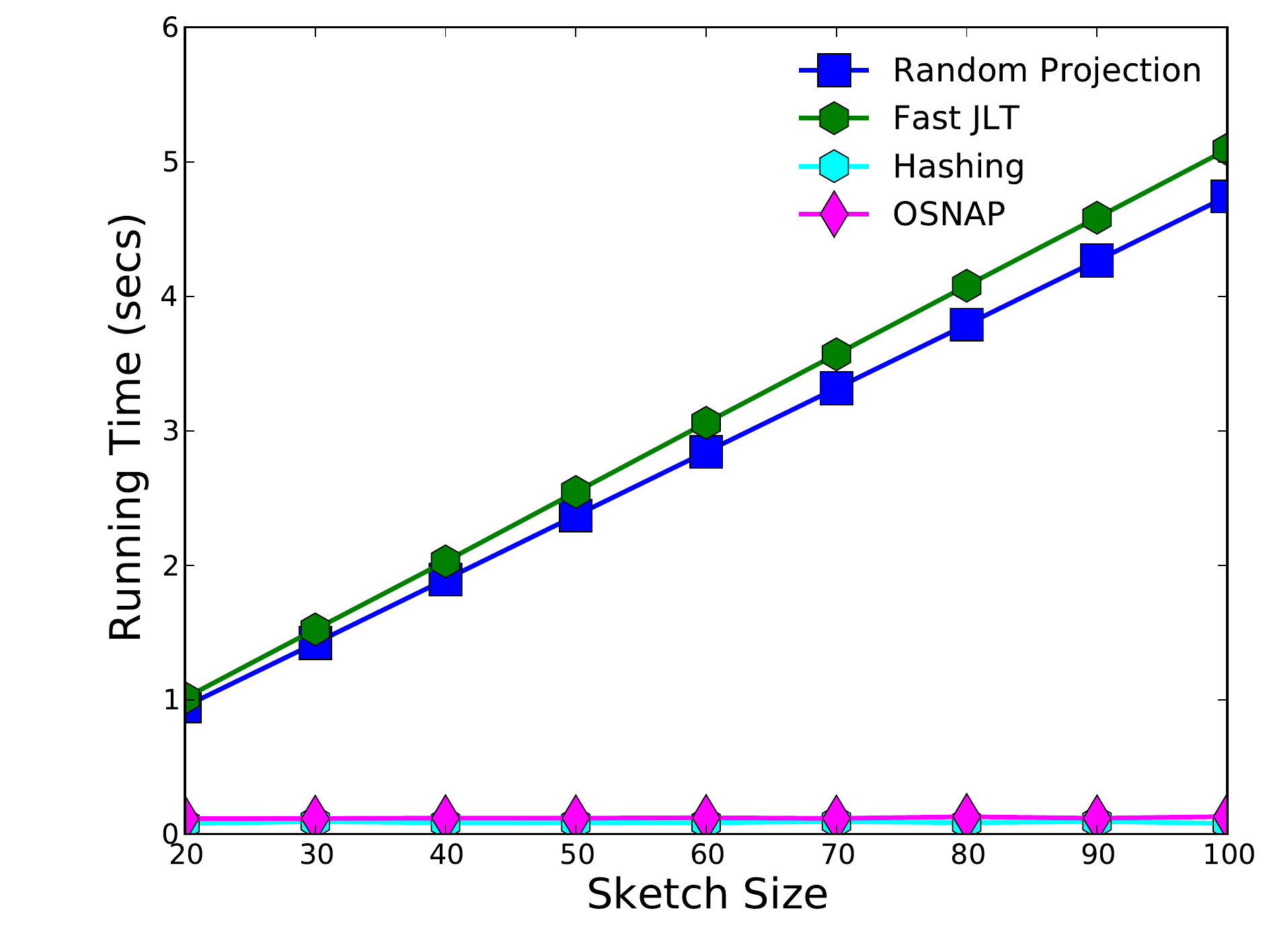}
\caption{\label{fig:proj-alg}Projection algorithms on \s{Birds}(left), \s{Spam}(middle), and \s{Random Noisy}(30)(right).} 
\end{centering}
\end{figure}


\subsection{Iterative Algorithms}

Here we consider variants of \FD.  We first explore the $\alpha$ parameter in Parametrized \FD, writing each version as $\alpha$-\FD. Then we compare against all of the other variants using explores from Parametrized \FD.  

In Figure \ref{fig:PFD} and \ref{fig:PFD-m}, we explore the effect of the parameter $\alpha$, and run variants with $\alpha \in \{0.2, 0.4, 0.6, 0.8\}$, comparing against \FD ($\alpha = 1$) and \iSVD ($\alpha = 0$).   Note that the guaranteed error gets worse for smaller $\alpha$, so performance being equal, it is preferable to have larger $\alpha$.  
Yet, we observe empirically on datasets \s{Birds}, \s{Spam}, and \s{Random Noisy} that \FD is consistently the worst algorithm, and \iSVD is fairly consistently the best, and as $\alpha$ decreases, the observed error improves.  The difference can be quite dramatic; for instance in the \s{Spam} dataset, for $\ell=20$, \FD has \s{err} = 0.032 while \iSVD and $0.2$-\FD have \s{err} = 0.008.  Yet, as $\ell$ approaches $100$, all algorithms seems to be approaching the same small error.   
In Figure \ref{fig:PFD-m}, we explore the effect of $\alpha$-\FD 
on \s{Random Noisy} data by varying $m \in \{10,20,50\}$, and $m=30$ in Figure \ref{fig:PFD}.  
We observe that all algorithms get smaller error for smaller $m$ (there are fewer ``directions'' to approximate), but that each $\alpha$-\FD variant reaches $0.005$ \s{err} before $\ell=100$, sooner for smaller $\alpha$; eventually ``snapping'' to a smaller $0.002$ \s{err} level.

\begin{figure}[t!]
\begin{centering}
\includegraphics[width=0.97\figsize]{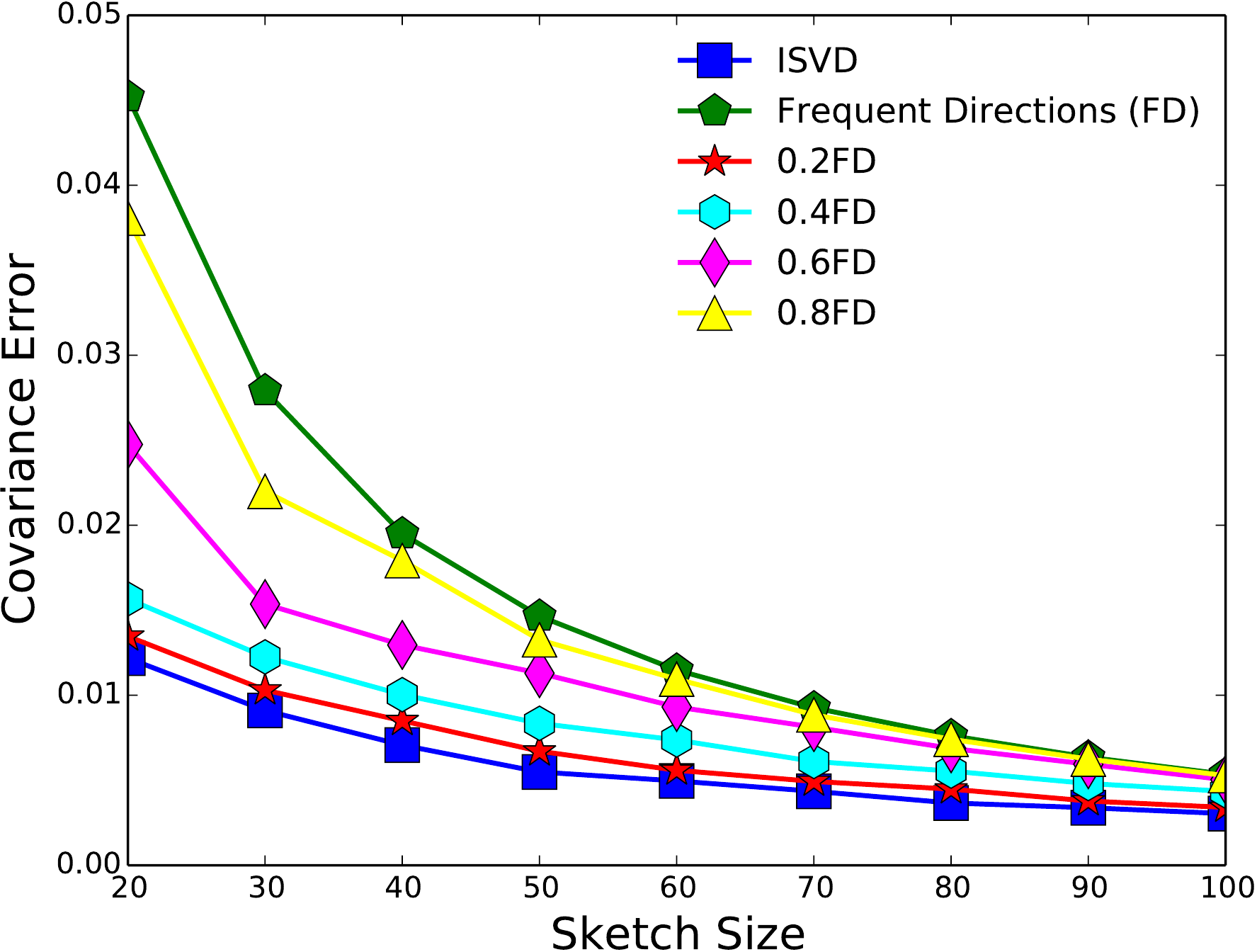}
\includegraphics[width=1.08\figsize]{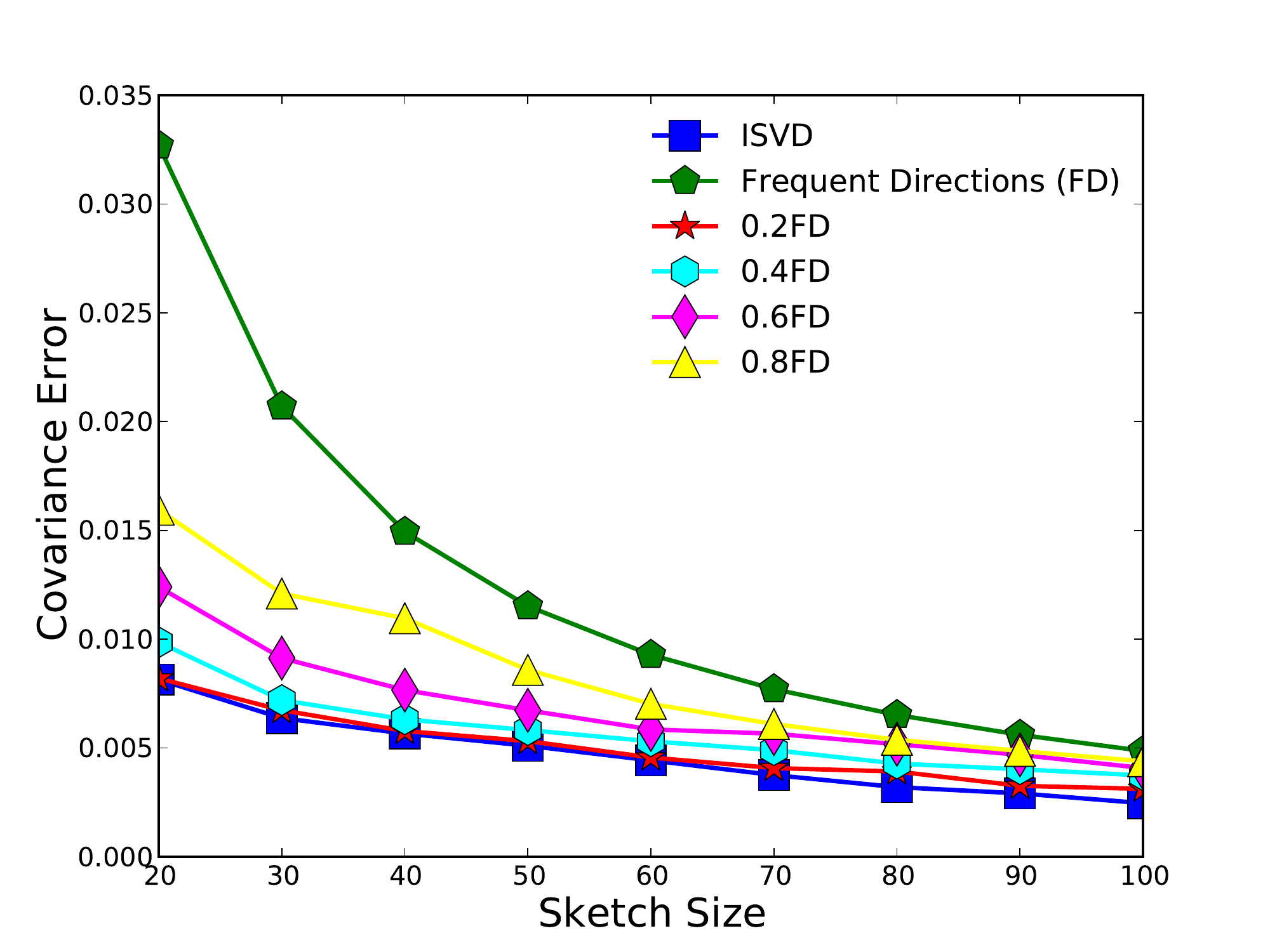}
\includegraphics[width=0.97\figsize]{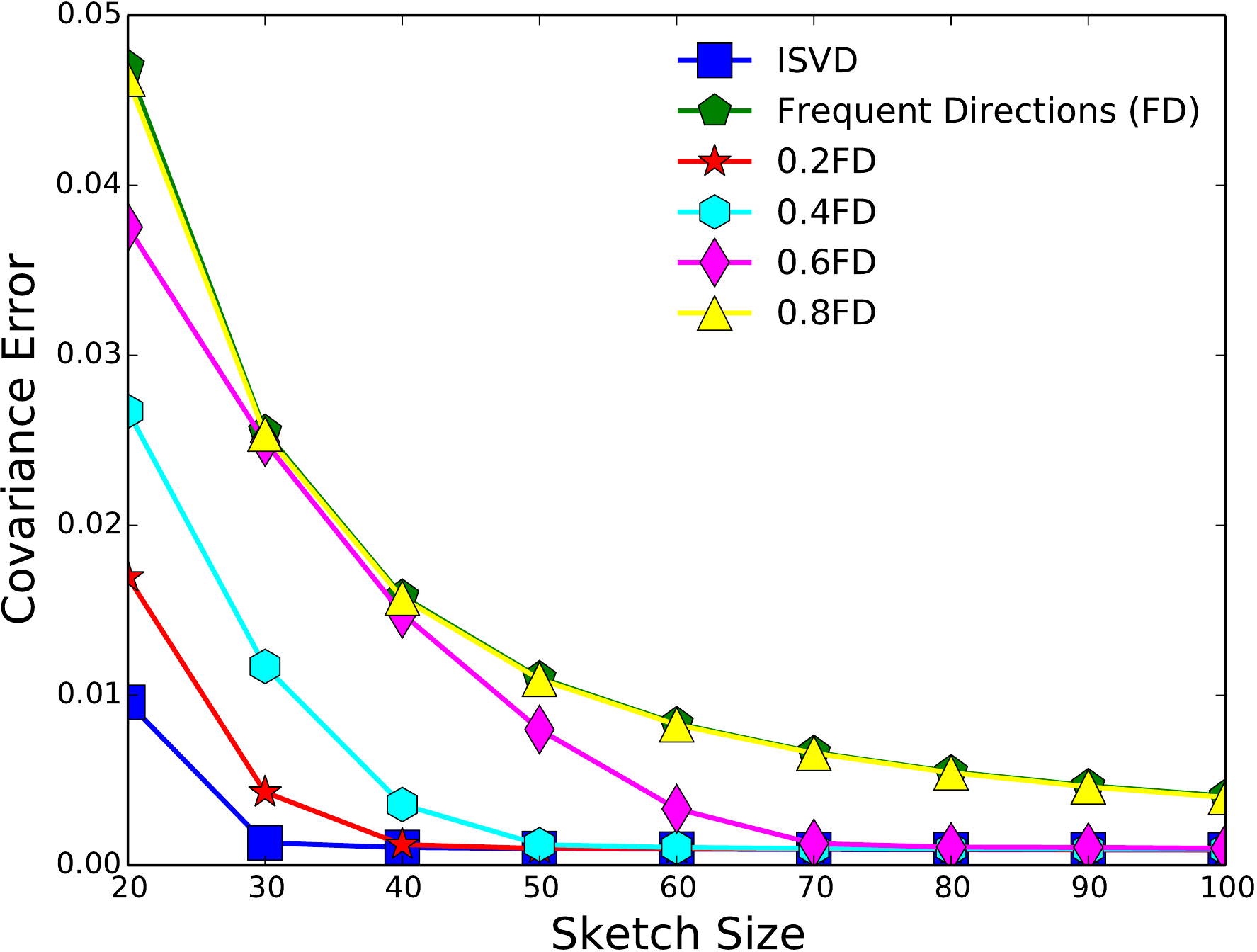}
\vspace{-4mm}
\caption{\label{fig:PFD}
Parametrized \FD on \s{Birds} (left), and \s{Spam} (middle), \s{Random Noisy}(30) (right).}  
\vspace{-.1in}
\end{centering}
\end{figure}

\begin{figure}[t!]
\begin{centering}
\includegraphics[width=\figsize]{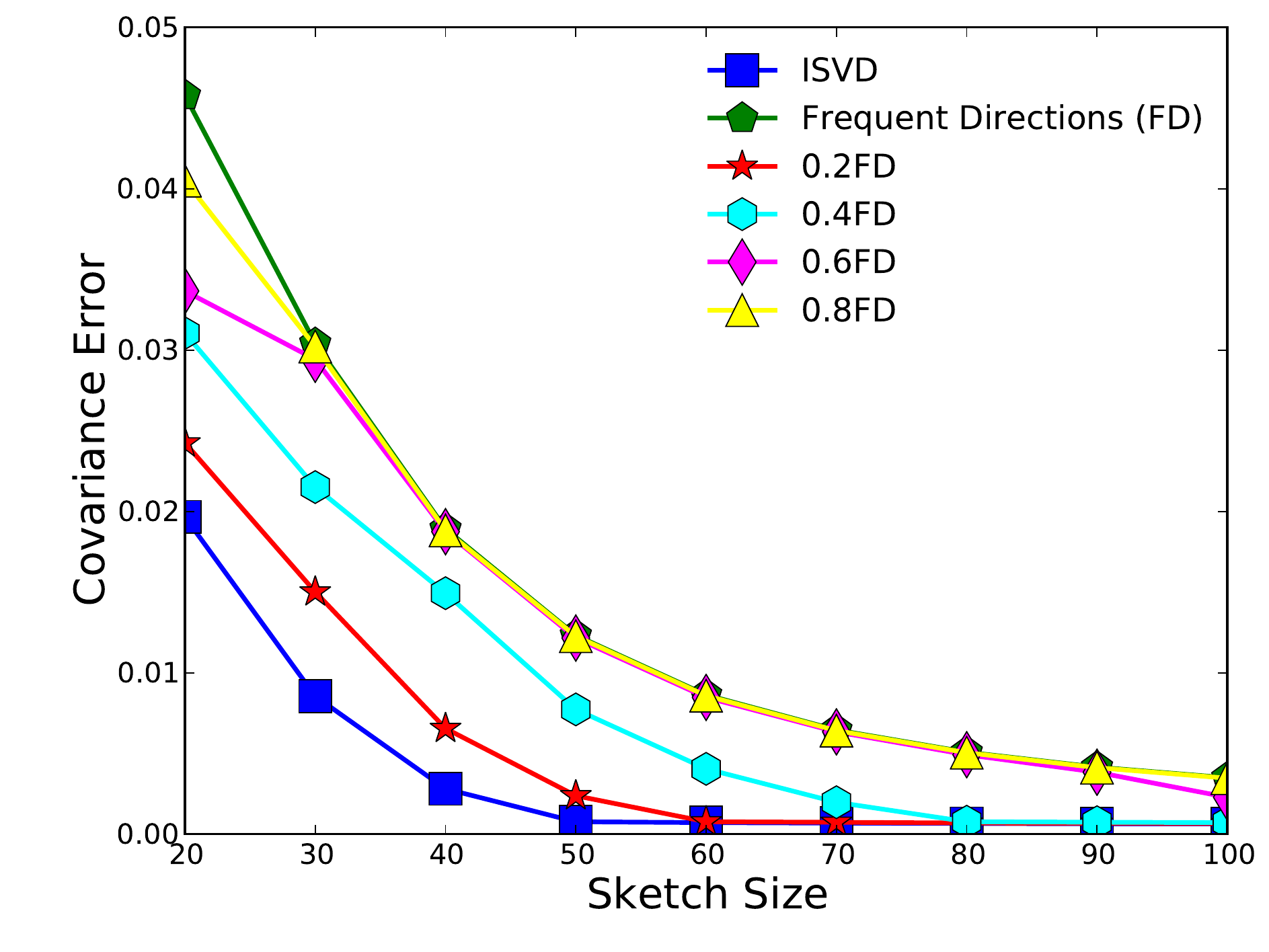}
\includegraphics[width=\figsize]{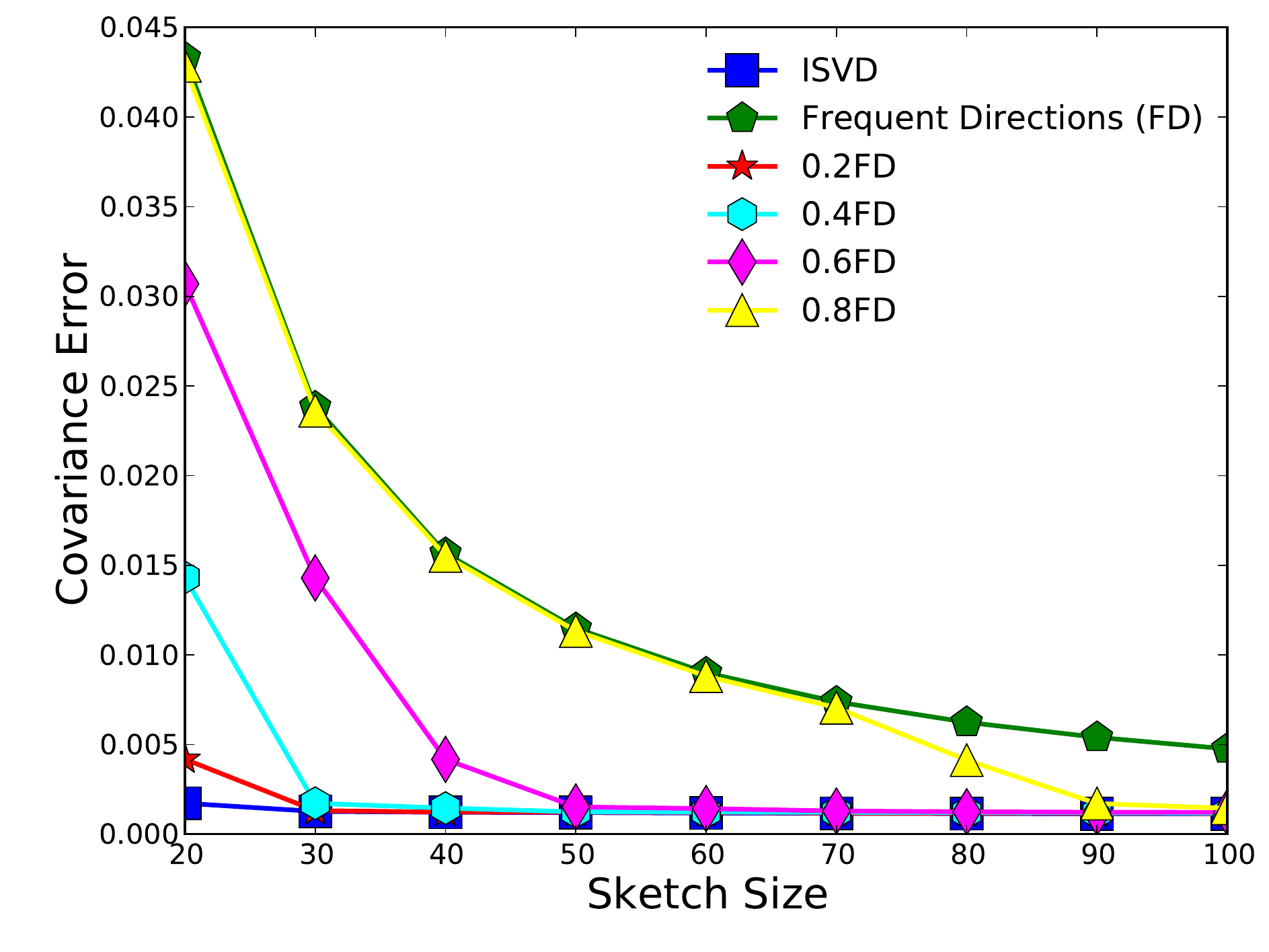}
\includegraphics[width=\figsize]{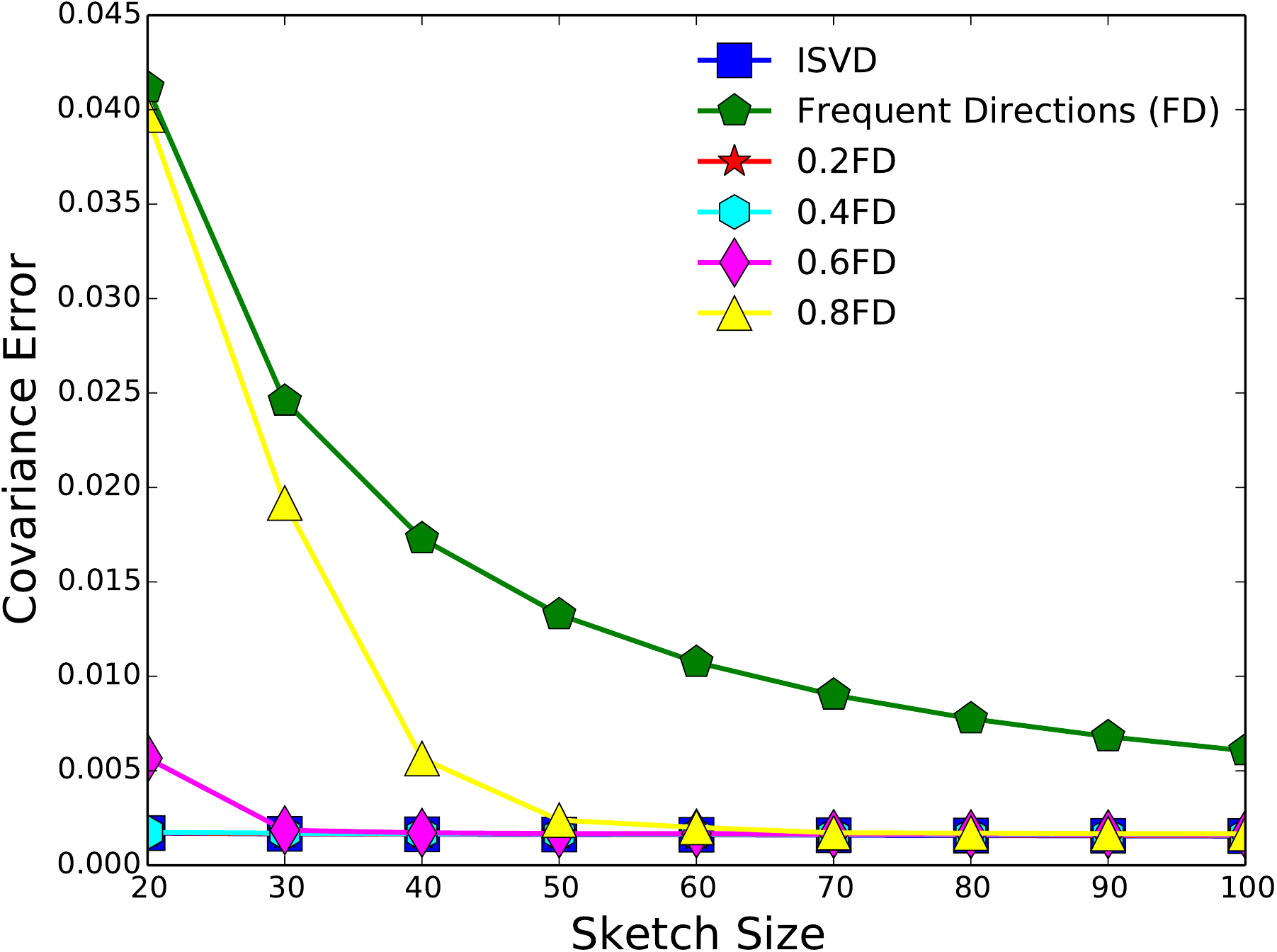}
\vspace{-4mm}
\caption{\label{fig:PFD-m}
Parametrized \FD on \s{Random Noisy} for $m=50$ (left), $20$ (middle), $10$ (right).}  
\vspace{-.1in}
\end{centering}
\end{figure}

Next in Figure \ref{fig:iter-alg}, we compare \iSVD, \FD, and $0.2$-\FD with two groups of variants: one based on SS streaming algorithm (\CFD and \SSD) and another based on Fast \FD.  We see that \CFD and \SSD typically perform slightly better than \FD in \s{cov-err} and same or worse in \s{proj-err}, but not nearly as good as $0.2$-\FD and \iSVD. 
Perhaps it is surprising that although SpaceSavings variants empirically improve upon MG variants for frequent items, $0.2$-\FD (based on MG) can largely outperform all the SS variants on matrix sketching.  


All variants achieve a very small error, but $0.2$-\s{FD}, \iSVD, and \s{Fast $0.2$-\FD} consistently matches or outperforms others in both \s{cov-err} and \s{proj-err} while \s{Fast \FD} incurs more error compare to other algorithms. 
We also observe that \s{Fast \FD} and \s{Fast $0.2$-\FD} are significantly (sometimes $10$ times) faster than \FD, \iSVD, and \s{$0.2$-\FD}.  \s{Fast \FD} takes less time, sometimes half as much compared to \s{Fast $0.2$-\FD}, however, given its much smaller error \s{Fast $0.2$-\FD} seems to have the best all-around performance.

\begin{figure}[t!]
\begin{centering}
{\tiny \textsf{Birds} \hspace{47mm} \textsf{Spam} \hspace{44mm} \textsf{Random Noisy}} 
\\
\includegraphics[width=\figsize]{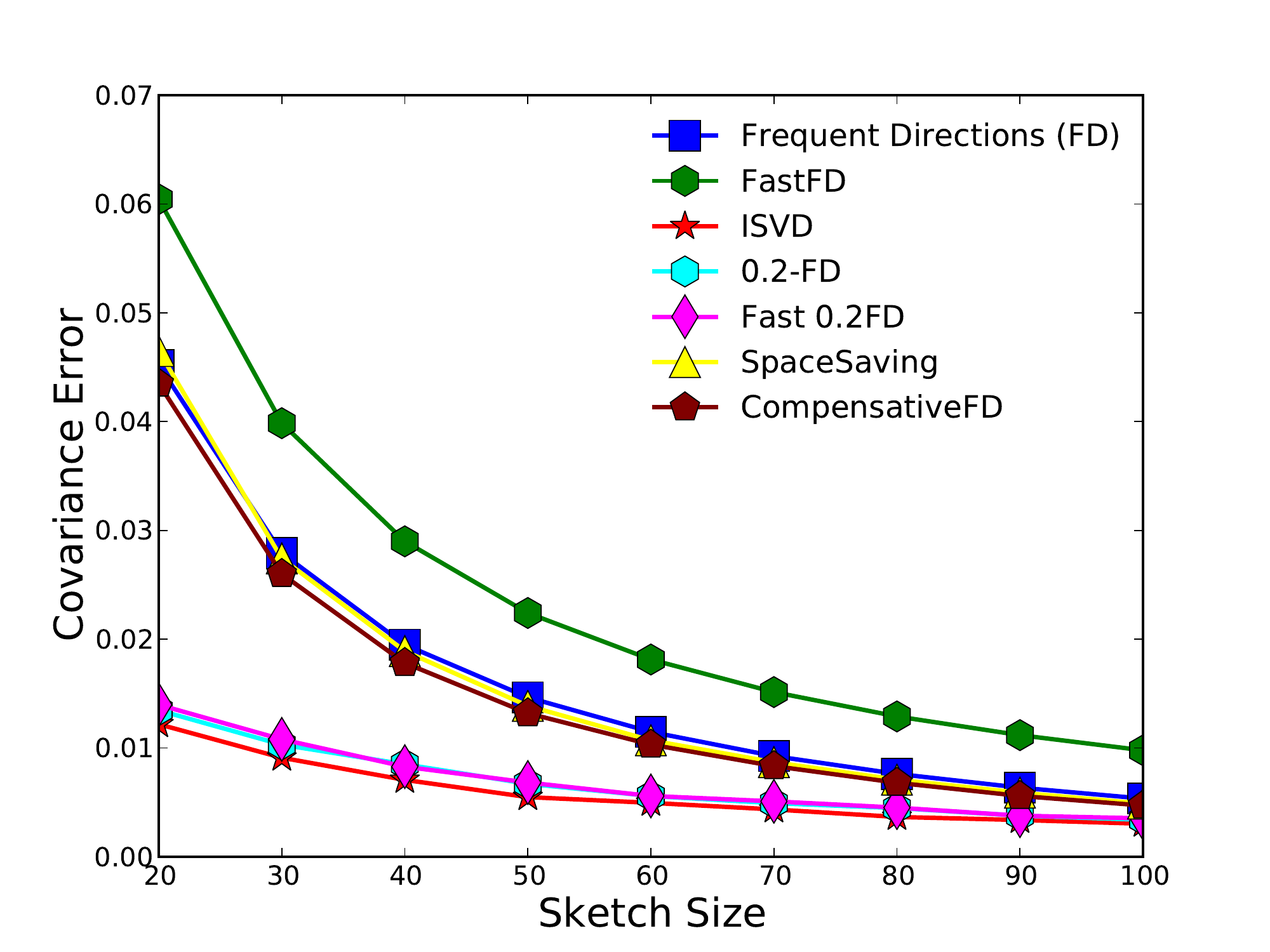}
\includegraphics[width=.33\linewidth]{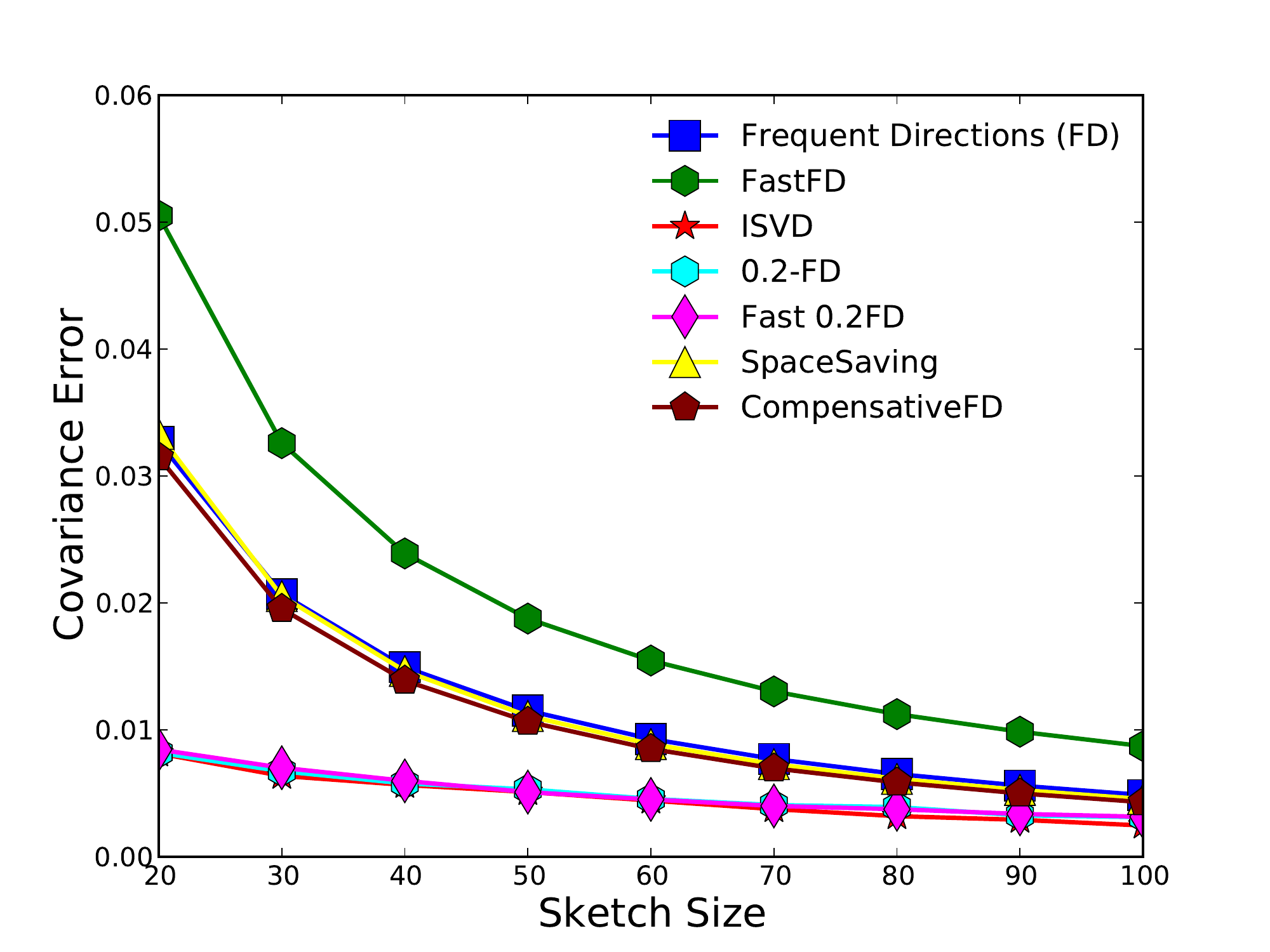}
\includegraphics[width=.32\linewidth]{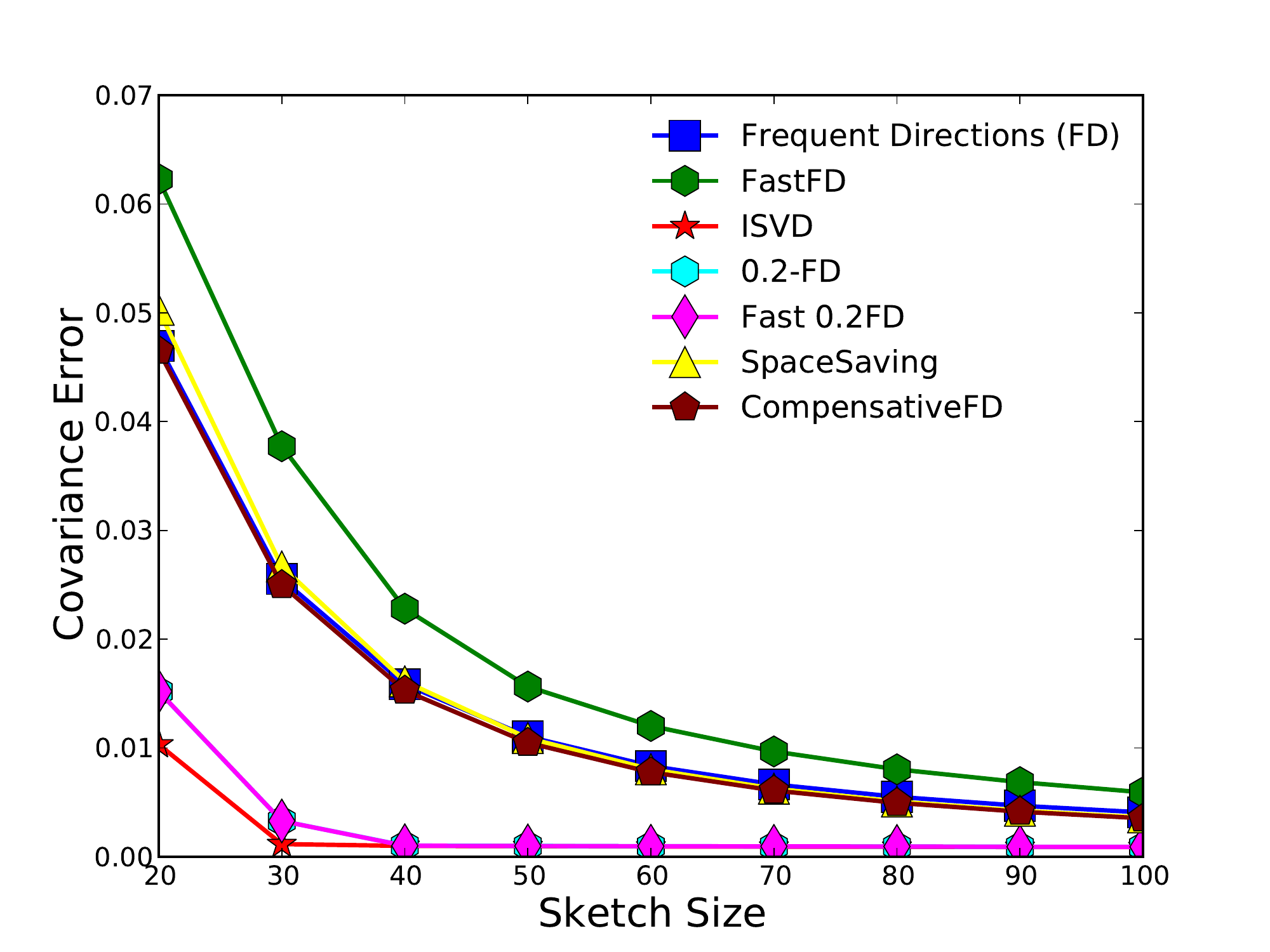}

\includegraphics[width=.32\linewidth]{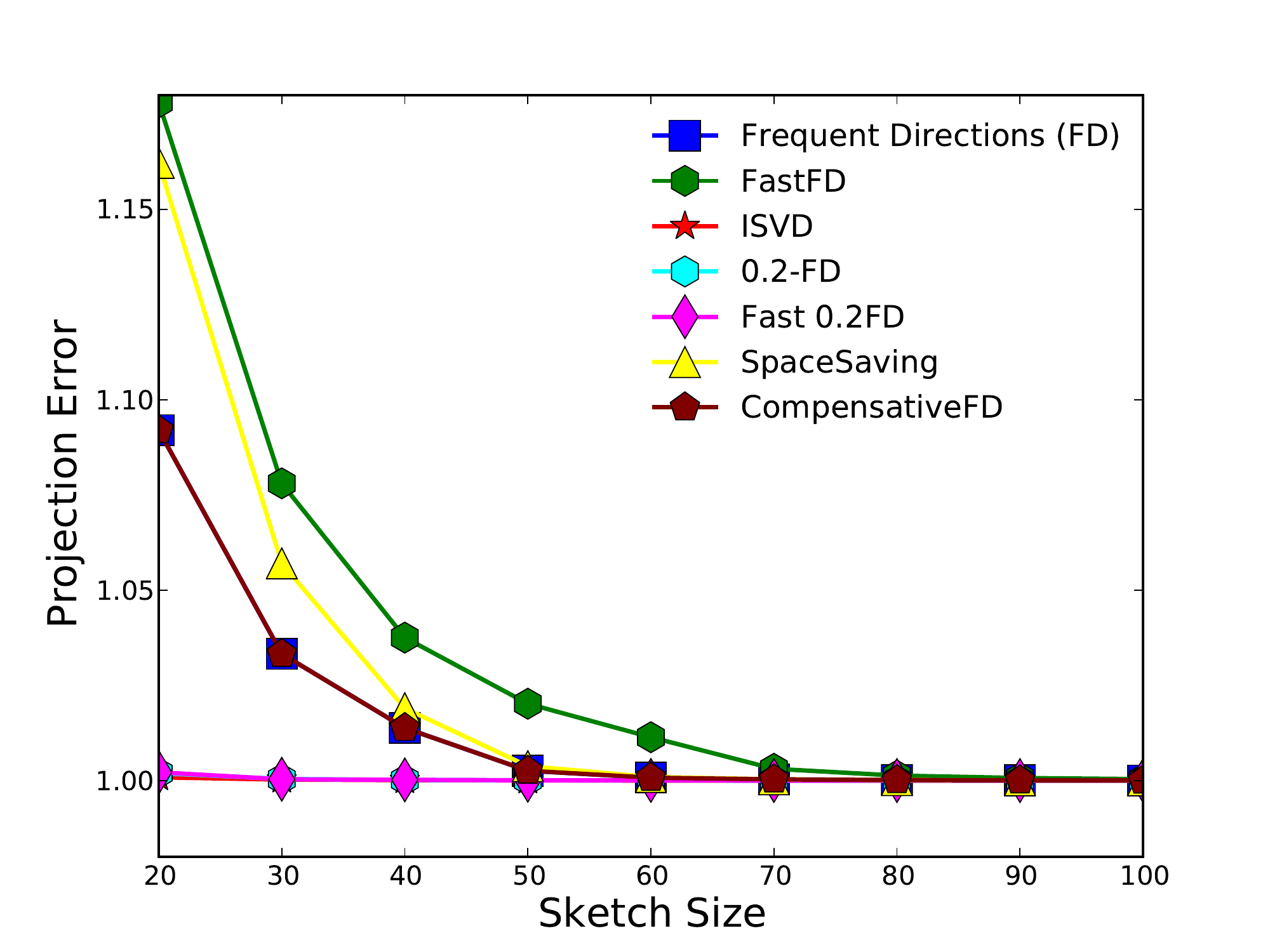}
\includegraphics[width=.33\linewidth]{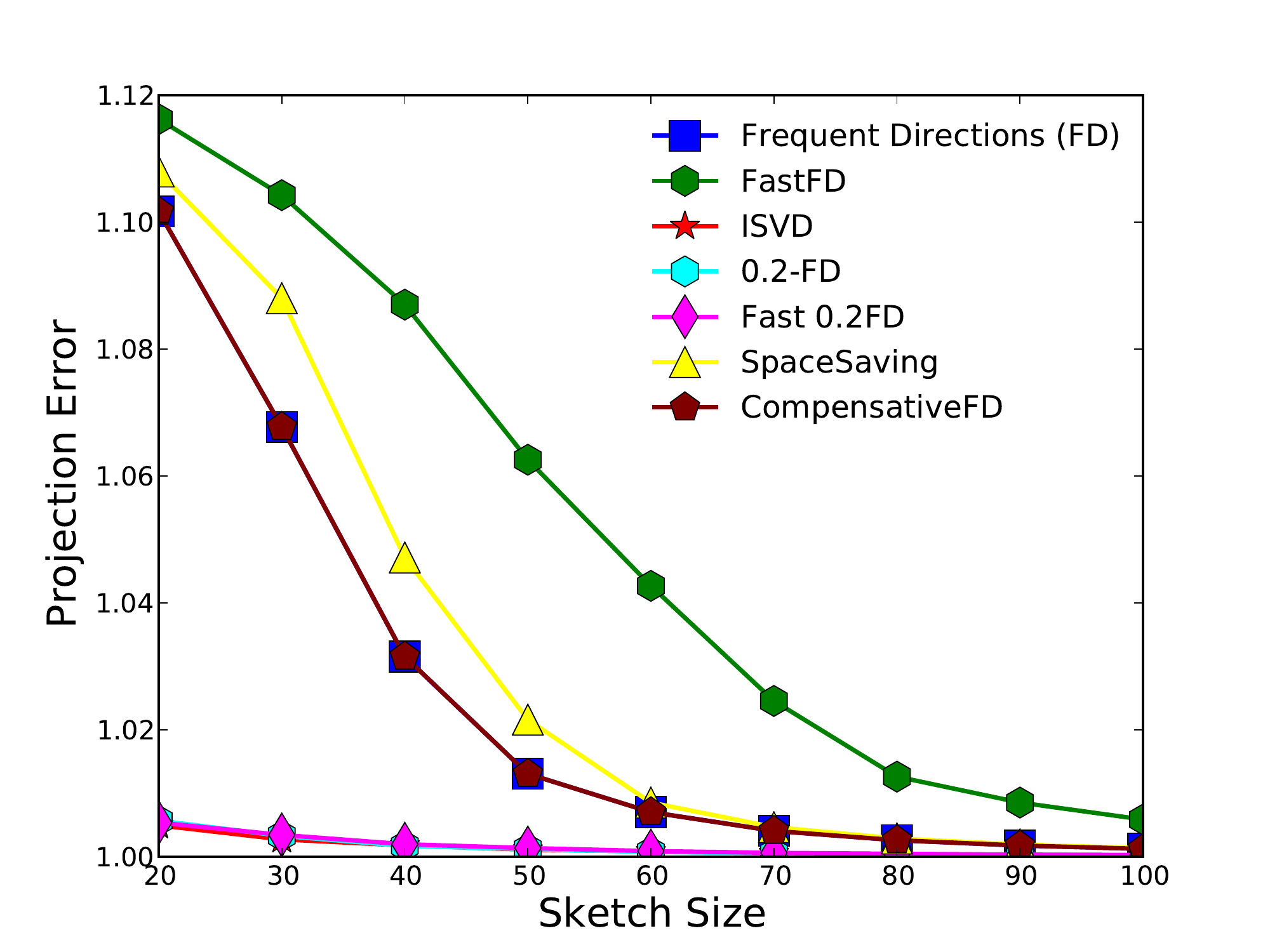}
\includegraphics[width=.32\linewidth]{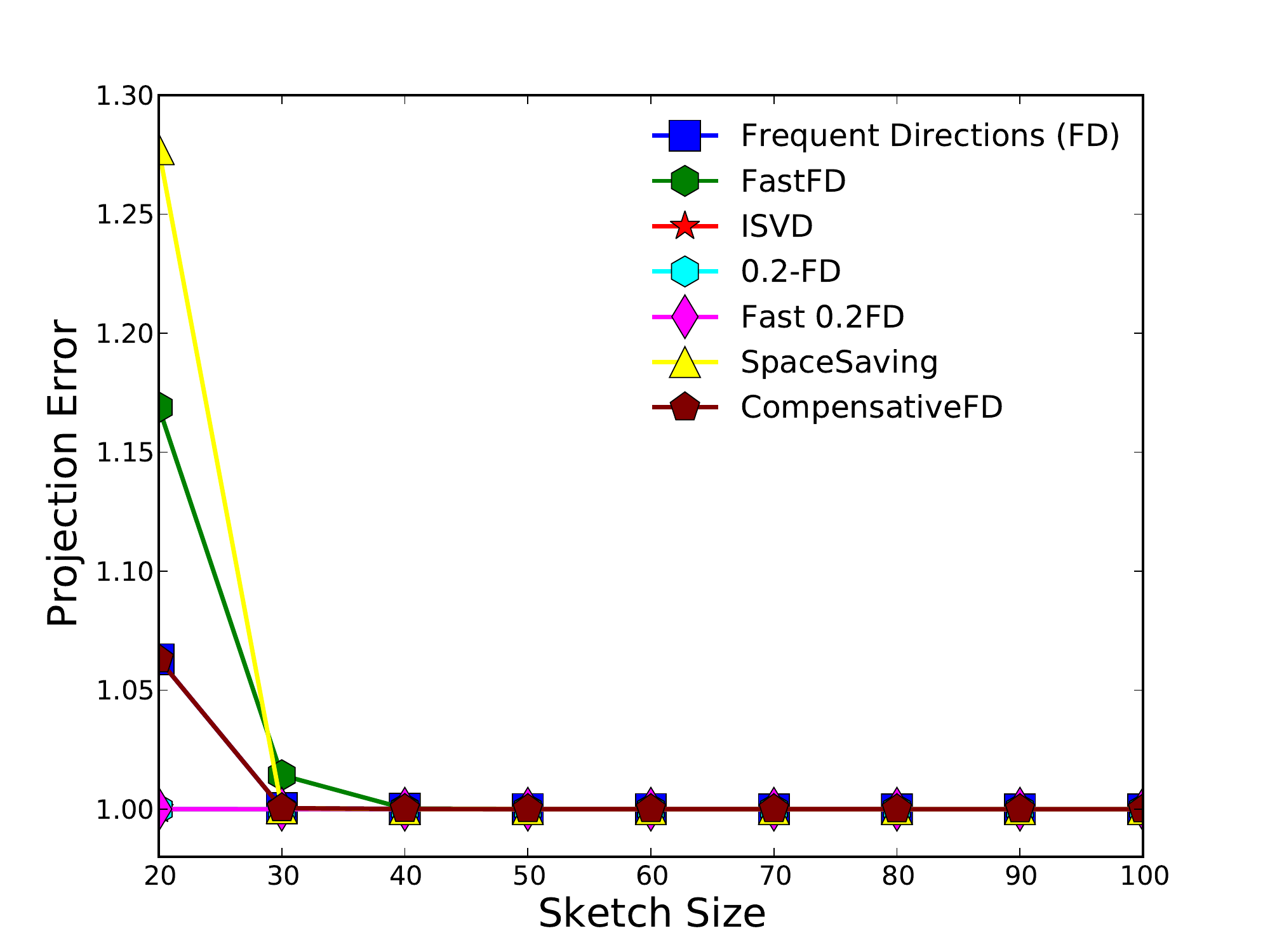}

\includegraphics[width=.32\linewidth]{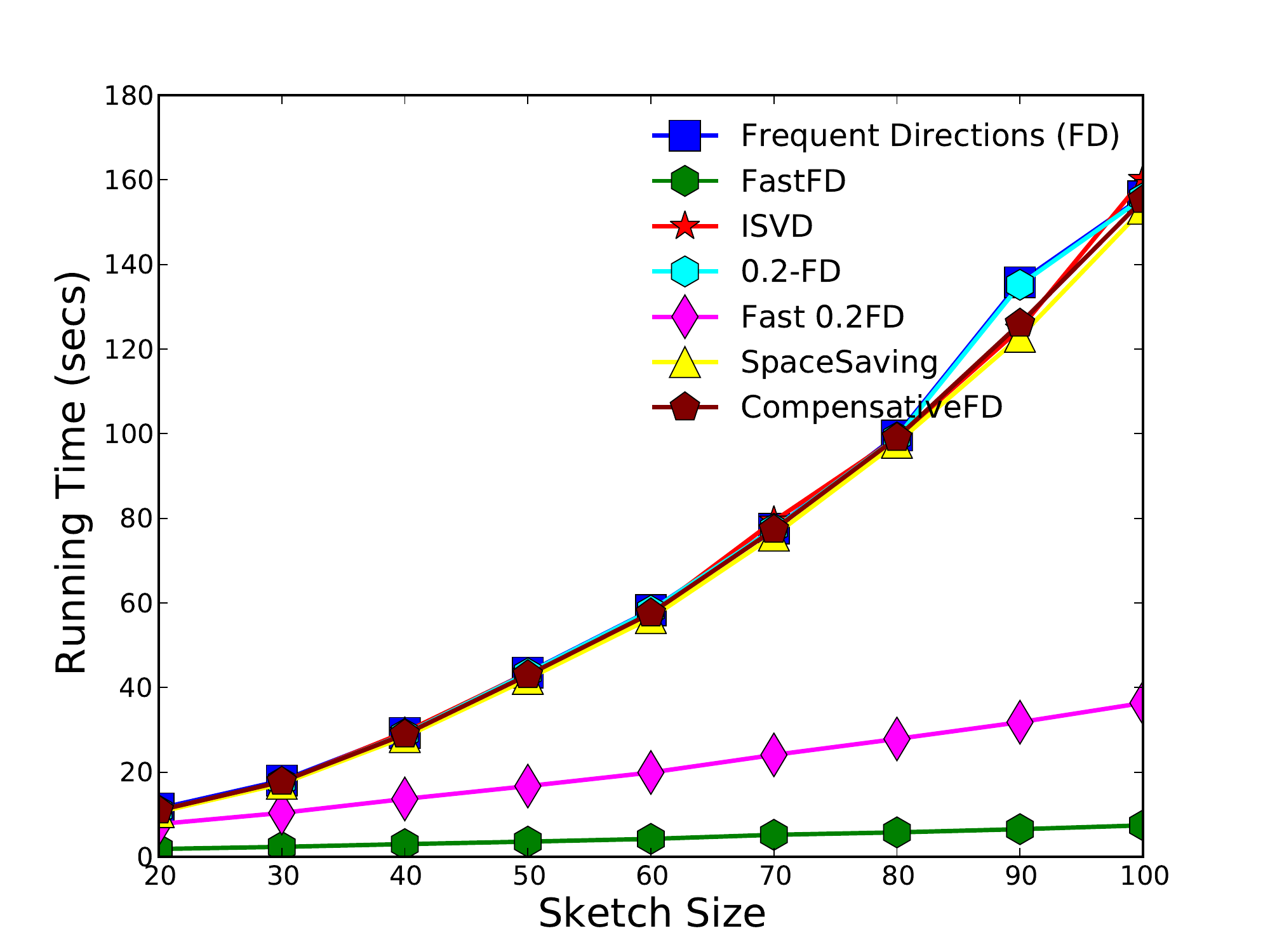}
\includegraphics[width=.32\linewidth]{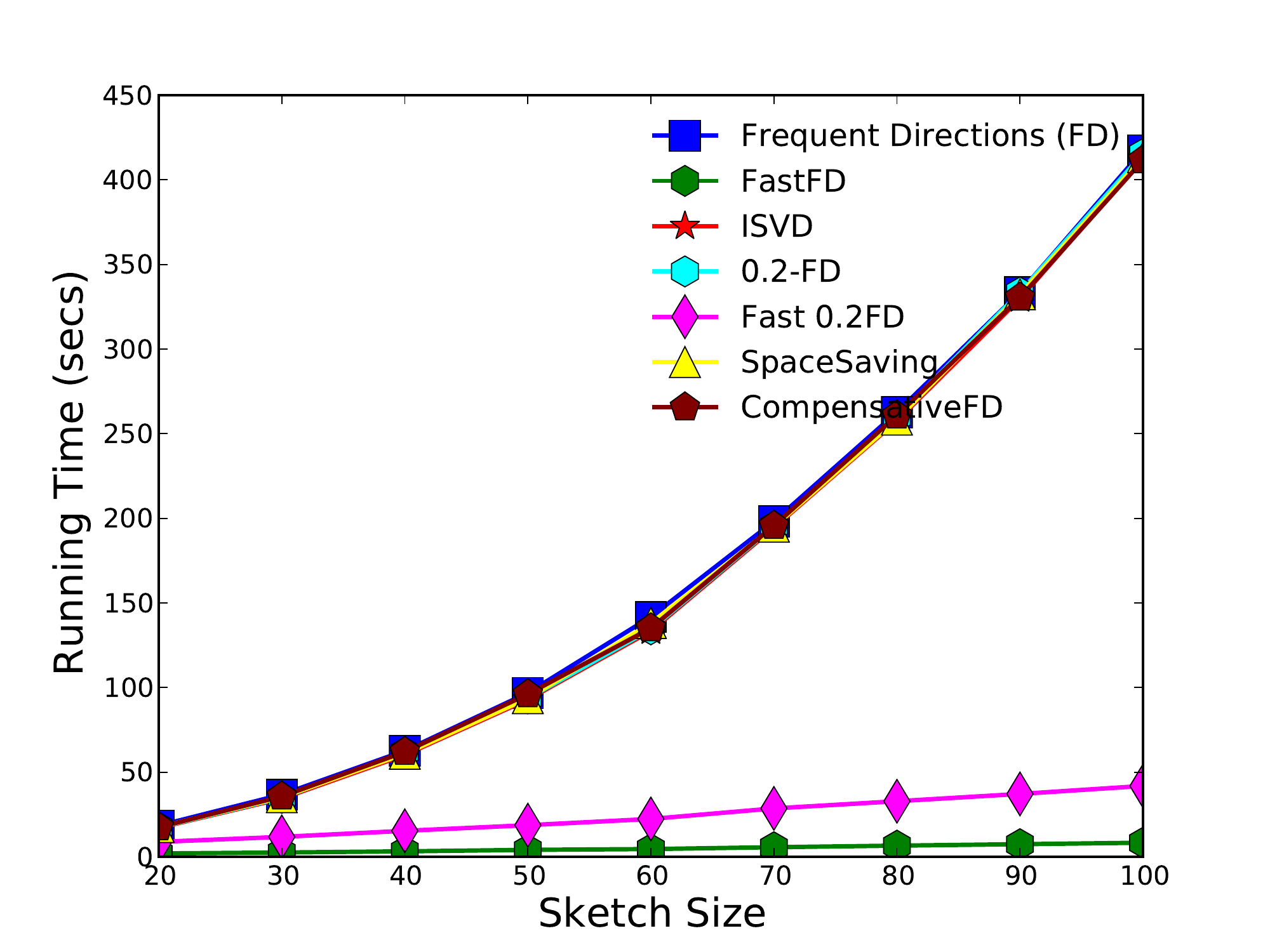}
\includegraphics[width=.32\linewidth]{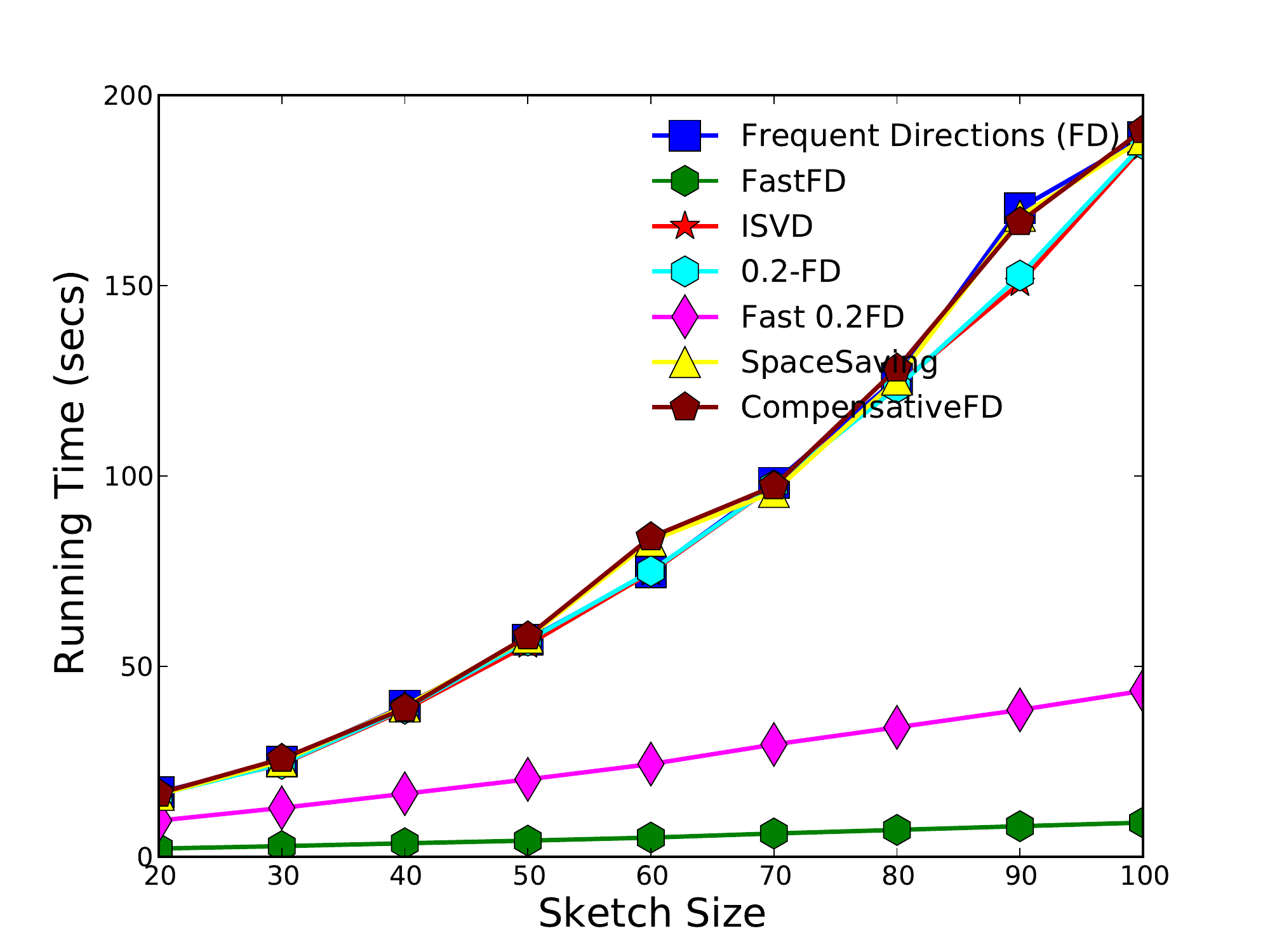}

\caption{\label{fig:iter-alg} Iterative algorithms on \s{Birds}(left), \s{Spam}(middle), and \s{Random Noisy}(30)(right).} 
\end{centering}
\end{figure}

\paragraph{Data adversarial to \iSVD.}
Next, using the \s{Adversarial} construction we show that \iSVD is not always better in practice. In Figure \ref{fig:adverse}, we see that \iSVD can perform much worse than other techniques.  Although at $\ell=20$, \iSVD and \FD roughly perform the same (with about \s{err} = 0.09), \iSVD does not improve much as $\ell$ increases, obtaining only \s{err} = 0.08 for $\ell=100$.  On the other hand, \FD (as well as \CFD and \SSD) decrease markedly and consistently to \s{err} = 0.02 for $\ell=100$.  
Moreover, all version of $\alpha$-\FD obtain roughly \s{err}=$0.005$ already for $\ell=20$.  
The large-norm directions are the first 4 singular vectors (from the second part of the stream) and once these directions are recognized as having the largest singular vectors, they are no longer decremented in any Parametrized \FD algorithm.

\begin{figure}[t!]
\begin{centering}
\includegraphics[width=\figsize]{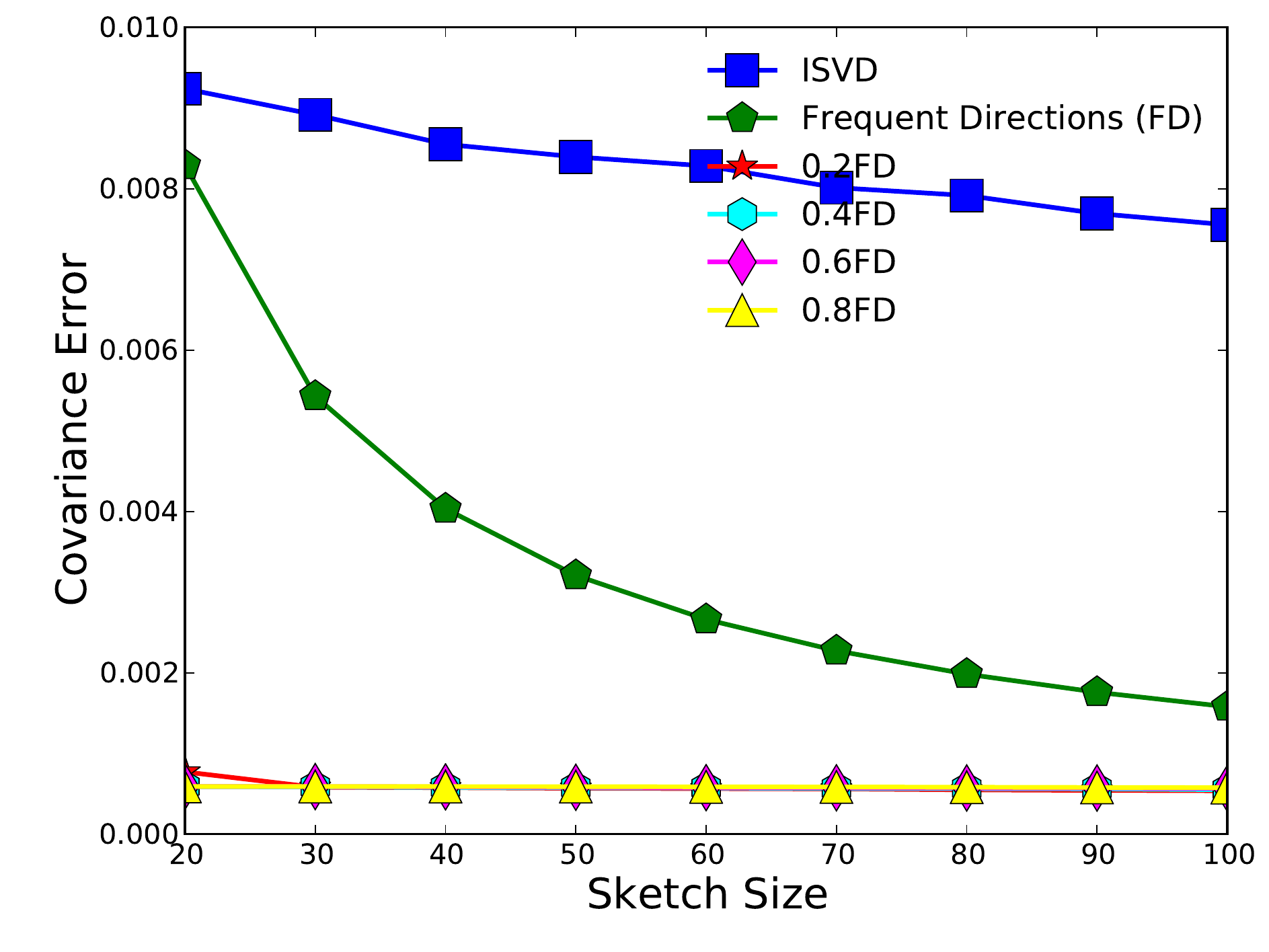}
\includegraphics[width=\figsize]{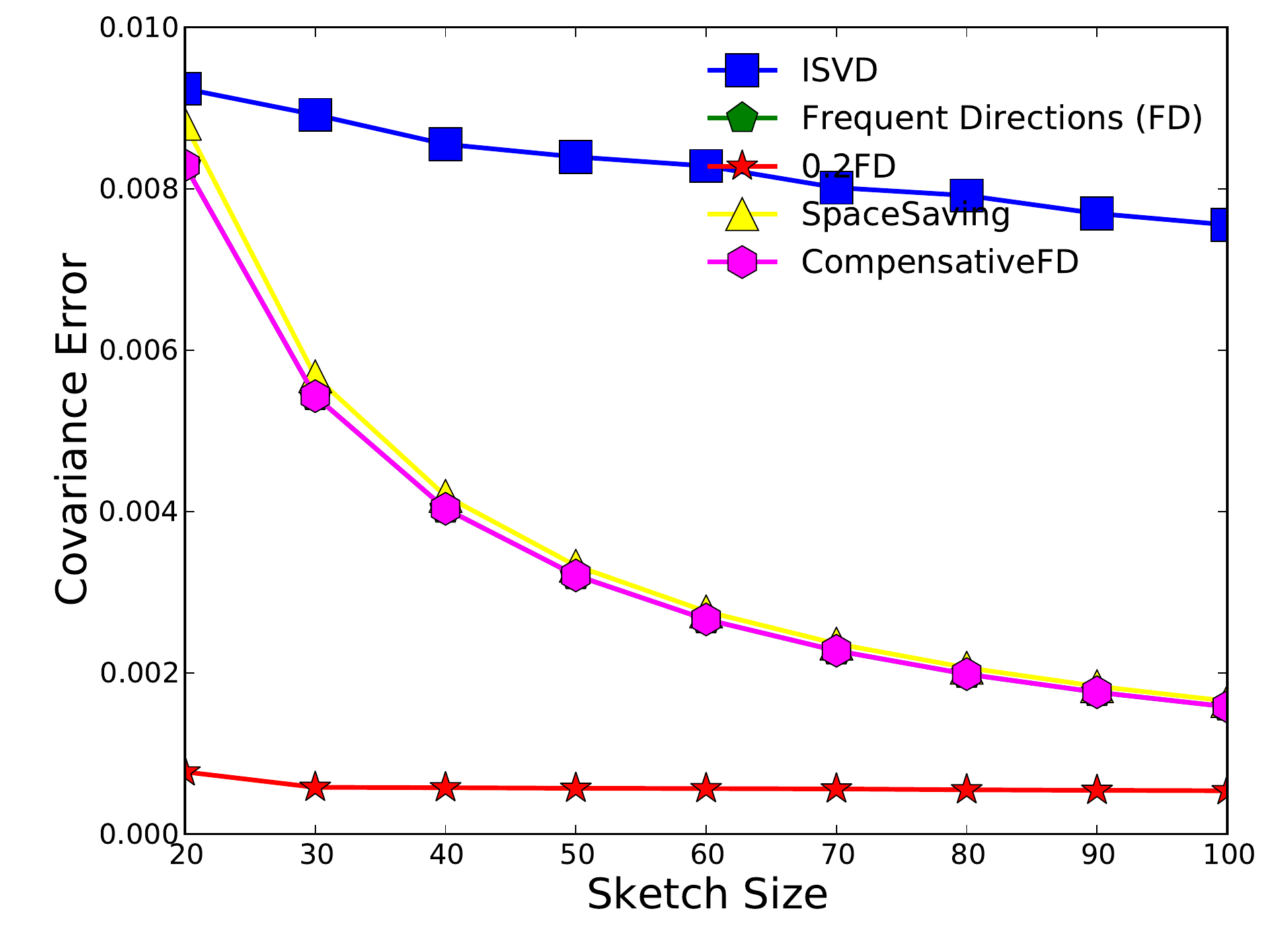}
\vspace{-4mm}
\caption{\label{fig:adverse}
Demonstrating dangers of \iSVD on \s{Adversarial} data.}  
\vspace{-.1in}
\end{centering}
\end{figure}

To wrap up this section, we demonstrate the scalability of these approaches on a much larger real data set \textsf{ConnectUS}.  Figure \ref{fig:connectus} shows variants of Parameterized \FD, and other iterative variants
on this dataset.  As the derived bounds on covariance error based on sketch size do not depend on $n$, the number of rows in $A$, it is not surprising that the performance of most algorithms is unchanged.  There are just a couple differences to point out.  
First, no algorithm converges as close to $0$ error as with the other smaller data sets; this is likely because with the much larger size, there is some variation that can not be captured even with $\ell = 100$ rows of a sketch.  
Second, \iSVD performs noticeably worse than the other \FD-based algorithms (although still significantly better than the leading randomized algorithms).  This likely has to do with the sparsity of \s{ConnectUS} combined with a data drift.  After building up a sketch on the first part of the matrix, sparse rows are observed orthogonal to existing directions.  The orthogonality, the same difficult property as in \s{Adversarial}, likely occurs here because the new rows have a small number of non-zero entrees, and all rows in the sketch have zeros in these locations; these correspond to the webpages marked by one of the unconnected users.

\begin{figure}[t!]
\begin{centering}
\includegraphics[width=\figsize]{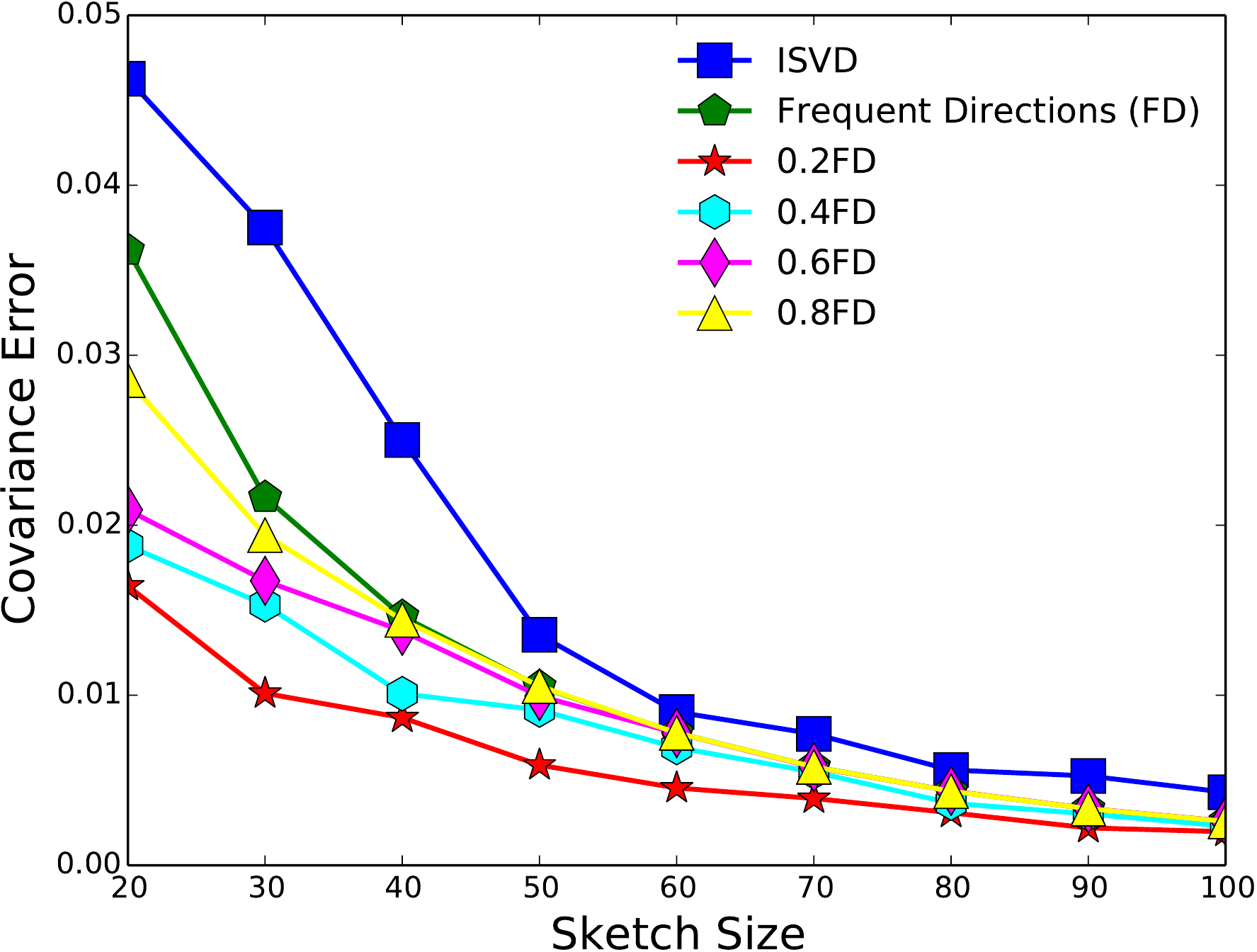}
\includegraphics[width=\figsize]{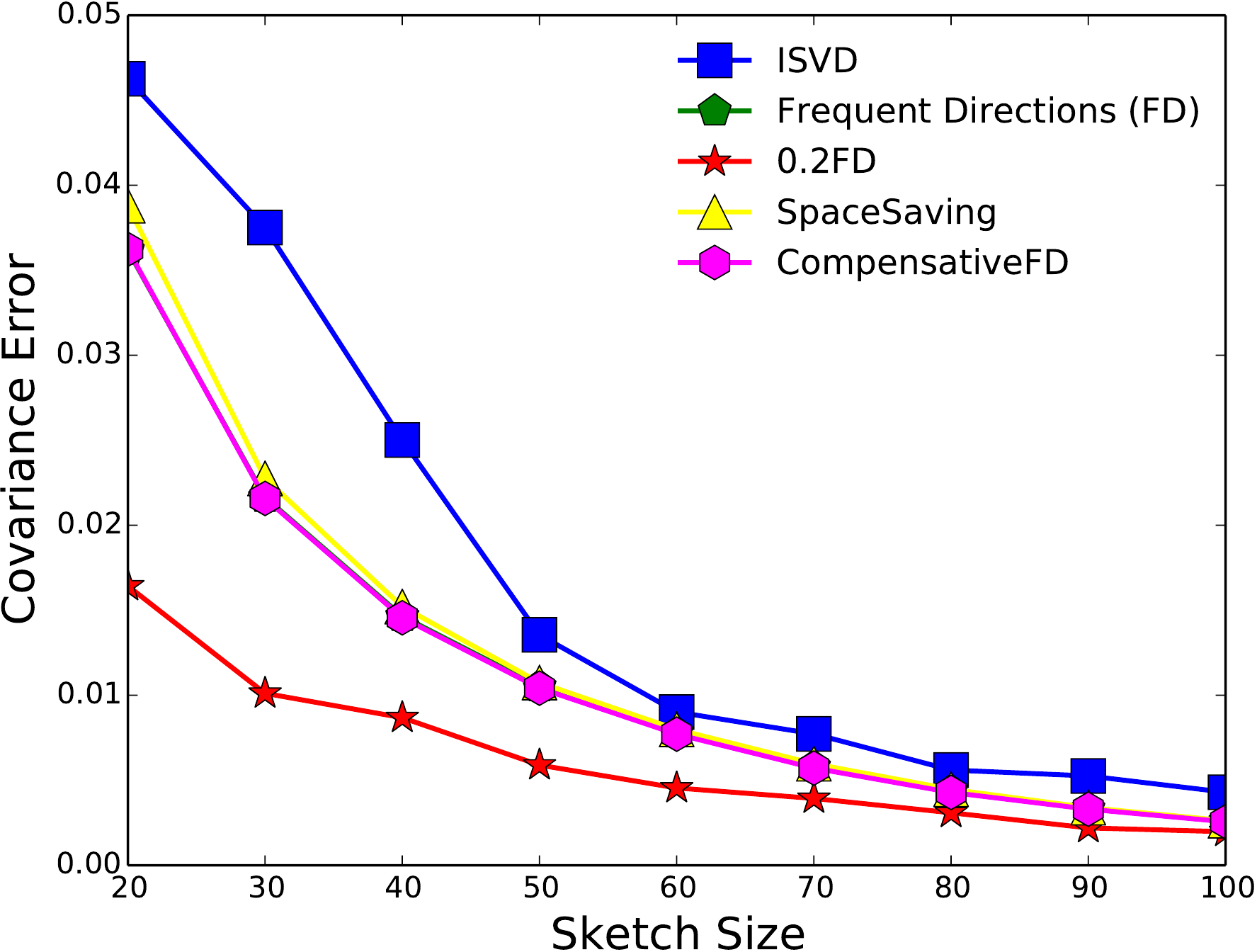}
\vspace{-4mm}
\caption{\label{fig:connectus}
Parameterized \FD (left), other iterative (right)
algorithms on \s{ConnectUS} dataset.}  
\end{centering}
\end{figure}

\subsection{Global Comparison}
\label{ssec:global-eval}

\begin{figure}[t!]
\begin{centering}
\includegraphics[width=.305\linewidth]{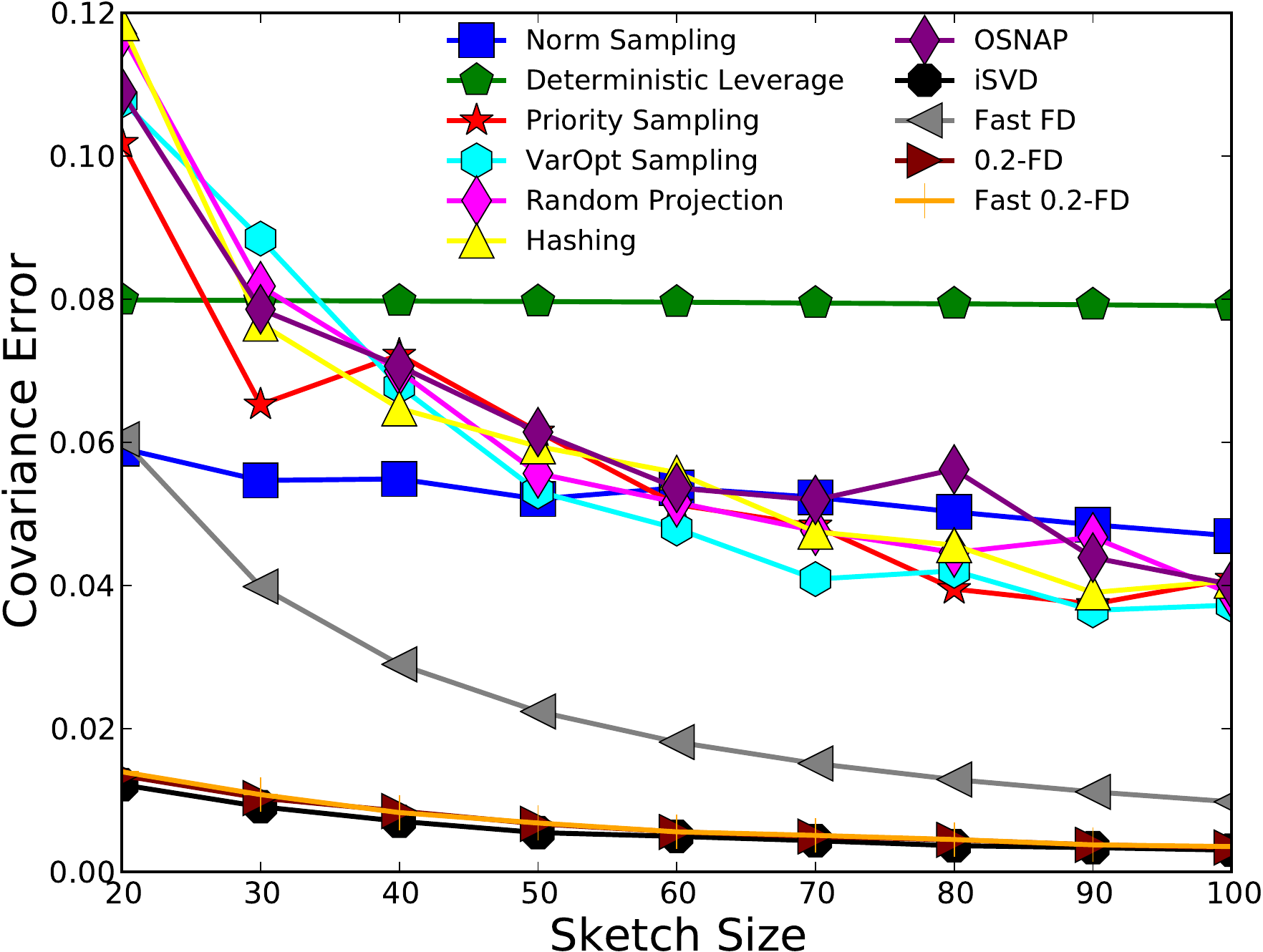}
\includegraphics[width=.34\linewidth]{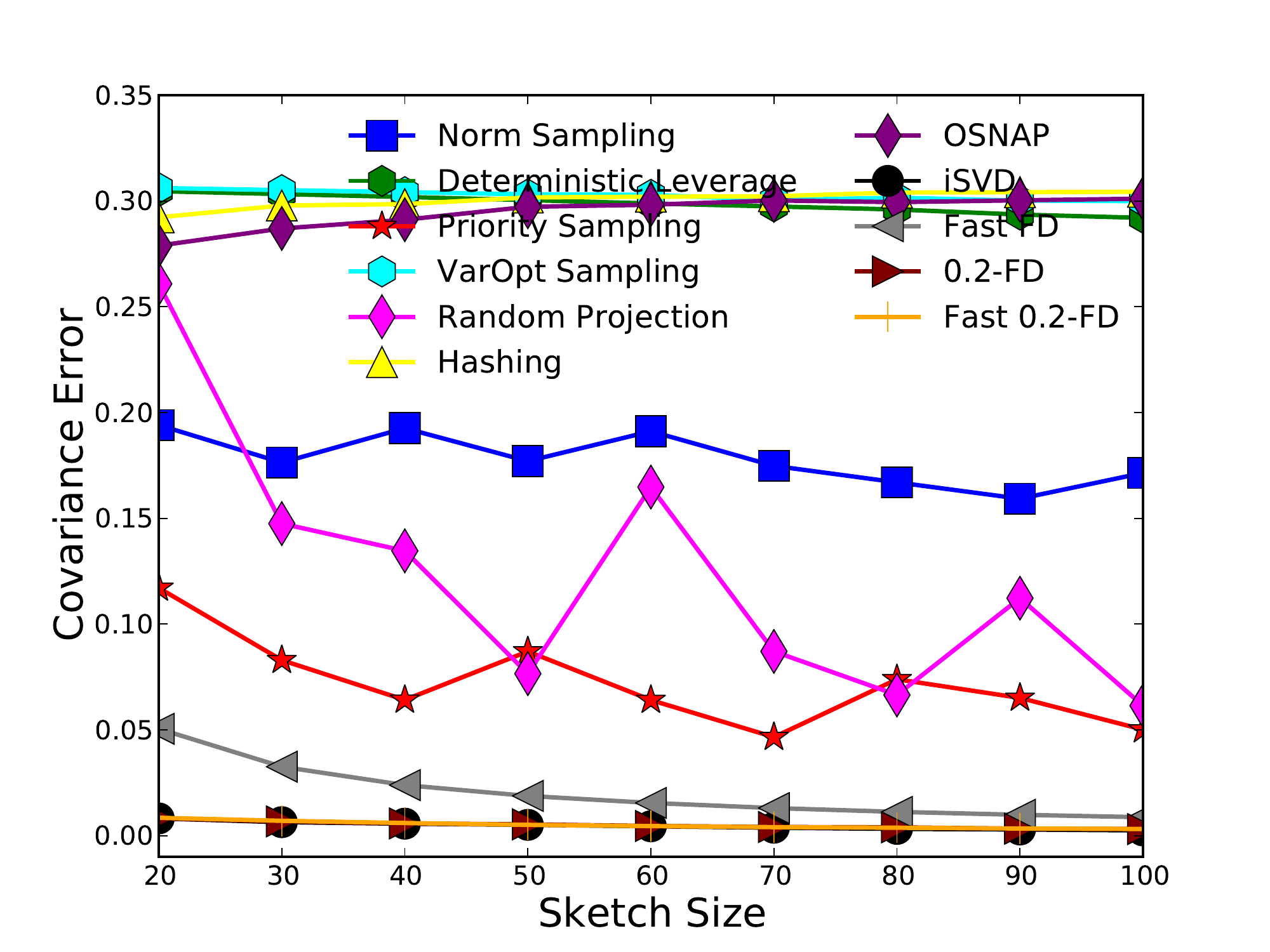}
\includegraphics[width=.305\linewidth]{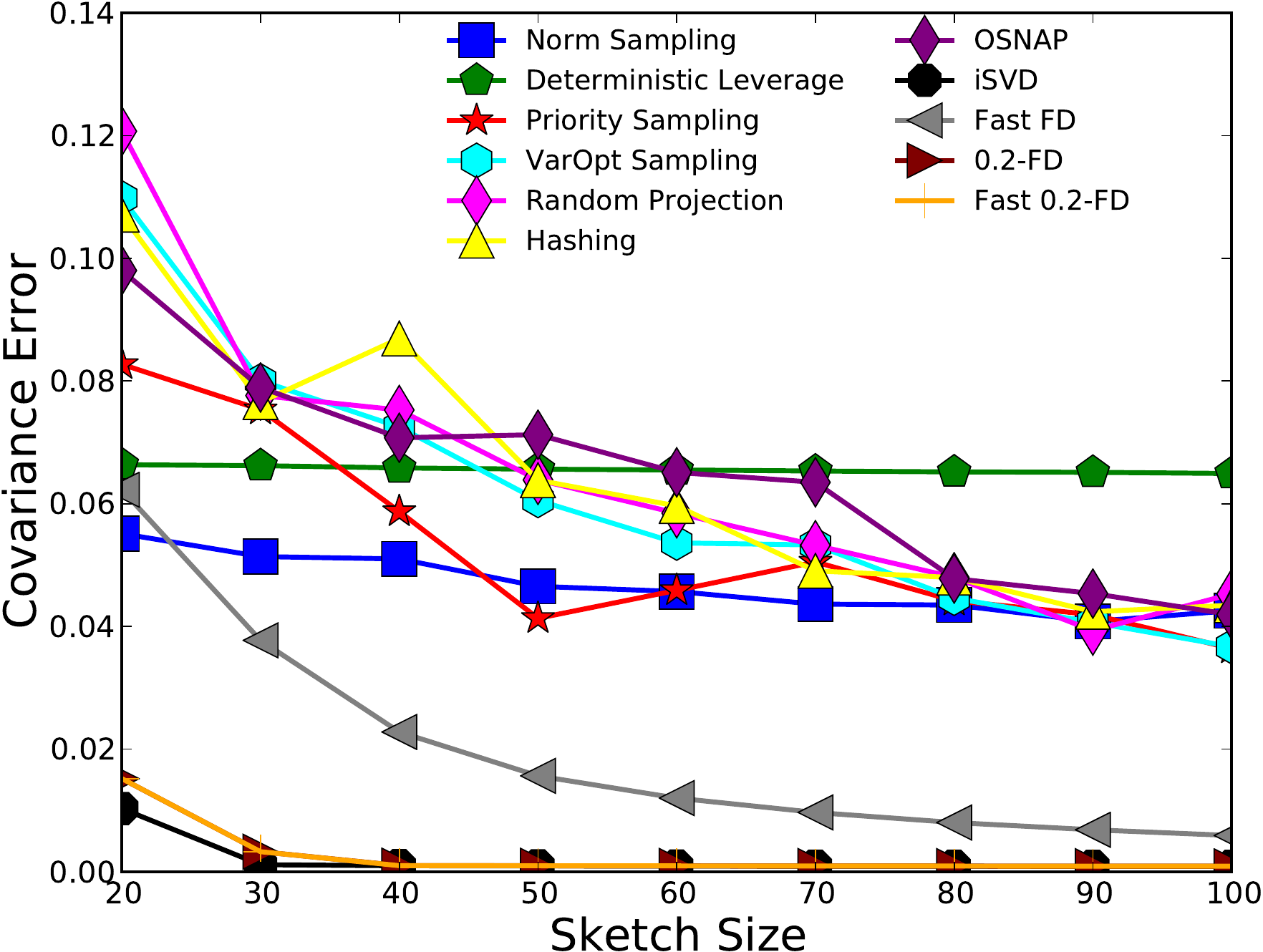}

\includegraphics[width=.305\linewidth]{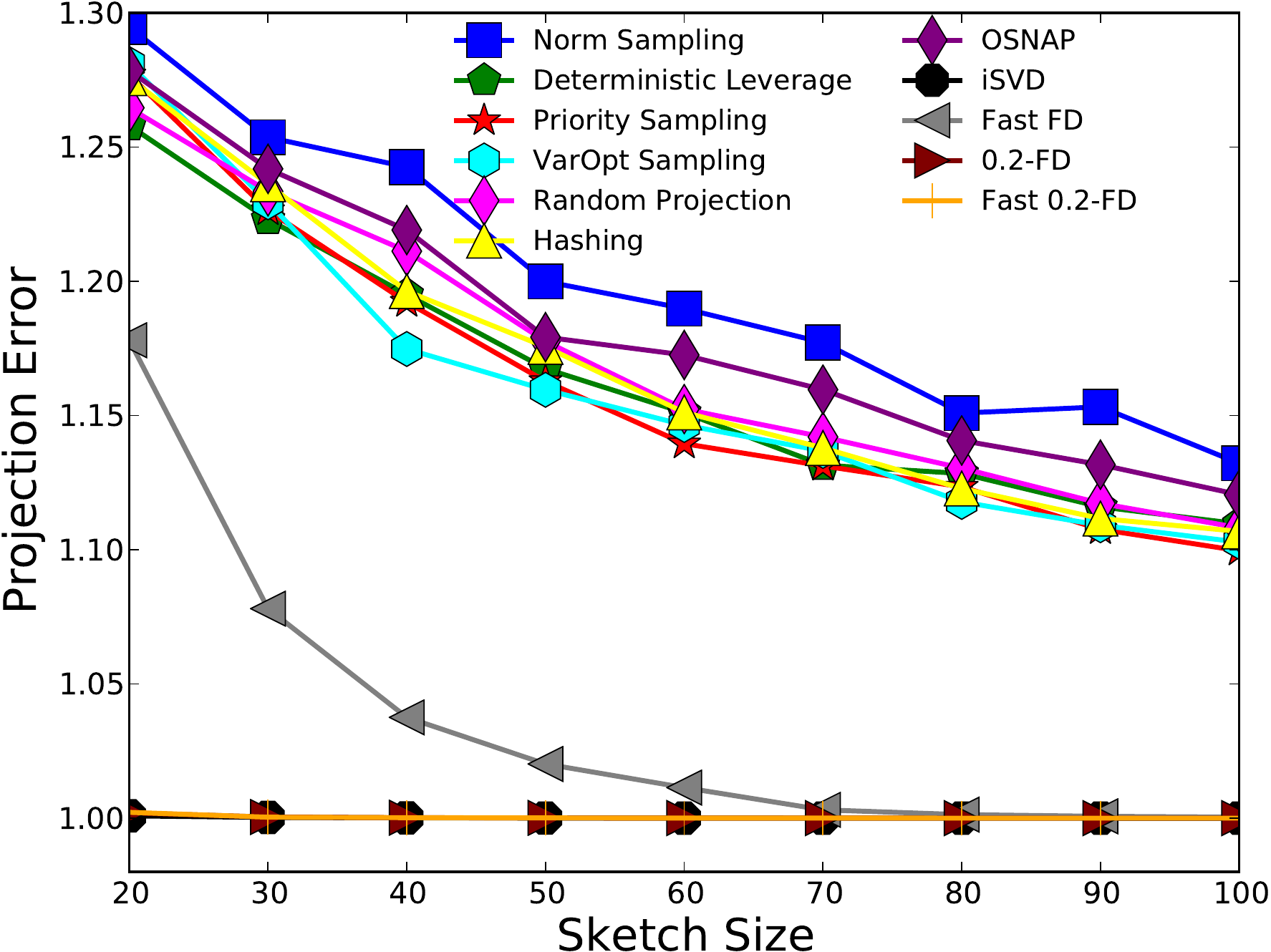}
\includegraphics[width=.34\linewidth]{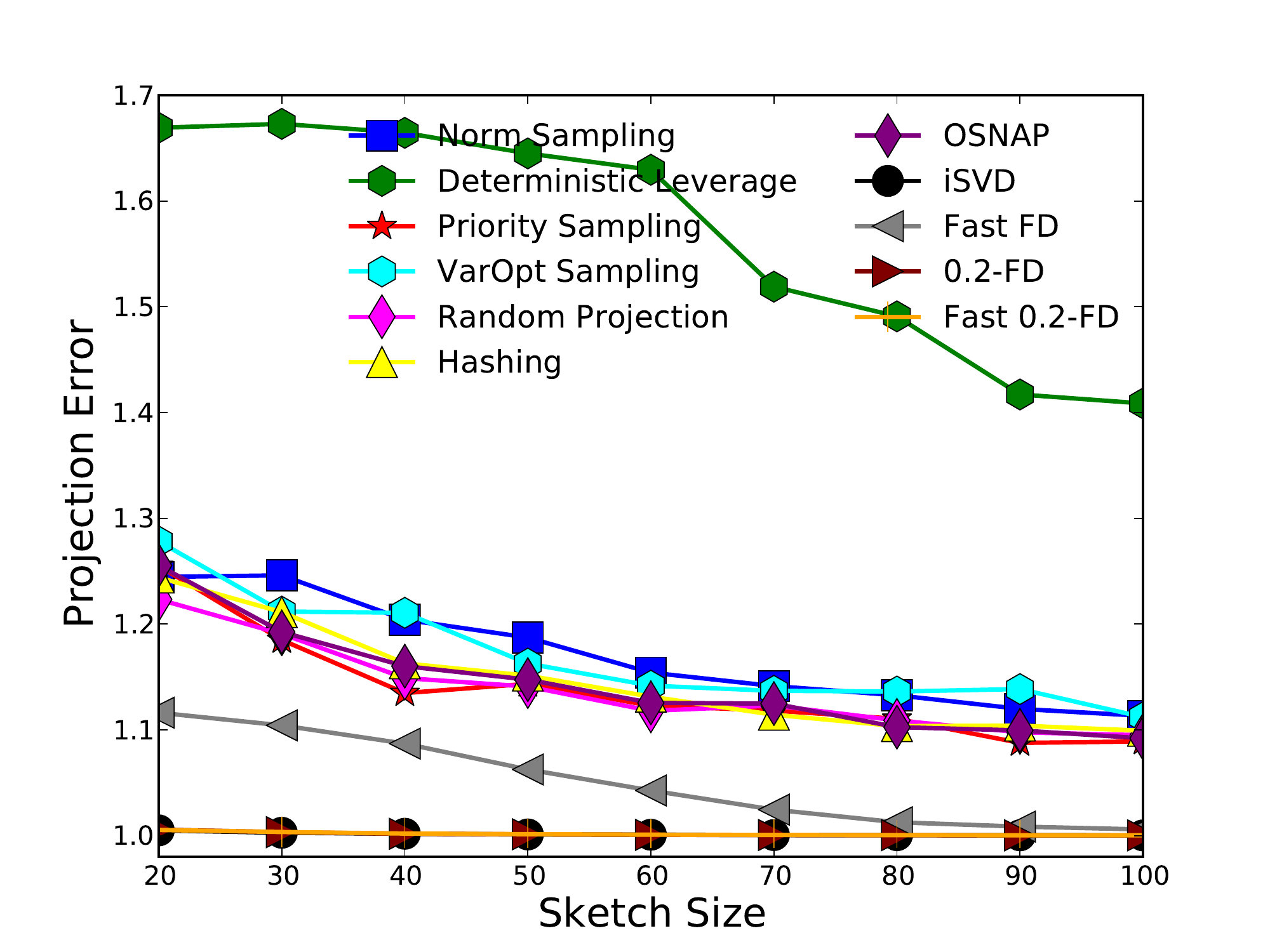}
\includegraphics[width=.305\linewidth]{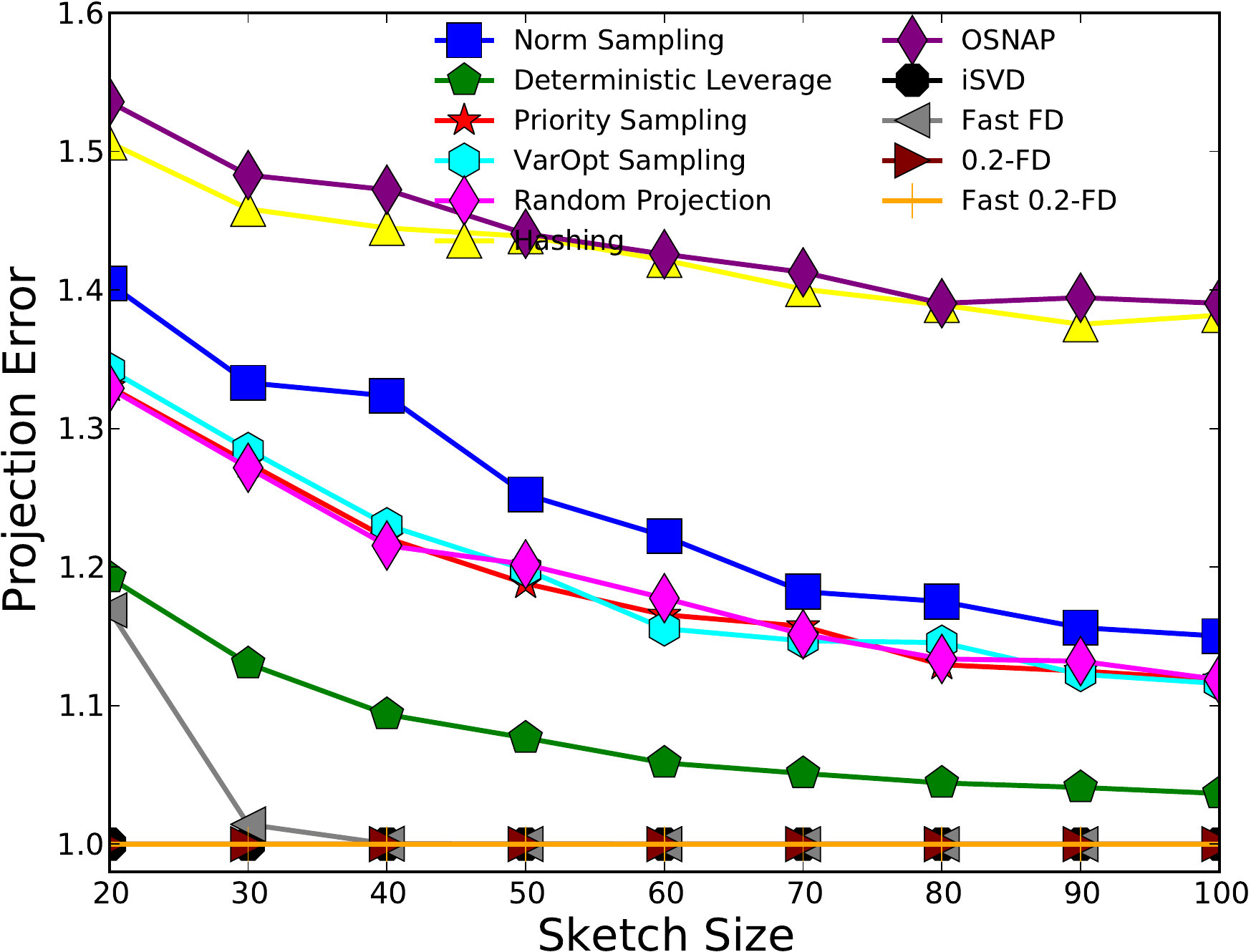}

\includegraphics[width=.305\linewidth]{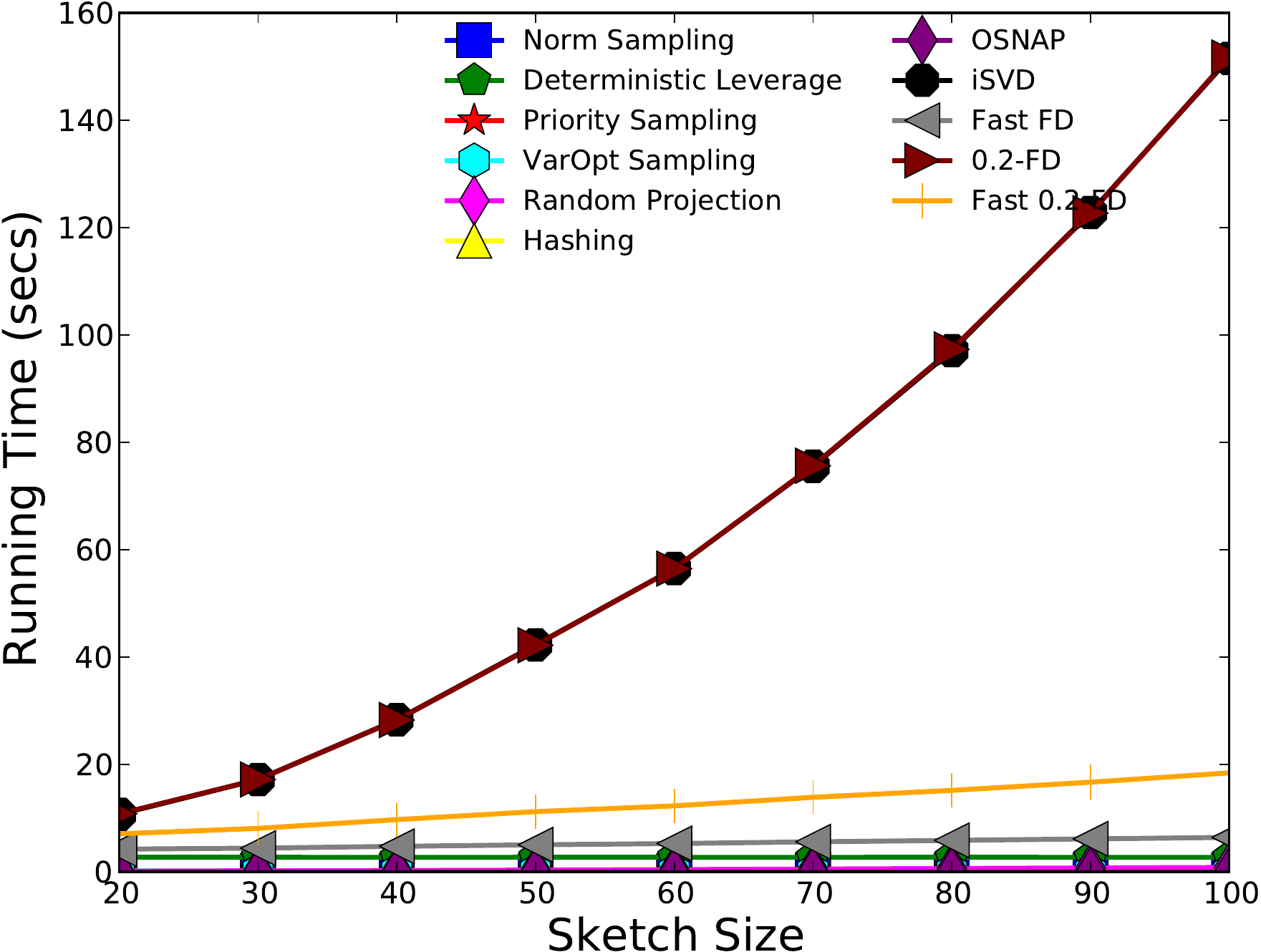}
\includegraphics[width=.34\linewidth]{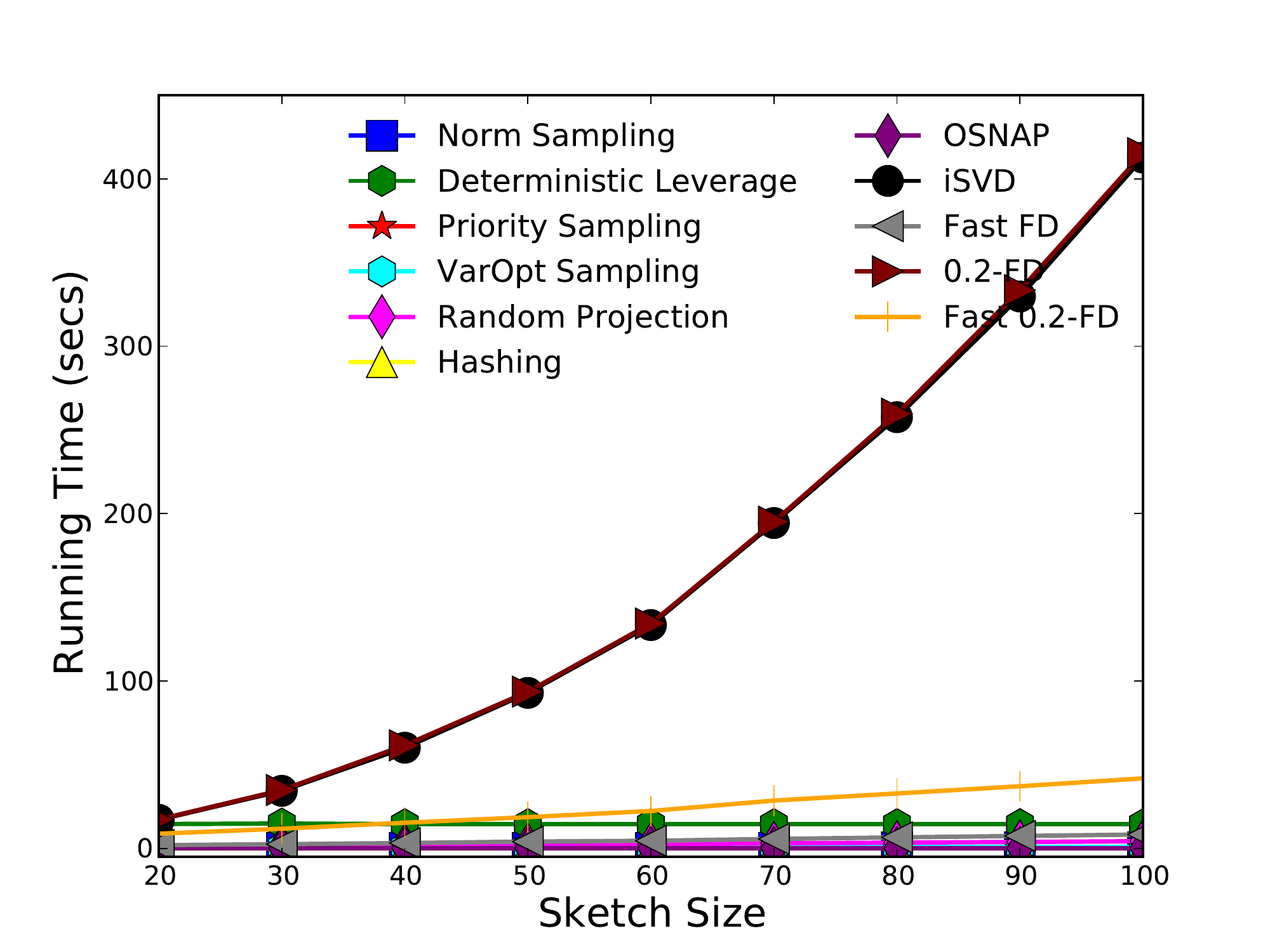}
\includegraphics[width=.305\linewidth]{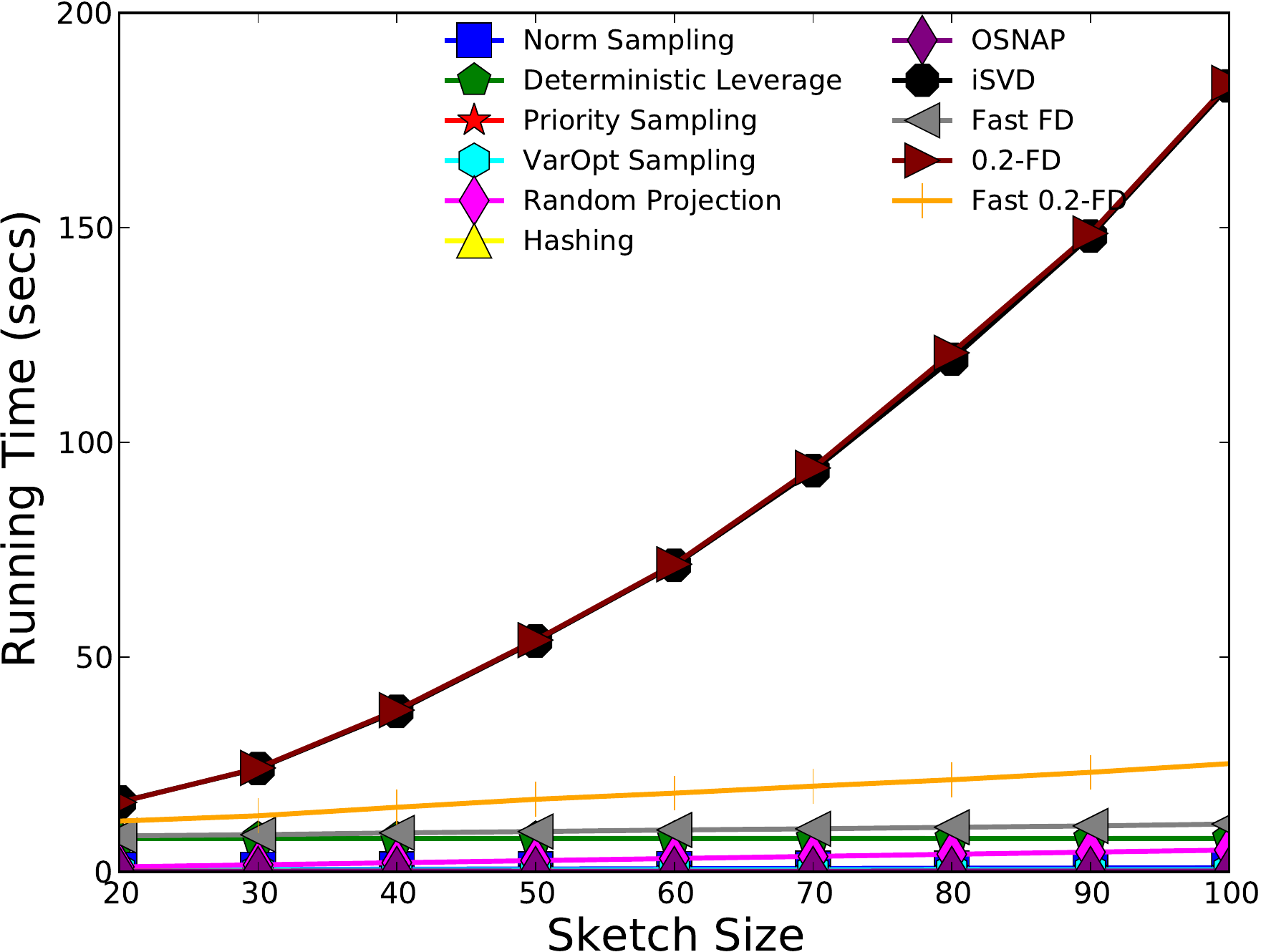}
\caption{\label{fig:global} Leading algorithms on \s{Birds}(left), \s{Spam}(middle), and \s{Random Noisy}(30)(right).} 
\end{centering}
\end{figure}

Figure \ref{fig:global} shows the covariance error, projection error, as well as the runtime for various sketch sizes of $\ell = 20$ to $100$ for the the leading algorithms from each category.  

We can observe that the iterative algorithms achieve much smaller errors, both covariance and projection, than all other algorithms, sometimes matched by \s{Deterministic Leverage}.  However, they are also significantly slower (sometimes a factor of $20$ or more) than the other algorithms.  The exception is \s{Fast \FD} and \s{Fast $0.2$-\FD}, which are slower than the other algorithms, but not significantly so.  

We also observe that for the most part, there is a negligible difference in the performance between the sampling algorithms and the projection algorithms, except for the \s{Random Noisy} dataset where \s{Hashing} and \s{OSNAP} result in worse projection error.

\begin{figure}[t!]
\begin{centering}
{\tiny \textsf{Birds} \hspace{47mm} \textsf{ConnectUS} \hspace{43mm} \textsf{CIFAR-10}} 
\\
\includegraphics[width=.32\linewidth]{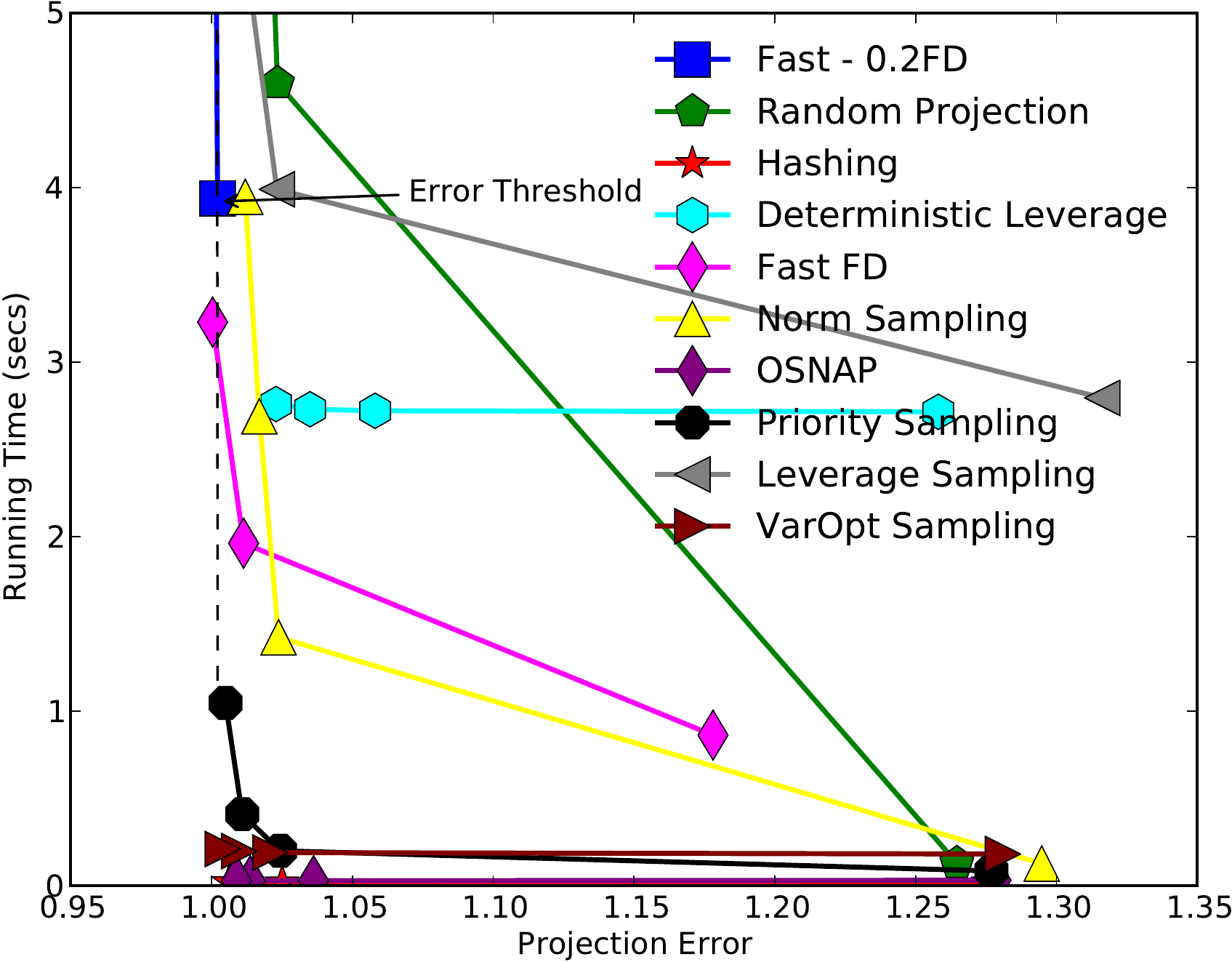}
\includegraphics[width=.32\linewidth]{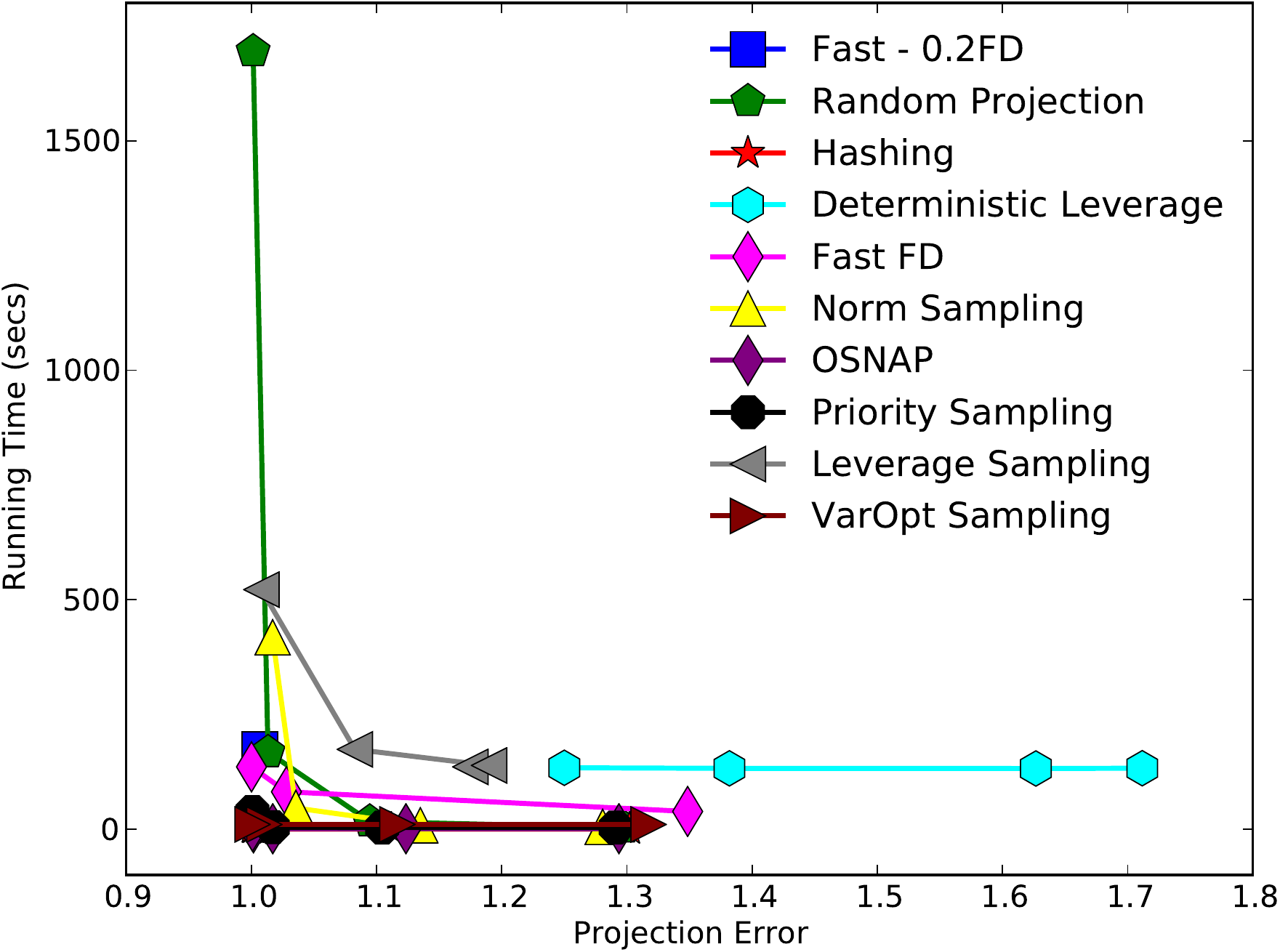}
\includegraphics[width=.32\linewidth]{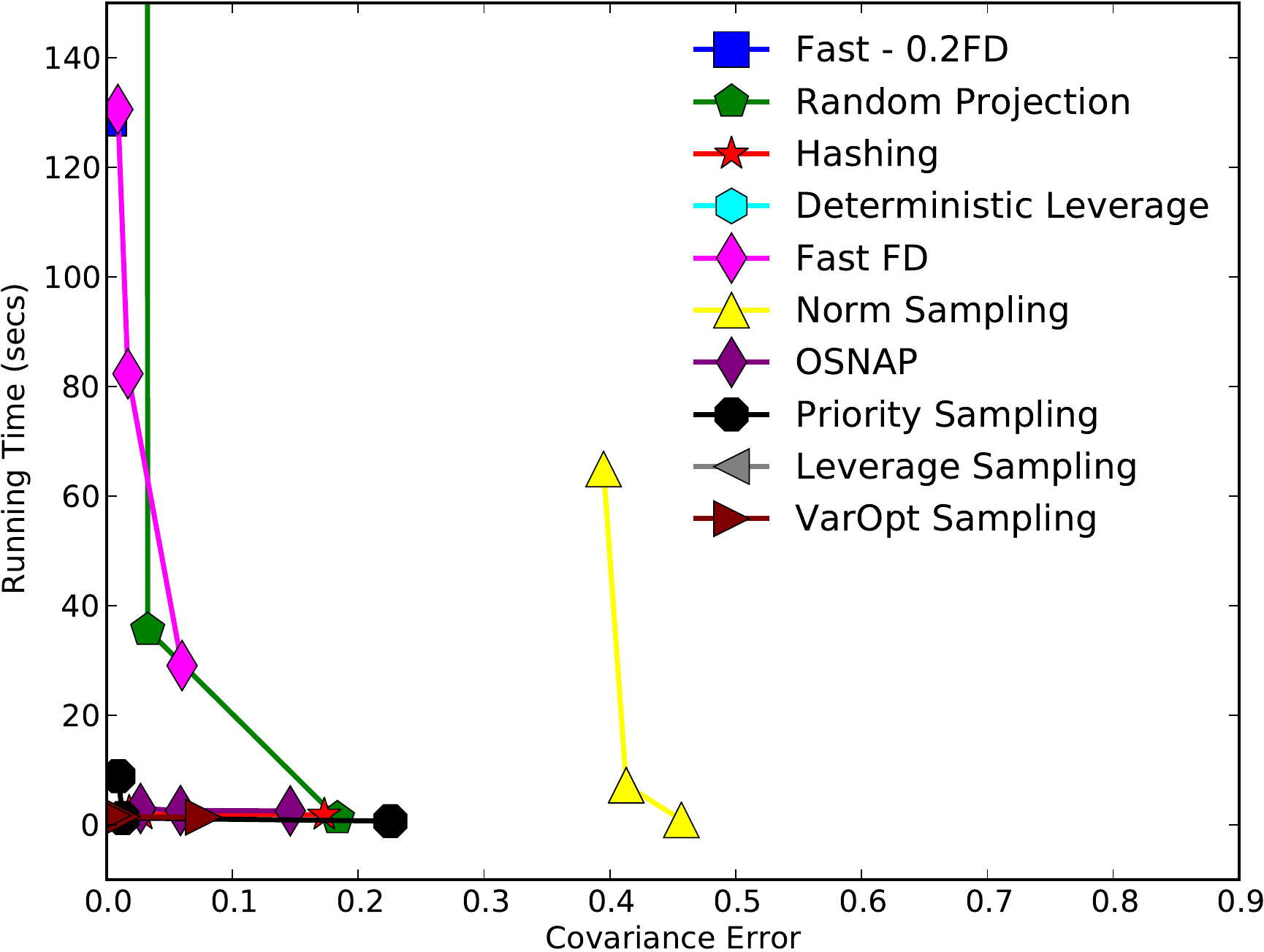}
\includegraphics[width=.32\linewidth]{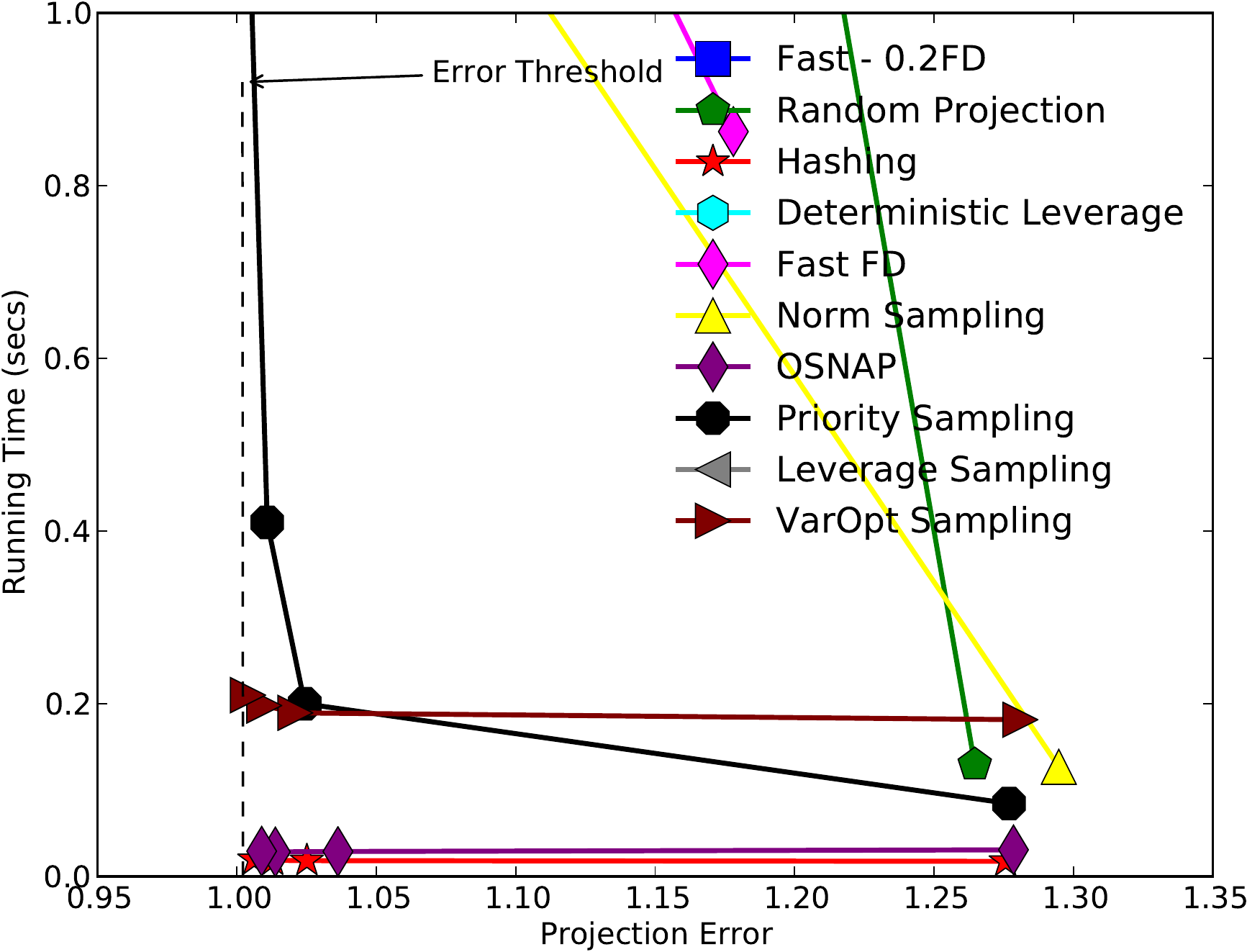}
\includegraphics[width=.32\linewidth]{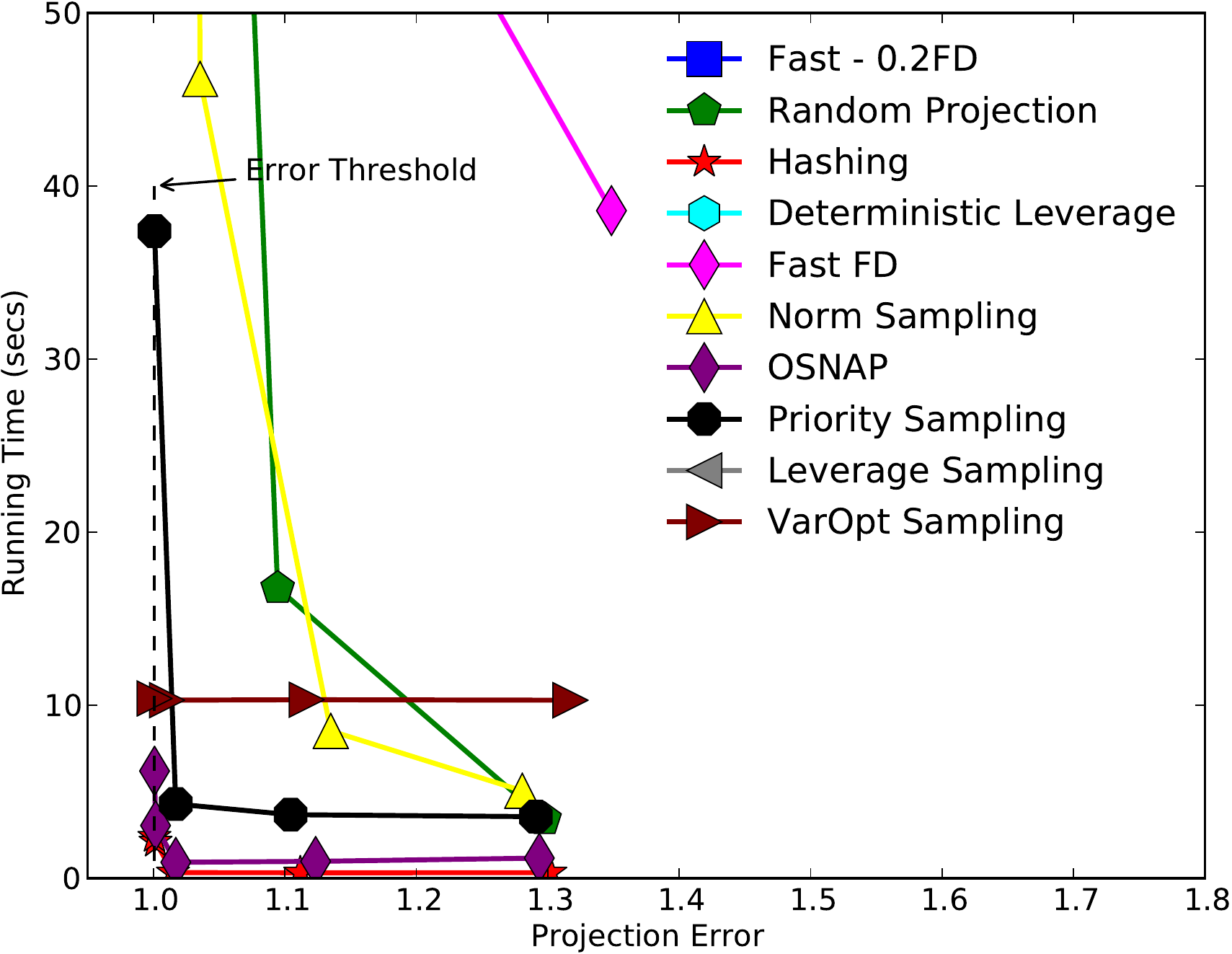}
\includegraphics[width=.32\linewidth]{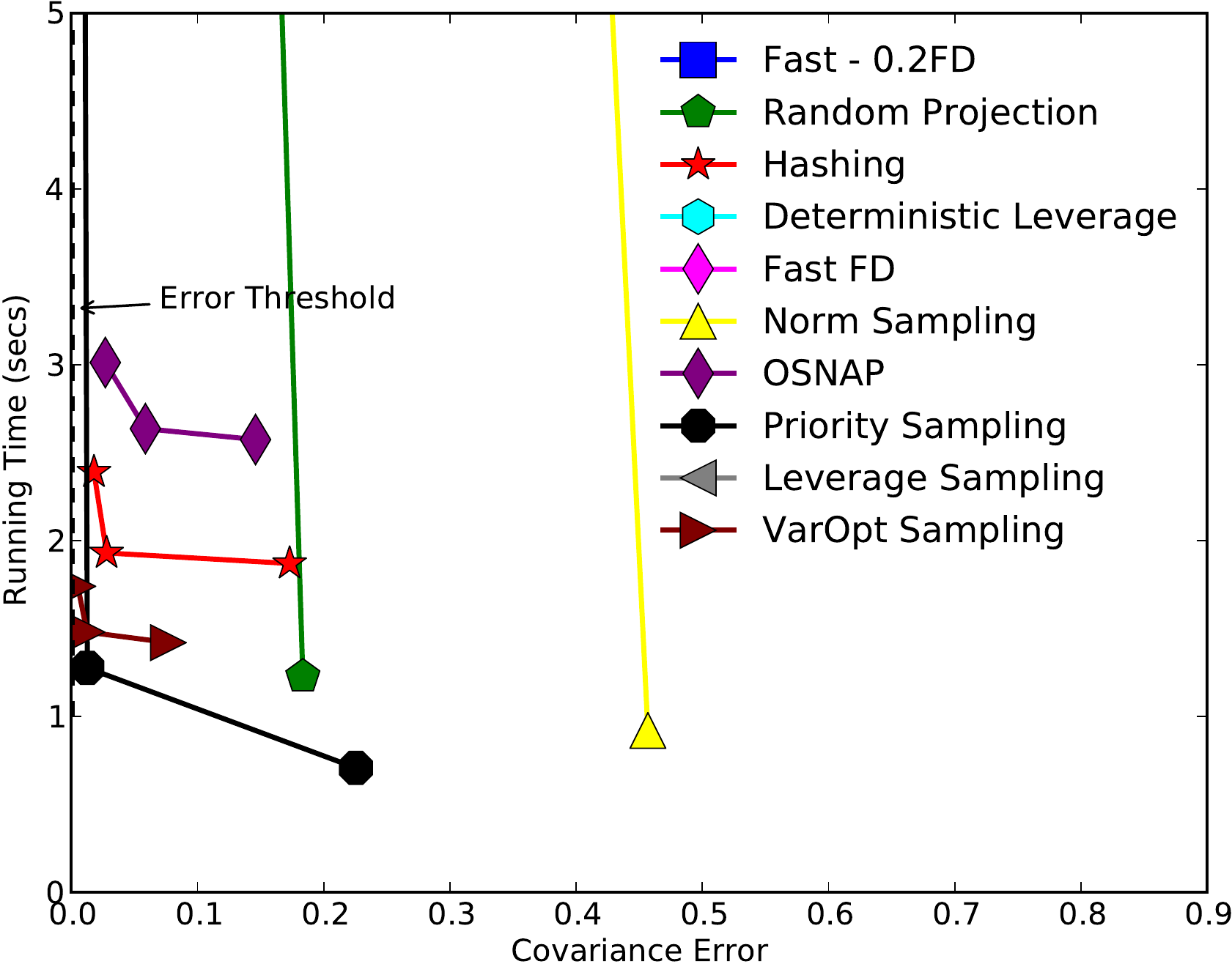}
\caption{\label{fig:et}
Projection error versus time on \s{Birds} and \s{ConnectUS} as well as Covariance error versus time on \s{CIFAR-10}. The second line shows close-ups}
\end{centering}
\end{figure}

However, if we allow a much large sketch size for faster runtime and small error, then these plots do not effectively demonstrate which algorithm performs best.  Thus in Figure \ref{fig:et} we run the leading algorithms on \s{Birds} as well as larger datasets, \s{ConnectUS} which is sparse and \s{CIFAR-10} which is dense.  We plot the error versus the runtime for various sketch sizes ranging up to $\ell = 10{,}000$.  The top row of the plots shows most data points to give a holistic view, and the second row zooms in on the relevant portion.  

For some plots, we draw an \emph{Error Threshold} vertical line corresponding to the error achieved by \s{Fast $0.2$-FD} using $\ell=20$.  Since this error is typically very low, but in comparison to the sampling or projection algorithms \s{Fast $0.2$-FD} is slow, this threshold is a useful target error rate for the other leading algorithms.  

We observe that \s{Fast FD} can sometimes match this error with slightly less time (see on \s{Birds}), but requires a larger sketch size of $\ell=100$.  
Additionally \s{VarOpt}, \s{Priority Sampling}, \s{Hashing}, and \s{OSNAP} can often meet this threshold.  Their runtimes can be roughly $100$ to $200$ times faster, but require sketch sizes on the order of $\ell = 10{,}000$ to match the error of \s{Fast $0.2$-FD} with $\ell = 20$.  

Among these fast algorithms requiring large sketch sizes we observe that \s{VarOpt} scales better than \s{Priority Sampling}, and that these two perform best on \s{CIFAR-10}, the large dense dataset.  They also noticeably outperform \s{Norm Sampling} both in runtime and error for the same sketch size.  
On the sparse dataset \s{ConnectUS}, algorithms \s{Hashing} and \s{OSNAP} seem to dominate \s{Priority Sampling} and \s{VarOpt}, and of those two \s{Hashing} performs slightly better.  

To put this space in perspective, on CIFAR-10 ($n$ = $60{,}000$ rows, 1.4GB memory footprint), to approximately reach the \emph{error threshold} 
\s{Hashing} needs $\ell = 10{,}000$ and 234MB in $2.4$ seconds, 
\s{VarOpt Sampling} requires $\ell=5{,}000$ and 117MB in $1.2$ seconds, 
\s{Fast FD} requires $\ell = 100$ and 2.3MB in $130$ seconds, and
\s{Fast $0.2$-FD} requires $\ell = 20$ and 0.48MB in $128$ seconds.  
All of these will easily fit in memory of most modern machines.  
The smaller sketch by \s{Fast $0.2$-FD} will allow expensive downstream applications (such as deep learning) to run much faster.  Alternatively, the output from \s{VarOpt Sampling} (which maintains interpretability of original rows) could be fed into \s{Fast $0.2$-FD} to get a compressed sketch in less time.


\paragraph{Reproducibility.}
We provide public access to our results using a testbed facility \textit{APT} \cite{apt}. APT is a platform where researchers can perform experiments and keep them public for verification and validation of the results. We provide our code, datasets, and experimental results in our APT profile with detailed description on how to reproduce, available at: \texttt{\url{http://aptlab.net/p/MatrixApx/MatrixApproxComparision}}.

\bibliographystyle{plain}
\bibliography{ipca-journal}

\appendix

\section{Generalized Error Bounds for \FD}.  
\label{app:aFD}

\begin{lemma}
In any such algorithm, for any unit vector $x$: 
\[
0 \leq \|Ax\|^2 - \|B x\|^2 \leq \|A -A_k\|_F^2 / (\alpha \ell-k)
\]
\end{lemma}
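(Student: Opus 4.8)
The plan is to treat the two inequalities separately and lean entirely on the three abstracted Properties, since the statement is really a corollary of the generalized framework rather than anything specific to a particular \RR rule. The left inequality $0 \le \|Ax\|^2 - \|Bx\|^2$ is literally Property~1, so nothing is needed there. For the right inequality, Property~2 already gives $\|Ax\|^2 - \|Bx\|^2 \le \Delta$ for every unit vector $x$, so the entire task reduces to proving the single scalar bound $\Delta \le \|A - A_k\|_F^2/(\alpha\ell - k)$. This is where Property~3 must be brought in, and the goal is to convert the Frobenius-mass deficit $\|A\|_F^2 - \|B\|_F^2 \ge \alpha\Delta\ell$ into a statement about the spectral tail $\|A - A_k\|_F^2$.

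The core step I would carry out is a lower bound on $\|B\|_F^2$ obtained by testing $B$ against a well-chosen $k$-dimensional subspace. Let $v_1,\dots,v_d$ be the right singular vectors of $A$, an orthonormal basis of $\R^d$, ordered so that $\|A v_i\|^2 = \sigma_i^2$ is non-increasing; then $\sum_{i=1}^k \|A v_i\|^2 = \|A_k\|_F^2$. Since $\|B\|_F^2 = \sum_{i=1}^d \|B v_i\|^2 \ge \sum_{i=1}^k \|B v_i\|^2$ (every term is nonnegative, so discarding the tail terms only decreases the sum), and since Property~2 applied to each $v_i$ gives $\|B v_i\|^2 \ge \|A v_i\|^2 - \Delta$, I get $\|B\|_F^2 \ge \|A_k\|_F^2 - k\Delta = \|A\|_F^2 - \|A - A_k\|_F^2 - k\Delta$. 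Rearranging yields $\|A\|_F^2 - \|B\|_F^2 \le \|A - A_k\|_F^2 + k\Delta$.

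Finally I would chain this against Property~3: combining $\alpha\ell\Delta \le \|A\|_F^2 - \|B\|_F^2 \le \|A - A_k\|_F^2 + k\Delta$ and moving the $k\Delta$ term to the left gives $(\alpha\ell - k)\Delta \le \|A - A_k\|_F^2$, i.e. $\Delta \le \|A - A_k\|_F^2/(\alpha\ell - k)$ (valid precisely because the hypotheses force $\alpha\ell > k$ so the denominator is positive). Substituting back into the Property~2 bound closes the argument. The only genuinely non-routine step is the choice of test subspace in the middle paragraph: one must use the top-$k$ right singular directions of $A$ (not of $B$) so that the restricted sum equals exactly $\|A_k\|_F^2$, and then spend the $k$ applications of Property~2 to trade a $k\Delta$ slack against the denominator. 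Everything else is bookkeeping with the identity $\|A\|_F^2 = \|A_k\|_F^2 + \|A - A_k\|_F^2$.
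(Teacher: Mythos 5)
Your proof is correct and is essentially identical to the paper's argument: both bound $\Delta$ by testing against the top-$k$ right singular vectors of $A$, using $\sum_{i=1}^k \|B v_i\|^2 \le \|B\|_F^2$ together with $k$ applications of Property~2 to get $\|A\|_F^2 - \|B\|_F^2 \le \|A-A_k\|_F^2 + k\Delta$, and then chaining this against Property~3 and solving for $\Delta$. The only difference is presentational (you isolate the lower bound on $\|B\|_F^2$ as a separate step, while the paper writes one chain of inequalities starting from $\alpha\Delta\ell$).
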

\begin{proof}
In the following, $y_i$ correspond to the singular vectors of $A$ ordered with respect to a decreasing corresponding singular value order.
\begin{align*}
\alpha \Delta \ell	& \le \|A\|_F^2 - \|B\|_F^2 										& \text{via Property 3} \\ 
		& = \sum_{i=1}^k \|A y_i\|^2 + \sum_{i=k+1}^d \|A y_i\|^2 - \|B\|_F^2 		&  \|A\|_F^2 = \sum_{i=1}^{d}\|Ay_i\|^2\\
		& = \sum_{i=1}^k \|A y_i\|^2 + \|A - A_k \|_F^2 - \|B\|_F^2 				& \\ 
		& \leq \|A - A_k \|_F^2 + \sum_{i=1}^k \left (\|A y_i\|^2 - \|B y_i\|^2 \right) 	& \text{$\sum_{i=1}^k \|B y_i\|^2 < \|B\|_F^2$} \\
		& \leq \|A - A_k \|_F^2 + k\Delta. 								& \text{via Property 2}			
\end{align*}  
Solving $\alpha \Delta \ell \leq \|A - A_k\|_F^2 + k \Delta$ for $\Delta$ to obtain $\Delta \leq \|A - A_k\|_F^2 / (\alpha \ell - k)$, which combined with Property 1 and Property 2 proves the lemma.
\end{proof}

\begin{lemma}
Any such algorithm described above, satisfies the following error bound
\[
\|A - \pi_{B_k}(A)\| \leq \alpha \ell / (\alpha \ell -k) \|A - A_k\|_F^2 
\]
Where $\pi_{B_k}(\cdot)$ represents the projection operator onto $B_k$, the top $k$ singular vectors of $B$.  
\end{lemma}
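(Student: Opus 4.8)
The plan is to reduce the projection-error bound to the covariance bound already established in the previous lemma, namely that for every unit vector $x$ we have $0 \leq \|Ax\|^2 - \|Bx\|^2 \leq \Delta$ with $\Delta \leq \|A - A_k\|_F^2/(\alpha \ell - k)$. Let $v_1, \ldots, v_k$ denote the top $k$ right singular vectors of $B$ (these span the rowspace of $B_k$) and let $u_1, \ldots, u_k$ denote the top $k$ right singular vectors of $A$. Since $\pi_{B_k}$ is an orthogonal projection, Pythagoras gives
\[
\|A - \pi_{B_k}(A)\|_F^2 = \|A\|_F^2 - \|\pi_{B_k}(A)\|_F^2 = \|A\|_F^2 - \sum_{j=1}^k \|A v_j\|^2,
\]
so the whole task becomes lower-bounding $\sum_{j=1}^k \|A v_j\|^2$.

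First I would apply Property 1 direction-by-direction: $\|A v_j\|^2 \geq \|B v_j\|^2 = \sigma_j^2$, the $j$th squared singular value of $B$, giving $\sum_{j=1}^k \|A v_j\|^2 \geq \sum_{j=1}^k \sigma_j^2$. Next I would lower-bound $\sum_{j=1}^k \sigma_j^2$ using the fact that the top $k$ singular directions of $B$ capture the most energy among all rank-$k$ orthonormal sets; in particular, applying this variational characterization to the orthonormal family $u_1,\ldots,u_k$ yields $\sum_{j=1}^k \sigma_j^2 \geq \sum_{j=1}^k \|B u_j\|^2$. Finally, applying Property 2 in the reverse direction to each $u_j$ gives $\|B u_j\|^2 \geq \|A u_j\|^2 - \Delta$, so that $\sum_{j=1}^k \sigma_j^2 \geq \sum_{j=1}^k \|A u_j\|^2 - k\Delta = (\|A\|_F^2 - \|A - A_k\|_F^2) - k\Delta$, where the last equality is the defining optimality property of the rank-$k$ approximation $A_k$.

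Chaining these inequalities gives $\sum_{j=1}^k \|A v_j\|^2 \geq \|A\|_F^2 - \|A - A_k\|_F^2 - k\Delta$, hence
\[
\|A - \pi_{B_k}(A)\|_F^2 \leq \|A - A_k\|_F^2 + k\Delta.
\]
Substituting the bound $\Delta \leq \|A - A_k\|_F^2/(\alpha \ell - k)$ from the preceding lemma and simplifying $1 + k/(\alpha \ell - k) = \alpha \ell/(\alpha \ell - k)$ completes the argument. The one step that needs care is the middle chain: I must ensure the two appeals to the covariance bound point in opposite directions (a lower bound $\|A v_j\|^2 \geq \|B v_j\|^2$ for the $B$-directions and an upper bound $\|A u_j\|^2 \leq \|B u_j\|^2 + \Delta$ for the $A$-directions), and that $\sum_{j=1}^k \sigma_j^2 \geq \sum_{j=1}^k \|B u_j\|^2$ is invoked with a genuinely orthonormal set, so that the Courant--Fischer/Ky Fan optimality applies. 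Everything else is routine bookkeeping.
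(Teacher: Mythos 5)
Your proof is correct and follows essentially the same route as the paper's: Pythagoras to reduce to lower-bounding $\sum_{j=1}^k \|A v_j\|^2$, Property 1 on the $B$-directions, the Ky Fan variational inequality $\sum_{j=1}^k \|B v_j\|^2 \geq \sum_{j=1}^k \|B u_j\|^2$, Property 2 on the $A$-directions, and then the bound on $\Delta$ from the preceding lemma. The only difference is notational (the paper calls the singular vectors of $A$ $y_i$ rather than $u_j$).
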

\begin{proof}
Here, $y_i$ correspond to the singular vectors of $A$ as above and $v_i$ to the singular vectors of $B$ in a similar fashion.
\begin{align*}
\|A - \pi_{B_k}(A)\|_F^2  	&=  \|A\|_F^2 - \|\pi_{B_k}(A)\|_F^2 = \|A\|_F^2 - \sum_{i=1}^k \|A v_i\|^2				& \text{Pythagorean theorem} \\ 
					& \leq  \|A\|_F^2 - \sum_{i=1}^k\|B v_i\|^2 										& \text{via Property 1} \\
					& \leq  \|A\|_F^2 - \sum_{i=1}^k \|B y_i\|^2    									& \text{since $\sum_{i=1}^j \|B v_i\|^2 \geq \sum_{i=1}^j \|B y_i\|^2$} \\ 
					&\leq \|A\|_F^2 - \sum_{i=1}^k (\|A y_i\|^2 - \Delta)  								& \text{via Property 2} \\ 
					& = \|A\|_F^2 - \|A_k\|_F^2 + k\Delta 											& \\ 
					&\leq \|A - A_k\|_F^2 + \frac{k}{\alpha \ell-k} \|A - A_k\|_F^2 								& \text{ by } \Delta \leq \|A - A_k\|_F^2 / (\alpha \ell - k) \\
					& = \frac{\alpha \ell}{\alpha \ell-k} \|A - A_k\|_F^2.  											&
\end{align*}
This completes the proof of lemma.
\end{proof}

\section{Error Bounds for \SSD}.  
\label{app:SSD}

To understand the error bounds for \RRSS, we will consider an arbitrary unit vector $x$.  We can decompose $x = \sum_{j=1}^d \beta_j v_j$ where $\beta^2_j = \langle x, v_j \rangle^2 > 0$ and $\sum_{j=1}^d \beta_j^2 = 1$.  
For notational convenience, without loss of generality, we assume that $\beta_j = 0$ for $j > \ell$.  Thus $v_{\ell-1}$ represents the entire component of $x$ in the null space of $B$ (or $B_{[i]}$ after processing row $i$).

To analyze this algorithm, at iteration $i \geq \ell$, we consider a $d \times d$ matrix $\bar{B}_{[i]}$ that has the following properties:  $\|B_{[i]} v_j\|^2 = \|\bar{B}_{[i]} v_j\|^2$ for $j < \ell-1$ and $j=\ell$, and $\|\bar{B}_{[i]} v_j\|^2 = \delta_i$ for $j=\ell-1$ and $j > \ell$.  
This matrix provides the constant but bounded overcount similar to the SS sketch.  
Also let $A_{[i]} = [a_1; a_2; \ldots; a_i]$.  

\begin{lemma} \label{lem:SS-bar}
For any unit vector $x$ we have $0 \leq \|\bar{B}_{[i]} x\|^2 - \|A_{[i]} x\|^2 \leq 2\delta_i$
\end{lemma}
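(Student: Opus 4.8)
The plan is to prove Lemma~\ref{lem:SS-bar} by induction on $i$ (for $i \ge \ell$, once the sketch has filled), tracking the two Gram-matrix discrepancies $G_i := \bar B_{[i]}^\top \bar B_{[i]} - A_{[i]}^\top A_{[i]}$ and $H_i := B_{[i]}^\top B_{[i]} - A_{[i]}^\top A_{[i]}$, since the claim is exactly $0 \preceq G_i \preceq 2\delta_i I$. Let $v_1,\dots,v_d$ denote the right singular vectors of $C_{[i]}$; these are shared with $B_{[i]}$ because $\RRSS$ only rescales singular values. First I would record three elementary per-step facts. (a) The append step gives $C_{[i]}^\top C_{[i]} = B_{[i-1]}^\top B_{[i-1]} + a_i a_i^\top$, while $A_{[i]}^\top A_{[i]} = A_{[i-1]}^\top A_{[i-1]} + a_i a_i^\top$. (b) Reading off Algorithm~\ref{alg:SS} direction-by-direction, $\bar B_{[i]}$ exceeds $C_{[i]}$ by exactly $\delta_i$ on the bottom $d-\ell+1$ directions and agrees elsewhere, so $\bar B_{[i]}^\top\bar B_{[i]} = C_{[i]}^\top C_{[i]} + \delta_i Q_i$ with $Q_i = \sum_{j\ge \ell} v_j v_j^\top$ and $0 \preceq \delta_i Q_i \preceq \delta_i I$; similarly $\bar B_{[i]}^\top \bar B_{[i]} = B_{[i]}^\top B_{[i]} + \delta_i R_i$, where $R_i = v_{\ell-1}v_{\ell-1}^\top + \sum_{j>\ell} v_j v_j^\top$ projects onto $\mathrm{null}(B_{[i]})$. (c) Monotonicity $\delta_i \ge \delta_{i-1}$, which I would get from Cauchy interlacing applied to the rank-one update in (a) together with the re-sorting of singular values forced by the $\RRSS$ zeroing (the zeroed direction sinks to the bottom, so $\sigma_{\ell-1}^2$ can only grow).

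The heart of the argument is a cancellation identity obtained by combining (a) and (b). Subtracting $A_{[i]}^\top A_{[i]}$ from $\bar B_{[i]}^\top\bar B_{[i]} = C_{[i]}^\top C_{[i]} + \delta_i Q_i$ and using that the $a_i a_i^\top$ terms cancel yields
\[
G_i \;=\; \big(B_{[i-1]}^\top B_{[i-1]} - A_{[i-1]}^\top A_{[i-1]}\big) + \delta_i Q_i \;=\; H_{i-1} + \delta_i Q_i .
\]
Likewise, subtracting $\delta_i R_i$ converts this into a clean rank-two recursion for $H_i$, since $Q_i - R_i = v_\ell v_\ell^\top - v_{\ell-1} v_{\ell-1}^\top$:
\[
H_i \;=\; G_i - \delta_i R_i \;=\; H_{i-1} + \delta_i\big(v_\ell v_\ell^\top - v_{\ell-1} v_{\ell-1}^\top\big).
\]
This reduces everything to controlling $H_i$: the identity $G_i = H_{i-1} + \delta_i Q_i$ with $0 \preceq \delta_i Q_i \preceq \delta_i I$ will turn a bound $\|H_{i-1}\|_2 \le \delta_{i-1}$ into the desired two-sided bound on $G_i$. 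So I would strengthen the induction hypothesis to carry $\|H_i\|_2 \le \delta_i$ alongside $0 \preceq G_i \preceq 2\delta_i I$. The base case is the first full step: there $B_{[i-1]} = A_{[i-1]}$ and $C_{[i]} = A_{[i]}$, so $H_{i-1} = 0$ and $G_i = \delta_i Q_i \in [0,\delta_i I]$.

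For the inductive step, the upper bound is immediate: $\lambda_{\max}(G_i) \le \lambda_{\max}(H_{i-1}) + \delta_i \le \delta_{i-1} + \delta_i \le 2\delta_i$ by Weyl and monotonicity. The two genuinely delicate pieces are (i) the lower bound $G_i = H_{i-1} + \delta_i Q_i \succeq 0$, which fails from $\|H_{i-1}\|_2 \le \delta_{i-1}$ alone because $H_{i-1}$ may be negative on the top directions where $Q_i$ contributes nothing; and (ii) re-establishing $\|H_i\|_2 \le \delta_i$ from the rank-two recursion, where the $+\delta_i v_\ell v_\ell^\top$ term threatens the upper end and $-\delta_i v_{\ell-1}v_{\ell-1}^\top$ threatens the lower end. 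Both require knowing the value of $H_{i-1}$ specifically along the \emph{new} eigenvectors $v_{\ell-1},v_\ell$ of $C_{[i]}$, and I expect this to be the main obstacle: one must use Cauchy interlacing between $B_{[i-1]}$ (rank $\le \ell-1$) and its rank-one update $C_{[i]}$ to show that $H_{i-1}$ is nonnegative precisely on the directions killed by $Q_i$ and is already near its ceiling on $v_{\ell-1}$ (so the subtraction cannot overshoot), and near its floor on $v_\ell$ (so the addition cannot overshoot). Threading this interlacing bookkeeping — and verifying it is exactly the use of the \emph{second} smallest singular value $\sigma_{\ell-1}$ that supplies the factor two — is where the real work lies; the rest is the routine matrix algebra sketched above.
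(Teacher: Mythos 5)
Your algebraic reformulation is correct and is essentially a matrix-level cleanup of what the paper does: the paper's proof is also an induction on $i$ whose content is exactly your identities $G_i = H_{i-1} + \delta_i Q_i$ and $H_i = H_{i-1} + \delta_i\big(v_\ell v_\ell^T - v_{\ell-1}v_{\ell-1}^T\big)$, except that the paper only checks the two ``changed'' directions $v_{\ell-1},v_\ell$ and treats all other directions as inherited from step $i-1$, which is not a complete argument because $H_{i-1}$ is not diagonal in the new singular basis. You correctly isolate the real crux --- one needs control of $H_{i-1}$ along the new directions $v_{\ell-1},v_\ell$ and on the kernel of $Q_i$ --- which is more than the paper's write-up acknowledges.

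However, your proposal stops exactly at that crux (steps (i) and (ii) are deferred to an unproved interlacing argument), and the strengthened invariant you propose to carry, $\|H_i\|_2 \le \delta_i$, is false. Take $\ell = d = 3$ and the stream $a_1=(10,0,0)$, $a_2=(0,1,0)$, $a_3=(0,0,1/2)$, $a_4=(0,20,0)$. At $i=3$: $\delta_3=1$, $B_{[3]}=\mathrm{diag}(10,0,\sqrt{1.25})$, so $H_3=\mathrm{diag}(0,-1,1)$. At $i=4$ the sorted singular directions of $C_{[4]}$ are $e_2,e_1,e_3$ with squared singular values $400,100,1.25$, so $\delta_4=100$ and $H_4=\mathrm{diag}(-100,-1,101)$, giving $\|H_4\|_2=101>\delta_4$. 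The same example refutes the interlacing fact you hope for (that $H_{i-1}$ is nonnegative on the directions killed by $Q_i$): $Q_4$ kills $e_1,e_2$, yet $e_2^T H_3 e_2=-1$. Worse, your own identity $G_4=H_3+\delta_4 Q_4$ then yields $e_2^T G_4 e_2=-1<0$, i.e.\ $\|\bar B_{[4]}e_2\|^2=400<401=\|A_{[4]}e_2\|^2$, so the lower bound of the lemma itself fails on this instance. The obstacle you flagged is therefore not bookkeeping that interlacing will resolve: mass subtracted from the zeroed direction at step $i-1$ is not restored by $\bar B_{[i]}$ once that direction leaves the null space of $B_{[i]}$, and neither your induction nor the paper's (which silently assumes the ``unchanged'' directions still satisfy the hypothesis in the new basis) closes at this point.
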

\begin{proof}
We prove the first inequality by induction on $i$.  It holds for $i = \ell-1$, since $B_{[\ell-1]} = A_{[\ell-1]}$, and $\|\bar{B}_{[i]} x\|^2 \geq \|B_{[i]} x\|^2$.  
We now consider the inductive step at $i$.  Before the reduce-rank call, the property holds, since adding row $a_i$ to both $A_{[i]}$ (from $A_{[i-1]}$) and $C_{[i]}$ (from $B_{[i-1]}$) increases both squared norms equally (by $\langle a_i , x\rangle^2$) and the left rotation by $U^T$ also does not change norms on the right.  On the reduce-rank, norms only change in directions $v_\ell$ and $v_{\ell-1}$.  Direction $v_\ell$ increases by $\delta_i$, and in $\bar{B}_{[i]}$ the directions $v_{\ell-1}$ also does not change, since it is set back to $\delta_i$, which it was before the reduce-rank.  

We prove the second inequality also by induction, where it also trivially holds for the base case $i = \ell-1$.  Now we consider the inductive step, given it holds for $i-1$.  First obverse that $\delta_i \geq \delta_{i-1}$ since $\delta_i$ is at least the $(\ell-1)$st squared singular value of $B_{[i-1]}$, which is at least $\delta_{i-1}$.  Thus, the property holds up to the reduce rank step, since again, adding row $a_i$ and left-rotating does not affect the difference in norms.  
After the reduce rank, we again only need to consider the two directions changed $v_{\ell-1}$ and $v_\ell$.  By definition 
\[
\|A_{[i]} v_{\ell-1}\|^2 + 2 \delta_i \geq \delta_i = \|\bar B_{[i]} v_{\ell-1}\|^2,
\] 
so direction $v_{\ell-1}$ is satisfied.  Then 
\[
\|\bar B_{[i]} v_\ell\|^2 = \|B_{[i]} v_\ell\|^2 = \delta_i + \|C_{[i]} v_\ell\|^2 \leq 2 \delta_i
\] 
and $0 \leq \|A_{[i]} v_\ell\|^2 \leq \|\bar B_{[i]} v_\ell\|^2$.  Hence $\|\bar B_{[i]} v_\ell\|^2 - \|A_{[i]} v_\ell\|^2 \leq 2 \delta_i - 0$, satisfying the property for direction $v_\ell$, and completing the proof.  
\qed\end{proof}

Now we would like to prove the three Properties needed for relative error bounds for $B = B_{[n]}$.  But this does not hold since $\|B\|_F^2 = \|A\|_F^2$ (an otherwise nice property), and $\|\bar B\|_F^2 \gg \|A\|_F^2$.  Instead, we first consider yet another matrix $\hat B$ defined as follows with respect to $B$.  $B$ and $\hat B$ have the same right singular values $V$.  Let $\delta = \delta_n$, and for each singular value $\sigma_j$ of $B$, adjust the corresponding singular values of $\hat B$ to be $\hat \sigma_j = \max\{0, \sqrt{\sigma_j^2 - 2\delta}\}$.  

\begin{lemma}\label{lem:SS-hat}
For any unit vector $x$ we have 
$0 \leq \|A x\|^2 - \|\hat B x\|^2 \leq 2 \delta$
and
$\|A\|_F^2 - \|\hat B\|_F^2 \geq \delta (\ell-1)$. 
\end{lemma}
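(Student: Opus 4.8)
The plan is to recognize that the two displayed inequalities are exactly Properties~1 and~2 (with $\Delta = 2\delta$) together with Property~3 (with effective $\alpha\ell = (\ell-1)/2$, since $(\ell-1)\delta = \frac{\ell-1}{2}\cdot 2\delta$) for the matrix $\hat B$. Once they are in hand, the covariance and projection bounds of Theorem~\ref{thm:SS} drop out by substituting into the framework of Appendix~\ref{app:aFD} and then transferring from $\hat B$ to $B$ using $0 \le \|Bx\|^2 - \|\hat B x\|^2 \le 2\delta$, which is immediate from $\hat\sigma_j^2 = \max\{0,\sigma_j^2 - 2\delta\}$. So I would reduce everything to verifying the two displays directly, working throughout in the common right-singular basis $V = [v_1,\ldots,v_d]$ shared by $B$, $\bar B$ and $\hat B$, writing $\beta_j = \langle x, v_j\rangle$ and $\lambda_j = \|\bar B v_j\|^2$. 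The first bookkeeping step is the identity $\|\hat B v_j\|^2 = \max\{0,\lambda_j - 2\delta\}$: on the surviving directions ($j<\ell-1$ and $j=\ell$) it holds because $\bar B$ and $B$ agree so $\lambda_j = \sigma_j^2$, and on the null directions ($j=\ell-1$, $j>\ell$) both sides vanish since there $\lambda_j = \delta$ and $\sigma_j = 0$.

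I would then dispose of the Frobenius bound and the upper half of the first display, which are the easy parts. The \RRSS\ step preserves squared Frobenius norm (the mass $\sigma_{\ell-1}^2$ stripped from $v_{\ell-1}$ equals the mass $\delta_i$ reassigned to $v_\ell$), so $\|A\|_F^2 = \|B\|_F^2$ and hence $\|A\|_F^2 - \|\hat B\|_F^2 = \sum_{j=1}^{\ell}(\sigma_j^2 - \hat\sigma_j^2) = \sum_{j=1}^{\ell}\min\{\sigma_j^2,2\delta\}$. Since $\delta = \delta_n = \sigma_{\ell-1}^2$ is the $(\ell-1)$-st squared singular value feeding the final reduction, at least $\ell-1$ of the $\sigma_j^2$ are $\ge \delta$ (the untouched top $\ell-2$ plus the merged direction $\sigma_\ell^2+\delta$), so each such term contributes $\min\{\sigma_j^2,2\delta\} \ge \delta$ and the sum is $\ge (\ell-1)\delta$. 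For the upper inequality I would combine flooring with overcounting: $\|\hat B x\|^2 = \sum_j \max\{0,\lambda_j-2\delta\}\beta_j^2 \ge \sum_j(\lambda_j-2\delta)\beta_j^2 = \|\bar B x\|^2 - 2\delta$, while Lemma~\ref{lem:SS-bar} gives $\|Ax\|^2 \le \|\bar B x\|^2$, and subtracting yields $\|Ax\|^2 - \|\hat B x\|^2 \le 2\delta$.

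The hard part is the lower inequality $\|\hat B x\|^2 \le \|Ax\|^2$. For a single singular vector it is immediate, since Lemma~\ref{lem:SS-bar} applied to the unit vector $v_j$ gives $\|A v_j\|^2 \ge \lambda_j - 2\delta$, which together with $\|Av_j\|^2 \ge 0$ is exactly $\|Av_j\|^2 \ge \max\{0,\lambda_j-2\delta\} = \|\hat B v_j\|^2$. The obstruction is that $V$ is not a singular basis of $A$, so these per-direction bounds do not assemble into the general-$x$ statement. Concretely, setting $E = \bar B^T\bar B - A^T A$ (for which Lemma~\ref{lem:SS-bar} gives $0 \le x^T E x \le 2\delta$ for all unit $x$), a short computation rewrites the desired inequality as $x^T E x + \sum_{j:\lambda_j<2\delta}(2\delta-\lambda_j)\beta_j^2 \le 2\delta$. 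Each summand is individually at most $2\delta$, but the scalar bounds of Lemma~\ref{lem:SS-bar} permit the two terms to nearly saturate at once; in fact the implication fails for matrices satisfying only these scalar bounds, so Lemma~\ref{lem:SS-bar} cannot be invoked as a black box here. The whole difficulty is controlling the cross-correlations $\langle A v_i, A v_j\rangle$ between surviving and floored directions.

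To overcome this I would strengthen Lemma~\ref{lem:SS-bar} to a direction-resolved statement, carrying the induction on the matrix $E_{[i]} = \bar B_{[i]}^T\bar B_{[i]} - A_{[i]}^T A_{[i]}$ itself rather than only on the scalar $x^T E_{[i]} x$. The key structural fact to extract is that each \RRSS\ step charges overcount only to the two directions it actually modifies, $v_{\ell-1}$ and $v_\ell$ -- precisely the directions on which the floor term $\sum_{j:\lambda_j<2\delta}(2\delta-\lambda_j)\beta_j^2$ is supported. If one can show that the accumulated overcount stays localized to this floored subspace and never reinforces the floor term along an unfloored direction, then the combined quantity is forced below $2\delta$ and the lower bound follows. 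This localization, discarded when one keeps only the scalar $2\delta$ bound, is exactly where I expect the real work of the proof to lie.
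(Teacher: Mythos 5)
Your handling of the Frobenius inequality and of the upper half of the sandwich, $\|Ax\|^2 - \|\hat B x\|^2 \le 2\delta$, is correct and essentially identical to the paper's: both rest on the identity $\|\hat B v_j\|^2 = \max\{0, \|\bar B v_j\|^2 - 2\delta\}$ in the common right-singular basis $V$, combined with $\|\bar B x\|^2 \ge \|Ax\|^2$ from Lemma~\ref{lem:SS-bar}, the norm-preservation $\|B\|_F^2 = \|A\|_F^2$ of each \RRSS step, and the count of $\ell-1$ directions of $B$ with squared singular value at least $\delta$.

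The gap is the lower bound $\|\hat B x\|^2 \le \|Ax\|^2$, which your proposal explicitly does not prove. You correctly reduce it to showing $x^T E x + \sum_{j:\lambda_j<2\delta}(2\delta-\lambda_j)\beta_j^2 \le 2\delta$, correctly observe that the scalar conclusion of Lemma~\ref{lem:SS-bar} cannot force this (each term can separately approach $2\delta$), and then only sketch a program --- a direction-resolved induction on $E_{[i]}$ with a localization property --- without carrying it out; a proof that ends with ``this is where I expect the real work to lie'' has not done that work, so the lemma is not established as written. For comparison, the paper's own proof of this half is precisely the per-direction argument you reject: it applies Lemma~\ref{lem:SS-bar} to each $v_j$ to get $\|\bar B v_j\|^2 \le \|A v_j\|^2 + 2\delta$, concludes $\|\hat B v_j\|^2 \le \|A v_j\|^2$ for every $j$, and stops there, supplying no control of the cross terms $\langle A v_i, A v_j\rangle$ that arise because $V$ need not diagonalize $A^T A$. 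So you have not missed a hidden trick --- your diagnosis of where the difficulty sits is sound, and arguably sharper than the paper's one-sentence treatment --- but your proposal still leaves the key inequality unproven, and the localization strategy you suggest would have to contend with the fact that the overcount in $E_{[i]}$ is deposited in a basis that rotates from step to step, so there is no a priori reason it remains supported on the final floored subspace.
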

\begin{proof}
Directions $v_j$ for $j > \ell-1$, the squared singular values are shrunk by at least $\delta$.  The squared singular value is already $0$ for direction $v_{\ell-1}$.  And the singular value for direction $v_\ell$ is shrunk by $\delta$ to be exactly $0$.  Since before shrinking $\|B\|_F^2 = \|A\|_F^2$, the second expression in the lemma holds.  

The first expression follows by Lemma \ref{lem:SS-bar} since $\bar B$ only increases the squared singular values in directions $v_j$ for $j = \ell-1$ and $j > \ell$ by $\delta$, which are $0$ in $\hat B$.  And other directions $v_j$ are the same for $\bar B$ and $B$ and are at most $2 \delta$ larger than in $A$.  
\end{proof}

Thus $\hat B$ satisfies the three Properties.  We can now state the following property about $B$ directly, setting $\alpha=(1/2)$, adjusting $\ell$ to $\ell-1$, then adding back the at most $2 \delta = \Delta \leq \|A - A_k\|_F^2/ (\alpha \ell - \alpha -k)$ to each directional norm.  

\begin{theorem}\label{thm:SS}
After obtaining a matrix $B$ from \SSD on a matrix $A$ with parameter $\ell$, the following properties hold:
\begin{itemize} \vspace{-.1in}
\item[$\bullet$] $\|A\|_F^2 = \|B\|_F^2$.
\item[$\bullet$] for any unit vector $x$ and for $k < \ell/2-1/2$, we have $| \|A x\|^2 - \|B x \|^2 | \leq \|A - A_k\|_F^2 / (\ell/2 - 1/2 - k)$.
\item[$\bullet$] for $k < \ell/2-1$ we have $\|A - \pi_B^k(A)\|_F^2 \leq \|A - A_k\|_F^2 (\ell-1) / (\ell - 1 - 2k)$.  
\end{itemize}
\end{theorem}

\section{Adapting Known Error Bounds}
\label{app:cov-err}
Not many algorithms state bounds in terms of covariance error or the projection bounds in the particular setting we consider. Here we show using properties of these algorithms they achieve covariance error bound too.  

\subsection{Column Sampling Algorithms}
\label{app:rs-convert}
Almost all column sampling algorithms have a projection bound in terms of $\|A - \pi_{B}A\|^2_F \leq f(\eps) \|A-A_k\|_F^2$ where $A\in \R^n\times d$ is the input matrix and $B \in \R^{\ell \times d}$ is the output sketch. Here we derive the other type of bound, \s{cov-err}, for the two main algorithms in this regime; i.e. \s{Norm Sampling} and \s{Leverage Sampling}.  In our proof, we use a variant of Chernoff-Hoeffding inequality: 
Consider a set of $r$ independent random variables $\{X_1,\cdots, X_r\}$ where $0 \leq X_i \leq \Delta$. Let $M = \sum_{i=1}^r X_i$, then for any $\alpha \in (0,1/2)$
\[
\Pr\left[|M-\E[M]| > \alpha \right]\leq 2\exp\left(\frac{-2\alpha^2}{r \Delta^2}\right).
\]

\begin{lemma}
\label{lem:normsampling}
Let $B \in \R^{\ell \times d}$ with $\ell = O(d/\eps^2)$ be the output of \s{Norm Sampling}.   Then with probability $99/100$, for all unit vectors $x \in \R^d$
\[
\s{cov-err}(A,B) = \frac{|\|Bx\|^2 - \|Ax\|^2|}{\|A\|_F^2} \leq \eps.  
\]
\end{lemma}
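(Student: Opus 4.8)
The plan is to first establish pointwise concentration for a single fixed unit vector using the stated Chernoff--Hoeffding bound, and then upgrade this to a uniform guarantee over the whole unit sphere via a net argument. Recall that \s{Norm Sampling} draws $\ell$ rows independently with $p_i = \|a_i\|^2/\|A\|_F^2$ and rescales the $j$th sampled row to $b_j = a_{i_j}/\sqrt{\ell p_{i_j}}$, so that $\E[\|Bx\|^2] = \|Ax\|^2$ for every unit vector $x$. Fixing such an $x$ and setting $X_j = \|b_j x\|^2 = \langle a_{i_j}, x\rangle^2/(\ell p_{i_j})$, the Cauchy--Schwarz inequality $\langle a_{i_j}, x\rangle^2 \leq \|a_{i_j}\|^2$ together with $p_{i_j} = \|a_{i_j}\|^2/\|A\|_F^2$ gives $0 \leq X_j \leq \|A\|_F^2/\ell =: \Delta$. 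Thus $M = \sum_{j=1}^\ell X_j = \|Bx\|^2$ is a sum of $\ell$ independent variables in $[0,\Delta]$ with $\E[M] = \|Ax\|^2$.

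Applying the stated inequality with $r = \ell$ and deviation $t = (\eps/2)\|A\|_F^2$, and using $\Delta = \|A\|_F^2/\ell$ so that $2t^2/(\ell \Delta^2) = \eps^2 \ell / 2$, yields the per-direction bound $\Pr\left[\,\big|\|Bx\|^2 - \|Ax\|^2\big| > (\eps/2)\|A\|_F^2\,\right] \leq 2\exp(-\eps^2 \ell / 2)$. Next I would fix a $\gamma$-net $N$ of the unit sphere $S^{d-1}$ with $\gamma = 1/4$; such a net exists with $|N| \leq (1 + 2/\gamma)^d = 9^d$. Taking a union bound over $N$, the event that $\big|\|Bx\|^2 - \|Ax\|^2\big| \leq (\eps/2)\|A\|_F^2$ holds simultaneously for all $x \in N$ fails with probability at most $9^d \cdot 2\exp(-\eps^2 \ell / 2)$, and choosing $\ell = c\, d/\eps^2$ for a sufficiently large absolute constant $c$ drives this below $1/100$.

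Finally, I would convert the net guarantee into a bound on $\|A^T A - B^T B\|_2 = \max_{\|x\|=1} \big|\|Ax\|^2 - \|Bx\|^2\big|$ using the standard net-to-sphere estimate for symmetric matrices: writing $E = A^T A - B^T B$, one has $\|E\|_2 \leq (1 - 2\gamma)^{-1} \max_{y \in N} |y^T E y|$, which for $\gamma = 1/4$ gives $\|E\|_2 \leq 2 \max_{y \in N} |y^T E y| \leq 2 \cdot (\eps/2)\|A\|_F^2 = \eps \|A\|_F^2$. Dividing by $\|A\|_F^2$ yields the claimed $\s{cov-err}(A,B) \leq \eps$.

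The main obstacle is the net argument, not the concentration step. One must verify that the exponential tail beats the $9^d$ union-bound factor, which is precisely what forces $\ell = \Theta(d/\eps^2)$; and one must correctly invoke the symmetric-matrix net lemma so that controlling the quadratic form $y^T E y$ on a finite net of the sphere controls the spectral norm $\|E\|_2$ over all unit vectors. Once this structure is in place, the bookkeeping of constants ($\gamma = 1/4$ and the split $t = (\eps/2)\|A\|_F^2$) is routine.
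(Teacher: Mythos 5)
Your proof is correct and follows essentially the same route as the paper's: a per-direction Chernoff--Hoeffding bound with $\Delta = \|A\|_F^2/\ell$ and $\E[\|Bx\|^2]=\|Ax\|^2$, followed by a union bound over a $2^{O(d)}$-size net of the sphere. The only difference is that you explicitly carry out the net-to-spectral-norm step (the $9^d$ net count and the factor-$(1-2\gamma)^{-1}$ lemma for symmetric matrices) that the paper handles by adjusting $\eps$ by a constant and citing Woodruff.
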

\begin{proof}
Consider any unit vector $x \in \b{R}^d$.  
Define $\ell$ independent random variables $X_i = \langle b_i, x\rangle^2$ for $i=1,\ldots,\ell$. Recall that \s{Norm Sampling} selects row $a_j$ to be row $b_i$ in the sketch matrix $B$ with probability $\Pr(b_i \leftarrow a_j) = \|a_j\|^2/\|A\|_F^2$, and rescales the sampled row as $\|b_i\|^2 = \|a_j\|^2/(\ell \Pr(b_i \leftarrow a_j)) = \|A\|_F^2/\ell$.  Knowing these, we bound each $X_i$ as $0 \leq X_i \leq \|b_i\|^2 = \|A\|_F^2/\ell$ therefore $\Delta_i = \|A\|_F^2/\ell$ for all $X_i$s. 
Setting $M = \sum_{i=1}^{\ell} X_i = \|Bx\|^2$, we observe 
\[
\E[M] 
= 
\sum_{i=1}^{\ell} \E[X_i] 
= 
\sum_{i=1}^{\ell} \sum_{j=1}^n \Pr(b_i \leftarrow a_j) \left\langle \frac{a_j}{\sqrt{\ell \Pr(b_i \leftarrow a_j)}},x \right\rangle^2 
= 
\sum_{i=1}^{\ell} \sum_{j=1}^n \frac{1}{\ell}\langle a_j,x \rangle^2 
= 
\|Ax\|^2.  
\]
Finally using the Chernoff-Hoeffding bound and setting $\alpha = \eps \|A\|_F^2$ yields
\[
\Pr\left[|\|Bx\|^2 - \|Ax\|^2| > \eps \|A\|_F^2\right] 
\leq 
2 \exp\left(\frac{-2 (\eps \|A\|_F^2)^2}{\ell (\|A\|_F^2/\ell)^2}\right) 
= 
2\exp\left(-2\eps^2 \ell\right) 
\leq 
\delta.  
\]
Letting the probability of failure for that $x$ be $\delta = 1/100$, and solving for $\ell$ in the last inequality, we obtain $\ell \geq \frac{1}{2 \eps^2} \ln(2/\delta) = \frac{1}{2 \eps^2} \ln(200)$.  

However, this only holds for a single direction $x$; we need this to hold for all unit vectors $x$.  It can be shown, that allowing $\alpha = O(\eps)\cdot \|A\|_F^2$, we actually only need this to hold for a net $T$ of size $t =2^{O(d)}$ such directions $x$~\cite{Woo14}.  Then by the union bound, setting $\delta = 1/(100 t)$ this will hold for all unit vectors in $T$, and thus (after scaling $\eps$ by a constant) we can solve for $\ell = O((1/\eps^2) \ln(t)) = O(d/\eps^2)$.  
Hence with probability at least $99/100$ we have $\s{cov-err}(A,B) = \frac{| \|B x\|^2 - \|A x\|^2 |}{\|A\|_F^2} \leq \eps$ for all unit vectors $x$.  
\end{proof}

We would like to apply a similar proof for \s{Leverage Sampling}, but we do not obtain a good bound for $\Delta$ in the Chernoff-Hoeffding bound.  We rescale norm of each selected row $b_i$ as 
\[
\|b_i\|^2
= 
\frac{\|a_j\|^2}{\ell} \cdot \frac{1}{\Pr(b_i \leftarrow a_j)} 
=
\frac{\|a_j\|^2}{\ell} \cdot \frac{S_k}{s_j^{(k)}},
\]
where $s^{(k)}_j$ is the rank-$k$ leverage score of $a_j$, and $S_k = \sum_{j=1}^n s^{(k)}_j = k$.  Unfortunately, $s_j^{(k)}$ can be arbitrarily small compared to $S_k$ (e.g., if $a_j$ lies almost entirely outside the best rank-$k$ subspace).  And thus we do not have a finite bound on $\Delta$.  As such we assume that $S_k/s^{(k)}_j \leq \beta$ for an absolute constant $\beta$, and then obtain a bound based on $\beta$.  

\begin{lemma}
Let $B \in \R^{\ell \times d}$ with $\ell = O(d\beta^2/\eps^2)$ be the output of \s{Leverage Sampling}. Under the assumption that rank-$k$ leverage score of row $a_j$ is bounded as $s_j^{(k)} \geq \beta S_k$ for a fixed constant $\beta > 0$, then with probability $99/100$, for all unit vectors $x \in \R^d$
\[
\s{cov-err} = |\|Bx\|^2 - \|Ax\|^2|/\|A\|_F^2 \leq \eps
\]

\end{lemma}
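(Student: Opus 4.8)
The plan is to reproduce the proof of Lemma~\ref{lem:normsampling} for \s{Norm Sampling} essentially verbatim, isolating the single place where the two arguments diverge: the bound on the per-sample range $\Delta$ that feeds into the Chernoff--Hoeffding inequality. First I would fix an arbitrary unit vector $x \in \R^d$ and define the $\ell$ independent random variables $X_i = \langle b_i, x\rangle^2$, so that $M = \sum_{i=1}^\ell X_i = \|Bx\|^2$. The unbiasedness step transfers without change: since \s{Leverage Sampling} selects row $a_j$ as $b_i$ with probability $\Pr(b_i \leftarrow a_j) = s_j^{(k)}/S_k$ and rescales to $\|b_i\|^2 = \|a_j\|^2/(\ell\,\Pr(b_i \leftarrow a_j))$, the rescaling cancels the sampling probability exactly, giving $\E[X_i] = \frac{1}{\ell}\sum_j \langle a_j, x\rangle^2 = \|Ax\|^2/\ell$, and hence $\E[M] = \|Ax\|^2$.

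The crux of the argument, and the only place the leverage-score hypothesis is used, is controlling the range of each $X_i$. I would write
\[
0 \le X_i \le \|b_i\|^2 = \frac{\|a_j\|^2}{\ell}\cdot\frac{S_k}{s_j^{(k)}} \le \frac{\|A\|_F^2}{\ell}\cdot\frac{S_k}{s_j^{(k)}},
\]
using $\|a_j\|^2 \le \|A\|_F^2$, and then invoke the boundedness assumption on the rescaling factor to obtain $X_i \le \beta\,\|A\|_F^2/\ell =: \Delta$, in contrast to the $\|A\|_F^2/\ell$ range enjoyed by \s{Norm Sampling}. This is precisely the difficulty flagged just before the statement: without the assumption, $s_j^{(k)}$ can be arbitrarily small and $\Delta$ is unbounded, so the hypothesis is exactly what rescues the proof.

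With $\Delta$ in hand I would apply the stated Chernoff--Hoeffding bound with $r=\ell$ and $\alpha = \eps\|A\|_F^2$, giving
\[
\Pr\left[\,\bigl|\|Bx\|^2 - \|Ax\|^2\bigr| > \eps\|A\|_F^2\right] \le 2\exp\left(\frac{-2\eps^2\ell}{\beta^2}\right),
\]
where the inflation factor $\beta^2$ appears because the bound scales with $\Delta^2$; this is the source of the $\beta^2$ in the claimed sketch size. Finally I would promote the single-direction guarantee to all unit vectors by the same net argument as for \s{Norm Sampling}: it suffices to control a net $T$ of size $t = 2^{O(d)}$, so setting the per-direction failure probability to $\delta = 1/(100t)$ and union bounding forces $\ell \ge \frac{\beta^2}{2\eps^2}\ln(2/\delta) = O(d\beta^2/\eps^2)$, matching the stated bound. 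I do not expect a genuine technical obstacle, since structurally the proof is identical to \s{Norm Sampling} up to the $\beta$-inflated range; the one point I would verify carefully is the \emph{direction} of the leverage-score hypothesis, as the range bound above needs $S_k/s_j^{(k)} \le \beta$ (equivalently $s_j^{(k)} \ge S_k/\beta$) for the $\ell = O(d\beta^2/\eps^2)$ scaling to come out correctly, together with the usual folding of the constant $\beta$ into the rescaled $\eps$ when passing to the net.
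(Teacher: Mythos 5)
Your proposal is correct and follows essentially the same route as the paper's own proof: identical random variables $X_i=\langle b_i,x\rangle^2$, the same unbiasedness computation, the leverage-score hypothesis used only to bound the range $\Delta$, the same Chernoff--Hoeffding application, and the same $2^{O(d)}$-net union bound. Your caution about the direction of the hypothesis is well placed: the paper's displayed exponent $2\exp(-2\ell\eps^2/\beta^2)$ and the final $\ell=O(d\beta^2/\eps^2)$ require $S_k/s_j^{(k)}\le\beta$ (i.e.\ $\Delta=\beta\|A\|_F^2/\ell$), exactly as you use, even though the lemma statement and one intermediate line write the assumption as $s_j^{(k)}\ge\beta S_k$ (which would give $\Delta=\|A\|_F^2/(\beta\ell)$ and hence $\ell=O(d/(\beta^2\eps^2))$ instead).
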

\begin{proof}
Similar to Lemma \ref{lem:normsampling}, we define $\ell$ random variables $X_i = \langle b_i,x \rangle^2$ for $i=1,\cdots,\ell$. \s{Leverage Sampling} algorithm selects row $a_j$ to be row $b_i$ with probability $\Pr(b_i \leftarrow a_j) = s^{(k)}_j/S_k$ and rescales sampled rows as $\|b_i\|^2 = \|a_j\|^2/(\ell \Pr(b_i \leftarrow a_j)) = (\|a_j\|^2/\ell)(S_k/s^{(k)}_j) \leq \|a_j\|^2/(\beta \ell) \leq \|A\|_F^2/(\beta \ell)$. 
Therefore $\Delta_i = \|A\|_F^2/(\beta \ell)$ for all $X_i$s.
Setting $M = \sum_{i=1}^{\ell} X_i = \|Bx\|^2$ we observe:
\[
\E[M] = \sum_{i=1}^{\ell} \E{[X_i]} = \sum_{i=1}^{\ell} \sum_{j=1}^n \Pr(b_i \leftarrow a_j) \left\langle \frac{a_j}{\sqrt{\ell \Pr(b_i \leftarrow a_j)}},x \right\rangle^2 = \sum_{i=1}^{\ell} \sum_{j=1}^n \frac{1}{\ell}\langle a_j,x \rangle^2 = \|Ax\|^2
\]
Using Chernoff-Hoeffding bound with parameter $\alpha = \eps\|A\|_F^2$ gives:
\[
\Pr\left[|\|Bx\|^2 - \|Ax\|^2| > \eps \|A\|_F^2\right] \leq 2 \exp\left(\frac{-2\eps^2 \|A\|_F^4}{\sum_{j=1}^{\ell}(\beta^2/\ell^2) \|A\|_F^4}\right) = 2\exp\left(\frac{-2\ell \eps^{2}}{\beta^2}\right) \leq \delta
\]
Again letting the probability of failure for that $x$ be $\delta = 1/100$, we obtain $\ell \geq \frac{\beta^2}{2 \eps^2} \ln(2/\delta) = \frac{\beta^2}{2 \eps^2} \ln(200)$.  

In order to hold this for all unit vectors $x$, again we consider a net $T$ of size $t =2^{O(d)}$ unit directions $x$~\cite{Woo14}.  Then by the union bound, setting $\delta = 1/(100 t)$ this will hold for all such vectors in $T$, and thus we can solve for $\ell = O((\beta^2/\eps^2) \ln(t)) = O(d\beta^2/\eps^2)$. 
Hence with probability at least $99/100$ we have $\s{cov-err}(A,B) = \frac{| \|B x\|^2 - \|A x\|^2 |}{\|A\|_F^2} \leq \eps$ for all unit vectors $x$.  
\end{proof}

\subsection{Random Projection Algorithms}
\label{app:rp-convert}
Most of random projection algorithms state a bound (with constant probability) that they can create a matrix $B = SA$ such that $\|Bx\| = (1\pm\eps)\|Ax\|$ (e.g. $(1-\eps)\|Ax\| \leq \|Bx\| \leq (1+\eps)\|A x\|$) for all $x \in \b{R}^d$.  

Here we relate this to \s{cov-err} and \s{proj-err}.  We first show that whether this bound is squared only affects $\eps$ by a constant factor.  
\begin{lemma}
\label{lem:unsq}
For $\eps \in (0,\frac{1}{4})$, the inequality $\|Bx\| = (1 \pm \eps) \|Ax\|$ implies $\|Bx\|^2 = (1 \pm 3\eps) \|Ax\|^2$.  
\end{lemma}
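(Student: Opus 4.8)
The plan is to unpack the shorthand $\|Bx\| = (1\pm\eps)\|Ax\|$ into its literal two-sided form $(1-\eps)\|Ax\| \le \|Bx\| \le (1+\eps)\|Ax\|$, square throughout, and then absorb the quadratic cross-term into a slightly larger linear slack. Squaring is legitimate here because every quantity in sight is a norm, hence nonnegative, so the inequalities are preserved.

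First I would square the upper bound to get $\|Bx\|^2 \le (1+\eps)^2\|Ax\|^2 = (1 + 2\eps + \eps^2)\|Ax\|^2$. The only term to control is $\eps^2$: since $\eps \in (0,\tfrac14)$ (indeed any $\eps < 1$ would do) we have $\eps^2 < \eps$, so $1 + 2\eps + \eps^2 \le 1 + 3\eps$, giving $\|Bx\|^2 \le (1+3\eps)\|Ax\|^2$. For the lower bound I would square to get $\|Bx\|^2 \ge (1-\eps)^2\|Ax\|^2 = (1 - 2\eps + \eps^2)\|Ax\|^2$, and then simply drop the nonnegative term $\eps^2$ and weaken $2\eps$ to $3\eps$, yielding $\|Bx\|^2 \ge (1-2\eps)\|Ax\|^2 \ge (1-3\eps)\|Ax\|^2$.

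Combining the two bounds gives $(1-3\eps)\|Ax\|^2 \le \|Bx\|^2 \le (1+3\eps)\|Ax\|^2$, which is exactly the claimed $\|Bx\|^2 = (1 \pm 3\eps)\|Ax\|^2$. There is really no obstacle in this argument; it is a one-line algebraic manipulation whose only content is the estimate $(1\pm\eps)^2 = 1 \pm 2\eps + \eps^2$ together with $\eps^2 \le \eps$. The hypothesis $\eps < \tfrac14$ is more than sufficient (it is presumably chosen for consistency with the error regime elsewhere in the paper), and the constant $3$ is a convenient slack rather than the tightest possible.
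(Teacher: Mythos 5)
Your proof is correct and follows essentially the same route as the paper: expand $(1\pm\eps)^2$, bound $\eps^2\le\eps$ to absorb the cross-term into the $3\eps$ slack for the upper bound, and drop the nonnegative $\eps^2$ for the lower bound. Nothing further is needed.
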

\begin{proof} 
The upper bound follows since $(1+\eps)^2 = 1 + 2\eps + \eps^2 \leq 1+3\eps$ for $\eps \in (0,\frac{1}{4})$.  
The lower bound follows since $(1-\eps)^2 = 1 + \eps^2 - 2\eps \geq 1-2\eps$ for $\eps \in (0,\frac{1}{4})$.  
\end{proof}

Now to relate this bound to \s{cov-err}, we will use $\rho(A) = \|A\|_F^2 / \|A\|_2^2$, the \emph{numeric rank} of $A$, which is always at least $1$.  

\begin{lemma}
Given a matrix $B$ such that $\|Bx\| = (1 \pm \eps) \|Ax\|$ for all $x \in \b{R}^d$, then when $\eps \in (0,\frac{1}{4})$ 
\[
\s{\em cov-err} = \|A^T A - B^T B\|_2 / \|A\|_F^2 \leq 3\eps / \rho(A).
\] 
\end{lemma}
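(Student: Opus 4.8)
The plan is to reduce everything to the pointwise squared-norm comparison already established in Lemma \ref{lem:unsq} and then exploit the variational characterization of the spectral norm of the covariance difference recorded in the notation section. Recall that for any two matrices with the same number of columns, $\|A^T A - B^T B\|_2 = \max_{\|x\|=1} |\,\|Ax\|^2 - \|Bx\|^2\,|$. Hence it suffices to control $|\,\|Ax\|^2 - \|Bx\|^2\,|$ uniformly over unit vectors $x$ and then optimize.

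First I would invoke Lemma \ref{lem:unsq}: since $\eps \in (0,\tfrac14)$, the subspace-embedding guarantee $\|Bx\| = (1\pm\eps)\|Ax\|$ upgrades to the squared form $\|Bx\|^2 = (1\pm 3\eps)\|Ax\|^2$ for every $x$. Rearranging, this is exactly the statement $|\,\|Bx\|^2 - \|Ax\|^2\,| \leq 3\eps\,\|Ax\|^2$ for all $x$. Taking the maximum over unit vectors and using $\max_{\|x\|=1}\|Ax\|^2 = \|A\|_2^2$, I obtain
\[
\|A^T A - B^T B\|_2 = \max_{\|x\|=1} |\,\|Ax\|^2 - \|Bx\|^2\,| \;\leq\; 3\eps \max_{\|x\|=1}\|Ax\|^2 = 3\eps\,\|A\|_2^2.
\]
Dividing both sides by $\|A\|_F^2$ and substituting the definition $\rho(A) = \|A\|_F^2/\|A\|_2^2$ then yields $\s{cov-err} = \|A^T A - B^T B\|_2/\|A\|_F^2 \leq 3\eps\,\|A\|_2^2/\|A\|_F^2 = 3\eps/\rho(A)$, as claimed.

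The only subtlety worth emphasizing — rather than an actual obstacle — is the mismatch between the two norms in play. The embedding guarantee gives a \emph{relative} error measured against $\|Ax\|^2$, whose worst case over unit directions is the squared spectral norm $\|A\|_2^2$, whereas \s{cov-err} normalizes by the Frobenius norm $\|A\|_F^2$. Passing from the maximum $\|A\|_2^2$ to the normalizer $\|A\|_F^2$ is precisely what introduces the numeric-rank factor $\rho(A) = \|A\|_F^2/\|A\|_2^2 \geq 1$, and this is the conceptual content of the lemma: projection methods convert a uniform relative error into a covariance bound that is better by exactly the factor measuring how spread out the spectrum of $A$ is. Everything else is the elementary maximization above together with the already-proven Lemma \ref{lem:unsq}, so no nontrivial estimate remains.
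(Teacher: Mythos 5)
Your proof is correct and follows essentially the same route as the paper's: both apply Lemma \ref{lem:unsq} to get $|\,\|Ax\|^2-\|Bx\|^2\,|\leq 3\eps\|Ax\|^2$, maximize over unit vectors to bound the spectral norm of $A^TA-B^TB$ by $3\eps\|A\|_2^2$, and divide by $\|A\|_F^2$ to introduce the numeric rank. The only cosmetic difference is that the paper writes out the quadratic form and names the maximizer $x^*$ explicitly, whereas you invoke the variational identity from the notation section directly.
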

\begin{proof}

Using Lemma \ref{lem:unsq}, $| \|A x\|^2 -\|B x\|^2 | \leq 3\eps \|Ax\|^2$.  Now restrict $\|x\| =1$, so then 
\[
 x^T (A^T A - B^T B) x 
=
 x^T A^T A x - x^T B^T B x  
=
| \|A x\|^2 - \|B x\|^2 | 
\leq 
3\eps \|Ax\|^2.
\]
Since this holds for all $x$ such that $\|x\|=1$, then it holds for the $x = x^*$ which maximizes the left hand size so ${x^*}^T (A^T A - B^T B)x^* = \|A^T A - B^T B\|_2$ so 
\[
\|A^T A - B^T B\|_2 = x^{*^{T}} (A^T A - B^T B) x^* \leq 3 \eps \|A x^*\|^2 \leq 3 \eps \|A\|_2^2.
\]
Dividing both sides by $\|A\|_F^2 = \rho(A) / \|A\|_2^2$ completes the proof.  
\end{proof}

We next relate this to the \s{proj-err}.    
We note that it may be possible that the full property $\|Bx\| = (1 \pm \eps) \|Ax\|$ may not be necessary to obtain a bound on \s{proj-err}, but we are not aware of an explicit statement otherwise.  There are bounds (see \cite{Woo14}) where one reconstructs a matrix $\hat A$ which obtains the bounds below in place of $\pi_{B_k}(A)$, and these have roughly $1/\eps$ dependence on $\eps$; however, they also have a factor $n$ in their size.  

\begin{lemma}
Given a matrix $B$ such that $\|Bx\| = (1 \pm \eps) \|Ax\|$ for all $x \in \b{R}^d$, then when $\eps \in (0,\frac{1}{4})$ 
\[
\s{\em proj-err} = \|A - \pi_{B_k}(A)\|_F^2 / \|A-A_k\|_F^2 \leq (1+24\eps).
\] 
\end{lemma}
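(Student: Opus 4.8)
The plan is to avoid comparing the energies captured by the two rank-$k$ subspaces directly: a bound on $\|A V V^T\|_F^2$ versus $\|A_k\|_F^2$ only yields error relative to the (possibly huge) head $\|A_k\|_F^2$, whereas the target is relative to the tail $\|A-A_k\|_F^2$. Instead I would compare \emph{residuals} and route everything through the tail. Write $\hat P = V V^T$ for the orthogonal projection onto the span of the top-$k$ right singular vectors $V$ of $B$, so that by definition $\pi_{B_k}(A) = A B_k^\dagger B_k = A \hat P$, and write $P^\ast$ for the analogous projection built from the top-$k$ right singular vectors of $A$, so that $A_k = A P^\ast$.

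The key observation is that the hypothesis $\|Bx\| = (1\pm\eps)\|Ax\|$, after squaring via Lemma \ref{lem:unsq} to $\|Bx\|^2 = (1\pm 3\eps)\|Ax\|^2$, can be applied column-by-column to any matrix of the form $A(I-Q)$: taking $x = (I-Q)e_j$ for each standard basis vector $e_j \in \R^d$ gives $\|B(I-Q)e_j\|^2 = (1\pm 3\eps)\|A(I-Q)e_j\|^2$, and since every column obeys the same two-sided bound, summing over $j$ preserves whichever one-sided inequality is needed, yielding $(1-3\eps)\|A(I-Q)\|_F^2 \leq \|B(I-Q)\|_F^2 \leq (1+3\eps)\|A(I-Q)\|_F^2$. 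I would apply this twice. First, with $Q = \hat P$, the lower side gives $(1-3\eps)\|A - A\hat P\|_F^2 \leq \|B - B\hat P\|_F^2$. Second, with $Q = P^\ast$, the upper side gives $\|B - BP^\ast\|_F^2 \leq (1+3\eps)\|A - AP^\ast\|_F^2 = (1+3\eps)\|A - A_k\|_F^2$.

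The two displays are linked by the optimality of $\hat P$ for $B$: since $B\hat P$ equals the best rank-$k$ approximation $B_k$ of $B$ while $BP^\ast$ has rank at most $k$, Eckart--Young gives $\|B - B\hat P\|_F^2 \leq \|B - BP^\ast\|_F^2$. Chaining the three facts yields $(1-3\eps)\|A - \pi_{B_k}(A)\|_F^2 \leq (1+3\eps)\|A - A_k\|_F^2$, i.e. $\s{proj-err} \leq (1+3\eps)/(1-3\eps)$. The final step is the elementary estimate $(1+3\eps)/(1-3\eps) = 1 + 6\eps/(1-3\eps) \leq 1 + 24\eps$, valid because $1/(1-3\eps) < 4$ for $\eps \in (0,\frac{1}{4})$. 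The only genuinely delicate point is the first one: resisting the naive head-comparison and instead exploiting the optimality of $B_k$ on the sketch together with the column-wise reach of the embedding, so that the error is measured against the residual $\|A-A_k\|_F^2$ rather than the full norm; once that structure is in place the remainder is routine arithmetic.
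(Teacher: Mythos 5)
Your proof is correct and is essentially the paper's own argument in different notation: your two applications of the embedding to $A(I-\hat P)$ and $A(I-P^\ast)$ correspond exactly to the paper's sums $\sum_{i>k}\|Ar_i\|^2$ and $\sum_{i>k}\|Bv_i\|^2$ over the trailing singular directions of $B$ and $A$, and your Eckart--Young step is the paper's inequality $\|B_k\|_F^2 \geq \sum_{i=1}^k\|Bv_i\|^2$. The only (cosmetic) difference is that you apply the one-sided bounds in the cleaner directions, whereas the paper writes $\|Ar_i\|^2 \leq (1+3\eps)\|Br_i\|^2$ where $\tfrac{1}{1-3\eps}$ is the constant that actually follows; both yield $\tfrac{1+3\eps}{1-3\eps}\leq 1+24\eps$.
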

\begin{proof}
Let $V = [v_1, v_2, \ldots, v_d]$ be the right singular vectors of $A$, so that $\|A_k\|_F^2 = \sum_{i=1}^k \|A v_i\|^2$ and $\|A - A_k\|_F^2 = \sum_{i=k+1}^d \|A v_i\|^2$.  
It follows from $\|B_k\|_F^2 \geq \sum_{i=1}^k \|B v_i\|^2$ and Lemma \ref{lem:unsq} that 
\[
\|B-B_k\|_F^2 
\leq 
\sum_{i=k+1}^d \|B v_i\|^2 
\leq 
\sum_{i=k+1}^d \frac{1}{1-3\eps} \|A v_i\|^2
\leq
\frac{1}{1-3 \eps} \|A-A_k\|_F^2.
\]
Let $R = [r_1, r_2, \ldots, r_d]$ be the right singular vectors of $B$.  
Then by matrix Pythagorean and Lemma \ref{lem:unsq}
\begin{align*}
\|A - \pi_{B_k}(A)\|^2 
&= 
\sum_{i=k+1}^d \|A r_i\|^2 
\leq 
\sum_{i=k+1}^d (1+3\eps) \|B r_i\|^2
= 
(1+3\eps) \|B-B_k\|_F^2
\\ & \leq
\frac{1+3 \eps}{1-3\eps} \|A-A_k\|_F^2
\leq
(1+24 \eps) \|A - A_k\|_F^2. \qedhere
\end{align*}
\end{proof}

\end{document}